\setlist{  
  listparindent=\parindent,
  parsep=0pt,
}
\def\submitteddate{November 13, 2023} 
\def\reviseddate{April 9, 2024}
\renewcommand{\baselinestretch}{1}
\newcommand{\NOP}[1]{}
\newtheorem{theorem}              {Theorem}     [section]
\newtheorem{lemma}      [theorem] {Lemma}
\newtheorem{corollary}  [theorem] {Corollary}
\newtheorem{example}    [theorem] {Example}
\theoremstyle{definition}         
\newtheorem{definition} [theorem] {Definition}
\DeclarePairedDelimiter\ceil{\lceil}{\rceil}
\let\emptyset\varnothing
\newcommand{\Hide}[1]{}
\newcommand{\R}{\mathbb{R}}
\DeclareFontFamily{U}{matha}{\hyphenchar\font45}
\DeclareFontShape{U}{matha}{m}{n}{
      <5> <6> <7> <8> <9> <10> gen * matha
      <10.95> matha10 <12> <14.4> <17.28> <20.74> <24.88> matha12
      }{}
\DeclareSymbolFont{matha}{U}{matha}{m}{n}
\DeclareMathSymbol{\notdivides}{3}{matha}{"1F}
\DeclareMathSymbol{\divides}{3}{matha}{"17}
\tikzset{
  iv/.style={
    draw,
    fill=orange!50,
    rectangle,
    minimum size=20pt,
    inner sep=0pt,
    text=black},
  ev/.style={
    draw,
    fill=green,
    rectangle,
    minimum size=20pt,
    inner sep=0pt,
    text=black}}
\begin{document}

\newcommand{\creationtime}{\today\ @ \currenttime}

\pagestyle{fancy}
\renewcommand{\headrulewidth}{0cm}
\chead{\footnotesize{Congero-Zeger}}
\rhead{\footnotesize{\reviseddate}}
\lhead{\footnotesize{\textit{Competitive advantage}}}
\cfoot{Page \arabic{page} of \pageref{LastPage}} 
\lfoot{\footnotesize{
Sections: 
\ref{sec:introduction}
\ref{sec:not-optimal}
\ref{sec:Lemmas}
\ref{sec:Huffman-over-SF}
\ref{sec:One-third-bounds}
\ref{sec:Shannon-Fano}
\ref{sec:small-codes}
\ref{sec:experimental-evidence}
\ref{sec:discussion}
\ref{sec:appendixA}
\nameref{references}
}}

\makeatletter\renewcommand{\@seccntformat}[1]{\noindent {\csname the#1\endcsname}.\hspace{0.5em}}\makeatother

\renewcommand{\qedsymbol}{$\blacksquare$} 

\newcommand{\Advantage}{\Delta}
\newcommand{\Alphabet}{S}

\setcounter{page}{1}

\title{Competitive Advantage of Huffman\\ and Shannon-Fano Codes
\thanks{
   \indent \textbf{S. Congero and K. Zeger} are with the 
  Department of Electrical and Computer Engineering, 
  University of California, San Diego, 
  La Jolla, CA 92093-0407 
  (scongero@ucsd.edu and ken@zeger.us).
}}

\author{Spencer Congero and Kenneth Zeger\\}
\date{
  \textit{
  IEEE Transactions on Information Theory\\
  Submitted: \submitteddate\\
  Revised: \reviseddate\\
  }
}

\maketitle
\begin{abstract}
For any finite discrete source,
the competitive advantage of prefix code $C_1$ over prefix code $C_2$
is the probability $C_1$ produces a shorter codeword
than $C_2$,
minus the probability $C_2$ produces 
a shorter codeword than $C_1$.
For any source,
a prefix code is competitively optimal if it has
a nonnegative competitive advantage over all other prefix codes.
In 1991, Cover proved that
Huffman codes are competitively optimal 
for all dyadic sources,
namely sources whose symbol probabilities are negative integer powers of $2$.
We prove the following asymptotic converse:
As the source size grows,
the probability a Huffman code for a randomly chosen non-dyadic source
is competitively optimal converges to zero.
We also prove:
(i) For any non-dyadic source,
a Huffman code has
a positive competitive advantage over a Shannon-Fano code;
(ii) For any source, the competitive advantage 
of any prefix code over a Huffman code is strictly less than $\frac{1}{3}$;
(iii)
For each integer $n>3$,
there exists a source of size $n$ 
and some prefix code whose competitive advantage over a Huffman code is
arbitrarily close to $\frac{1}{3}$; and
(iv) For each positive integer $n$,
there exists a source of size $n$ 
and some prefix code whose competitive advantage over a Shannon-Fano code becomes
arbitrarily close to $1$ as $n\to\infty$.

\end{abstract}

\clearpage

[ \small{\textit{The Table of Contents below is provided for the reader's
 convenience and will be removed later}.} ]
\tableofcontents

\clearpage

\section{Introduction}
\label{sec:introduction}

In a probabilistic game where multiple players are each rated by a numerical score,
one way to designate a particular player $A$ as being 
superior among their fellow competitors
is by ``expected score optimality'',
where no other player $B$ can obtain a better score on average than the score of $A$.
A second way,
called ``competitive optimality'',
occurs if for every other player $B$,
the probability of $A$ scoring better than $B$ 
is at least
the probability of $B$ scoring better than $A$.
That is,
$A$ has a nonnegative ``competitive advantage'' 
\begin{align}
P(A\ \text{scores better than}\ B) - P(B\ \text{scores better than}\ A)
\end{align}
over all other players $B$
(where tie scores are ignored).
In this paper we obtain results about competitive advantage and optimality
when lossless source coding is viewed as a game,
source codes are the players,
the numerical score is the length of the codeword 
of a randomly chosen source symbol,
and shorter codeword lengths are better.

The most direct motivation for our study is its immediate connection
to the results of 
Cover~\cite{Cover-1991}
on the competitive optimality of Huffman codes for dyadic sources,
and follow-up papers by
Feder~\cite{Feder-1992}
and
Yamamoto and Itoh~\cite{Yamamoto-Itoh-1995},
all of which appeared in these
\textit{Transactions}.
Cover showed that Huffman codes were competitively optimal for dyadic sources
but was unable to extend his results to non-dyadic sources,
and instead allowed a (suboptimal) one-bit ``handicap'' 
in such cases in~\cite{Cover-1991},
as did Feder in~\cite{Feder-1992}.
Our analysis explains why they were were unable to extend the results
beyond dyadic sources.
Also, this material has become standard subject matter in the 
widely-used 
Cover-Thomas textbook~\cite{Cover-Thomas-book-2006}.

Our results directly connect to
the well-studied ``Game of Twenty Questions'',
which has become a standard example for most information theory classes,
also appearing in~\cite{Cover-Thomas-book-2006}.
The game consists of a player trying
to deduce which value of a discrete source was selected.
When viewed as a competition between two players,
one player ``wins'' (respectively, ``loses'') 
if that player asks fewer (respectively, more) questions that the opponent,
and otherwise ``ties''.
One analysis of the game seeks a strategy 
(analogous to a prefix code)
for a player that minimizes the average number
of questions needed to 
deduce which value of a discrete source was selected.
It has been shown that 
by representing question-asking strategies as prefix codes,
Huffman codes are optimal in this sense.
Using the same prefix code representation,
the results of our paper instead apply
to the question of
which of two competing strategies is more likely to win.

As another example related to the competitive analysis approach,
optimal stock portfolio selection 
has been studied in the field of financial investment,
both from the perspective of maximizing one's expected portfolio valuation
and alternatively from a competitive viewpoint
where one portfolio has a higher probability of being worth more
than a competing portfolio~\cite{Bell-Cover-1980}.

We also note an analogy from the game of golf,
where one scoring method called ``stroke play'' (or ``medal play'') is similar
to the average length of a variable length code,
whereas an alternative scoring method called ``match play''
is analogous to competitive advantage~\cite{Banks-golf-book}.

We next formalize some terminology and definitions and then explain
the history of the problem and our results.
Proof of many of the lemmas appear in the Appendix.


\subsection{Terminology}
\label{sec:terminology}

An \textit{alphabet} is
a finite set $\Alphabet$,
and a \textit{source} of size $n$ with alphabet $\Alphabet$ 
is a random variable $X$ 
taking on values in $\Alphabet$,
where $|\Alphabet|=n$ and 
we denote the probability that $X=y$ by
$P(y)$ for all $y\in\Alphabet$.
We also denote the probability of any subset $B\subseteq \Alphabet$  by
$P(B) = \displaystyle\sum_{y \in B} P(y)$.
A source is said to be \textit{dyadic} if 
$P(y)$ is a nonnegative integer power of $1/2$
for all $y\in\Alphabet$.

A \textit{code} 
for a given source $X$ is a mapping 
$C:\Alphabet \to \{0,1\}^*$
and for each $u\in S$, the binary string $C(u)$ 
is called a \textit{codeword} of $C$.
A \textit{prefix code} is a code
where no codeword is a prefix of any other codeword.

A \textit{code tree} for a prefix code $C$ is a rooted binary tree 
whose leaves correspond to the codewords of $C$;
specifically, the codeword associated with each leaf
is the binary word denoting the path from the root to the leaf.
The \textit{length} of a code tree node is its path length from the root.
The $r$th \textit{row} of a code tree is the set of nodes whose length is $r$,
and we will view a code tree's root as being on the top of the tree
with the tree growing downward.
For example, row $r$ of a code tree is ``higher'' in the tree than row $r+1$.
If $x$ and $y$ are nodes in a code tree,
then $x$ is a \textit{descendant} of $y$
if there is a downward 
path of length zero or more from $y$ to $x$.
Two nodes in a tree are called \textit{siblings} if they have the same parent.
In a code tree,
for any collection $A$ of nodes 
define $P(A)$ to be the probability
of the set of all leaf descendants of $A$ in the tree.

A (binary) \textit{Huffman tree} is a code tree constructed from a source
by recursively combining two smallest-probability nodes until only
one node with probability $1$ remains.
The initial source probabilities correspond to leaf nodes in the tree.
A \textit{Huffman code} for a given source is a mapping of source symbols to
binary words by assigning the source symbol corresponding to each leaf
in the Huffman tree to the binary word describing the path from the root
to that leaf.
A \textit{Shannon-Fano code} is a prefix code
such that for each $y\in\Alphabet$ 
the codeword associated with the source symbol $y$
has length 
$\lceil \log_2 \frac{1}{P(y)}\rceil$.

Given a source with alphabet $\Alphabet$ and a prefix code $C$,
for each $y\in \Alphabet$
the length of the binary codeword $C(y)$ is denoted $l_C(y)$.
Two codes $C_1$ and $C_2$ are \textit{length equivalent} if
$l_{C_1}(y) = l_{C_2}(y)$ for every source symbol $y\in\Alphabet$.
The \textit{average length} of a code $C$ for a source with alphabet $\Alphabet$ is
$\displaystyle\sum_{y\in\Alphabet} l_C(y) P(y)$.
A prefix code is \textit{expected length optimal} for a given source if no other prefix 
code achieves a smaller average codeword length for the source.

A prefix code is \textit{complete} if every non-root node in its code tree has a sibling,
or, equivalently, if every node has either zero or two children.
A code $C$ for a given source
is \textit{monotone} if
for any two nodes in the code tree of $C$,
we have $P(u)\ge P(v)$
whenever $l_C(u) < l_C(v)$.
Expected length optimal codes are always monotone
(see Lemma~\ref{lem:minlen-implies-monotone}).

\tikzset{every label/.style={xshift=0ex, text width=6ex, align=center, yshift=-6ex,
                             inner sep=1pt, font=\footnotesize, text=red}}
\begin{figure}[h]
\begin{center}
\scalebox{0.7}{
\begin{forest}
for tree={where n children={0}{ev}{iv},l+=8mm,
if n=1{edge label={node [midway, left] {0} } }{edge label={node [midway, right] {1} } },}
[$1$, label=a, baseline 
    [$\frac{5}{8}$, label=b] 
    [$\frac{3}{8}$, label=c  [$\frac{1}{4}$, label=d] 
                             [$\frac{1}{8}$, label=e] ]]
\end{forest}
}
\end{center}
\caption{A code tree for a prefix code of a source of size $3$.}
\label{fig:codetree}
\end{figure}
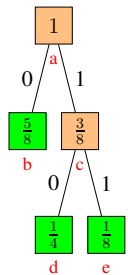

Figure~\ref{fig:codetree} illustrates some of the terminology.
The figure shows a code tree for a source of size $3$ with probabilities 
$\frac{5}{8}, \frac{1}{4}, \frac{1}{8}$
and corresponding codewords $0$, $10$, $11$.
The tree has root $a$,
the leaves are nodes $b,d,e$,
and the internal nodes are $a,c$.
The tree is complete since the non-root nodes come in sibling pairs,
namely $d,e$ and $b,c$.
This is a Huffman tree, since the two smallest nodes $d,e$ were first combined,
followed by the next two smallest nodes $b,c$ being combined.
The tree is monotone since
$1 = l(b) = l(c) < l(d)=l(e) = 2$ and $P(b), P(c) \ge P(d) , P(e)$.

The \textit{Kraft sum} of  a sequence of nonnegative integers
$l_1, \dots, l_k$ is 
$2^{-l_1} + \dots + 2^{-l_k}$.
We extend the definition of ``Kraft sum'' to also apply to sets of
source symbols or sets of nodes in a code tree as follows.
The Kraft sum $K(A)$ of any collection $A$ of leaves
is the Kraft sum of the corresponding sequence of codeword lengths.
The Kraft sum $K(A)$ of any collection $A$ of nodes
having no common leaf descendants
is the Kraft sum of 
the set of all leaf descendants of $A$ in the tree.
The Kraft sum of a collection of source symbols is the Kraft sum
of the corresponding leaves in a code tree.

For a subset $A$ of a source's alphabet
and for a Huffman code $H$ for the source,
we say the \textit{Huffman-Kraft sum}
of $A$ is
\begin{align}
K(A) = \sum_{x\in A} 2^{-l_H(x)},
\end{align}
where we reuse the ``$K$'' notation.
Note that the Huffman-Kraft sum of $A$ 
is the (usual) Kraft sum of the Huffman codeword
lengths of the symbols in $A$.

As previously mentioned,
one measure of the success of a source code that 
generally differs from expected length optimality is
called ``competitive optimality''
and has been considered in the form of 
one-on-one competitions between pairs of prefix codes to determine
in each competition
which code has the higher probability of producing a shorter codeword
for a given source.

For a given source with alphabet $\Alphabet$, and for any codes $C_1$ and $C_2$,
define 
\begin{align}
W &= \{ u \in \Alphabet : l_{C_1}(u) < l_{C_2}(u) \}\label{eq:7262b}\\
L &= \{ u \in \Alphabet : l_{C_1}(u) > l_{C_2}(u) \}\\
T &= \{ u \in \Alphabet : l_{C_1}(u) = l_{C_2}(u) \}. \label{eq:7262}
\end{align}
The sets $W$, $L$, and $T$
contain the source values of the 
\textit{wins}, \textit{losses}, and \textit{ties},
respectively, for code $C_1$
in a one-on-one competition against code $C_2$ to see which
produces shorter codewords for the given source.
Even though the notation for $W$, $L$, and $T$ does not explicitly
reference $C_1$ and $C_2$,
the codes involved will be clear from context.
Code $C_1$ is said to 
\textit{competitively dominate} 
code $C_2$ if $P(W) \ge P(L)$,
and
\textit{strictly competitively dominate}
code $C_2$ if $P(W) > P(L)$.
A prefix code is
\textit{competitively optimal} for a source of size $n$
if it competitively dominates all other
prefix codes for the same source.
The notion of competitive optimality dates back at least to 1980
in the field of financial investment~\cite{Bell-Cover-1980}.

\subsection{Prior results}
\label{sec:prior-results}

Huffman codes are known to be 
expected length optimal, monotone, and complete for every source 
(e.g., see~\cite{Cover-Thomas-book-2006}),
whereas
Shannon-Fano codes are monotone, but need not be expected length optimal nor complete.
Shannon-Fano codes are known to achieve the same average lengths
as Huffman codes whenever
the source is dyadic.
For non-dyadic sources,
Shannon-Fano codes always have larger average lengths than Huffman codes,
but nevertheless the average length of the Shannon-Fano code
(and thereby also the Huffman code) 
is less than
one bit larger than the source entropy~\cite{Cover-Thomas-book-2006}.
Also, monotonicity is known to be necessary for prefix code optimality,
as stated below.

\begin{lemma}[{e.g., \cite[p. 670]{Gallager-IT-1978}}]
For any source,
if a prefix code is expected length optimal,
then it is monotone.
\label{lem:minlen-implies-monotone}
\end{lemma}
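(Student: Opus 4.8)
The plan is a standard subtree-exchange argument. Suppose $C$ is a prefix code for the source that is expected length optimal, and suppose, toward a contradiction, that $C$ is not monotone, so there are nodes $u,v$ in the code tree of $C$ with $l_C(u) < l_C(v)$ but $P(u) < P(v)$. I will produce from $C$ another prefix code for the same source with strictly smaller average length, which is the desired contradiction.

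First I would rule out the case that $u$ and $v$ lie on a common root-to-leaf path. The node $v$ cannot be a weak ancestor of $u$, since that would give $l_C(v)\le l_C(u)$; and if $u$ were an ancestor of $v$, then the leaves descending from $v$ would form a subset of those descending from $u$, forcing $P(u)\ge P(v)$, contrary to assumption. Hence the subtrees rooted at $u$ and at $v$ are vertex-disjoint, and in particular neither $u$ nor $v$ is the root, so both have parents.

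Next I would build a new prefix code $C'$ for the same source by exchanging those two subtrees: detach the subtree rooted at $u$ and the subtree rooted at $v$, reattach the former in the position previously occupied by $v$ and the latter in the position previously occupied by $u$, and keep each source symbol attached to its leaf. The result is again a binary code tree, hence a prefix code for the source. Writing $\delta = l_C(v) - l_C(u)\ge 1$, every source symbol whose leaf lay below $u$ now has its codeword length increased by exactly $\delta$, every symbol whose leaf lay below $v$ has its length decreased by exactly $\delta$, and all other lengths are unchanged; the point to check is that the shift is the same amount $\delta$ for each such leaf, independent of its depth within the moved subtree. Hence the average length of $C'$ equals that of $C$ plus $\delta\bigl(P(u) - P(v)\bigr)$, which is strictly negative, contradicting the expected length optimality of $C$. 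This contradiction proves the lemma.

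The argument is elementary; the only steps requiring a little care are the case analysis showing the two subtrees are disjoint (so that the exchange is legitimate and yields a genuine prefix code on the same symbol set) and the verification that each affected codeword length changes by the common amount $\delta$. I do not expect either to be a real obstacle, so the proof should be short.
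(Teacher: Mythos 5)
Your proof is correct. The paper offers no proof of this lemma at all---it simply cites Gallager---and your subtree-exchange argument (checking that the two subtrees are disjoint, swapping them, and observing the average length changes by $\delta\bigl(P(u)-P(v)\bigr)<0$) is the standard argument for exactly this fact, valid for internal nodes as well as leaves as the paper's definition of monotonicity requires.
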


The well-known Kraft inequality and its converse are stated next.
\begin{lemma}[Kraft, e.g.,~{\cite[Theorem 5.2.1]{Cover-Thomas-book-2006}}] 
The codeword lengths $l_1, \dots, l_n$ of any prefix code satisfy
$2^{-l_1} + \dots + 2^{-l_n} \le 1$.
Conversely, 
if a sequence $l_1, \dots, l_n$ of positive integers satisfies
$2^{-l_1} + \dots + 2^{-l_n} \le 1$,
then there exists a binary prefix code whose codeword lengths are
$l_1, \dots, l_n$.
\label{lem:Kraft-Inequality}
\end{lemma}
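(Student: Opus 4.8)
The plan is to establish the two directions separately, using the code-tree picture already set up above. For the forward inequality, fix a prefix code with codeword lengths $l_1,\dots,l_n$ and put $m=\max_i l_i$. View the code tree as sitting inside the complete binary tree of depth $m$: the codeword of length $l_i$ is a node at length $l_i$ whose set of depth-$m$ descendents has size exactly $2^{m-l_i}$. Since no codeword is a prefix of another, no codeword-node is an ancestor of another, so these descendent sets are pairwise disjoint subsets of row $m$, which contains $2^m$ nodes. Summing sizes gives $\sum_{i=1}^n 2^{m-l_i}\le 2^m$, and dividing by $2^m$ yields $\sum_{i=1}^n 2^{-l_i}\le 1$.

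For the converse, relabel so that $l_1\le l_2\le\cdots\le l_n$ and build the codewords greedily inside the complete binary tree of depth $l_n$, processing the lengths in this nondecreasing order. Suppose the first $j$ codewords have been chosen as pairwise non-ancestor nodes at lengths $l_1,\dots,l_j$; I would maintain the invariant that the number of nodes in row $l_j$ that are not descendents of any chosen codeword equals $2^{l_j}\bigl(1-\sum_{i=1}^{j}2^{-l_i}\bigr)$. The hypothesis $\sum_i 2^{-l_i}\le 1$ keeps this quantity strictly positive before the last step, so there is always a free node at length $l_{j+1}$ (a descendent of some free node in row $l_j$) available to serve as codeword $j+1$; choosing it and discarding its subtree preserves the invariant. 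After $n$ steps one obtains a prefix code with the prescribed lengths. Equivalently, one may take codeword $j$ to be the first $l_j$ bits of the binary expansion of $\sum_{i<j}2^{-l_i}$ and observe that two such words cannot be prefixes of one another, since consecutive partial sums differ by at least $2^{-l_j}$.

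Neither direction is difficult, this being a classical fact; the only place deserving care is the bookkeeping in the converse, namely verifying that the running count of free nodes is exactly $2^{l_j}\bigl(1-\sum_{i\le j}2^{-l_i}\bigr)$ and hence stays nonnegative precisely because of the Kraft-sum hypothesis. That invariant is the heart of the construction, and once it is in place the prefix property follows immediately.
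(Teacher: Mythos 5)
Your proof is correct. The paper does not actually prove this lemma --- it is stated with a citation to Cover--Thomas --- and your two arguments (disjoint descendent sets in row $m$ for the forward inequality, and the greedy construction with the free-node invariant, equivalently the truncated-partial-sum codewords, for the converse) are exactly the standard textbook proofs to which the paper defers, so the approaches coincide.
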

The following lemma
expresses an equality condition for 
Lemma~\ref{lem:Kraft-Inequality}.
The proof follows easily from the proof of Theorem 5.1.1
in the Cover-Thomas textbook
\cite{Cover-Thomas-book-2006}. 
A more general result can be found in 
Theorem 2.5.19 of
the Berstel-Perrin-Reutenauer textbook
\cite{Berstel-Perrin-Reutenauer-book-2009}. 

\begin{lemma}
A prefix code is complete
if and only if
its Kraft sum equals $1$.
\label{lem:complete_KS_1}
\end{lemma}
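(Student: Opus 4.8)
The plan is to prove the two implications separately, using a simple node‑weight bookkeeping for the forward direction and the Kraft inequality (Lemma~\ref{lem:Kraft-Inequality}) for the converse.

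For the forward direction, suppose the prefix code $C$ is complete, so that in its code tree every node has either zero or two children. Assign to each node $v$ the weight $w(v)=2^{-l(v)}$, where $l(v)$ denotes the length (depth) of $v$; then the root has weight $1$, and the leaves of the tree have total weight equal to the Kraft sum $K(C)$. I would then show, by induction on the size of the subtree rooted at $v$, that $w(v)$ equals the Kraft sum of the leaves lying in that subtree: a leaf contributes only itself, while an internal node $v$ has (by completeness) exactly two children $a,b$, the leaves below $v$ are exactly those below $a$ together with those below $b$, and $w(a)+w(b)=2^{-(l(v)+1)}+2^{-(l(v)+1)}=2^{-l(v)}=w(v)$. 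Taking $v$ to be the root gives $K(C)=w(\text{root})=1$. (Equivalently, one may induct on the number $n$ of codewords: a deepest leaf has, by completeness, a sibling that is also a leaf, and deleting that pair of sibling leaves at depth $d$ yields a smaller complete code whose Kraft sum is unchanged, since $2\cdot 2^{-d}=2^{-(d-1)}$. This argument is essentially the equality case of the standard proof of Lemma~\ref{lem:Kraft-Inequality}.)

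For the converse I would argue by contraposition: assuming $C$ is not complete, I will show $K(C)<1$. If $C$ is not complete, then its code tree contains a non‑root node $u$ with no sibling, i.e.\ the parent of $u$ has $u$ as its only child. Let $l\ge 1$ be the length of $u$, and form a new code tree $T'$ by attaching a fresh leaf $u'$ in the (previously empty) sibling position of $u$. Every leaf of the original tree is still a leaf of $T'$ with the same length, and $T'$ has one additional codeword, of length $l$; hence $T'$ is a prefix code whose Kraft sum equals $K(C)+2^{-l}$. By Lemma~\ref{lem:Kraft-Inequality} applied to $T'$, this quantity is at most $1$, so $K(C)\le 1-2^{-l}<1$, as desired. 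Combining the two directions proves the lemma.

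I do not anticipate a serious obstacle. The only points needing care are the degenerate base cases (a single‑codeword code, whose code tree is just the root: complete, with Kraft sum $2^{0}=1$) and the verification that deleting a pair of sibling leaves — respectively inserting a single sibling leaf — again yields a valid code tree without turning any other node into a one‑child node; both checks are immediate from the ``zero or two children'' characterization of completeness.
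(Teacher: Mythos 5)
Your proof is correct. Note that the paper does not actually write out a proof of this lemma; it dismisses it to a footnote citing the proof of Theorem 5.1.1 in Cover--Thomas and Theorem 2.5.19 in Berstel--Perrin--Reutenauer, so there is no in-text argument to compare against. Your self-contained argument is sound: the forward direction by the subtree-weight identity $w(v)=\sum_{\text{leaves below }v}2^{-l(\cdot)}$ (or equivalently by repeatedly deleting a deepest sibling pair) is exactly the equality case of the standard Kraft-inequality proof that the footnote alludes to, and the converse by contraposition---locating a non-root node with no sibling, grafting a fresh leaf into the vacant sibling slot, and invoking Lemma~\ref{lem:Kraft-Inequality} on the enlarged code to get $K(C)\le 1-2^{-l}<1$---is clean and handles the base case correctly. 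One small point worth stating explicitly (you gesture at it in your final paragraph) is that the grafted leaf $u'$ at the vacant sibling position is neither a prefix of, nor prefixed by, any existing codeword, since the only child of $u$'s parent is $u$ itself and $u'$ diverges from $u$ at the last bit; that is what makes $T'$ a legitimate prefix code to which the Kraft inequality applies.
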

We have defined sources and prefix codes to be finite throughout this paper,
but we note that
a complete infinite prefix code need not have Kraft sum $1$
(e.g., see \cite{Linder-Tarokh-Zeger}). 

In 1991, Cover
proved that Shannon-Fano codes
are competitively optimal for dyadic sources. 
Since Huffman and Shannon-Fano codes are length equivalent for dyadic sources
(via ~\cite[Theorem 5.3.1]{Cover-Thomas-book-2006}), 
Huffman codes are also competitively optimal in this case,
as reworded below.
\begin{theorem}[{Cover~\cite[Theorem 2]{Cover-1991}}] 
Huffman codes are competitively optimal for all dyadic sources.
\label{thm:Cover-Huffman-competitively-optimal-for-dyadic-sources}
\end{theorem}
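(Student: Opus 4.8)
The plan is to deduce the theorem from the Kraft inequality (Lemma~\ref{lem:Kraft-Inequality}) together with the integrality of codeword lengths. Fix a dyadic source with alphabet $S$ and let $H$ be a Huffman code for it. The one structural fact I need about the dyadic case is that $P(y)=2^{-l_H(y)}$ for every $y\in S$; this is standard, but a self-contained argument goes as follows: write $P(y)=2^{-a_y}$ with $a_y$ a nonnegative integer, note $\sum_y 2^{-a_y}=\sum_y P(y)=1$, so by the Kraft converse there is a prefix code with lengths $a_y$, whose average length equals the entropy $\sum_y P(y)a_y$; since no prefix code beats the entropy and $H$ is expected-length optimal, $H$ must also meet the entropy, which forces $l_H(y)=a_y$. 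In particular $\sum_{y\in S}2^{-l_H(y)}=1$ (equivalently, $H$ is complete and Lemma~\ref{lem:complete_KS_1} applies).

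Now let $C$ be an arbitrary prefix code for the source, and let $W,L,T$ be the win, loss, and tie sets for $H$ against $C$ as in~\eqref{eq:7262}; the goal is $P(W)\ge P(L)$. By Lemma~\ref{lem:Kraft-Inequality}, $\sum_{y\in S}2^{-l_C(y)}\le 1=\sum_{y\in S}2^{-l_H(y)}=\sum_{y\in S}P(y)$, hence $\sum_{y\in S}\bigl(P(y)-2^{-l_C(y)}\bigr)\ge 0$. The summands with $y\in T$ are zero, so this rearranges to
\[
\sum_{y\in W}\bigl(P(y)-2^{-l_C(y)}\bigr)\ \ge\ \sum_{y\in L}\bigl(2^{-l_C(y)}-P(y)\bigr).
\]
On the left, every summand is at most $P(y)$ since $2^{-l_C(y)}>0$, so the left side is at most $P(W)$. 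On the right, $y\in L$ means $l_C(y)<l_H(y)$, so by integrality $l_C(y)\le l_H(y)-1$ and $2^{-l_C(y)}\ge 2\cdot 2^{-l_H(y)}=2P(y)$; thus every summand on the right is at least $P(y)$ and the right side is at least $P(L)$. Chaining the three inequalities gives $P(W)\ge P(L)$, so $H$ competitively dominates $C$. As $C$ was arbitrary, $H$ is competitively optimal, and since Huffman and Shannon-Fano codes are length equivalent on dyadic sources the same conclusion holds for the Shannon-Fano code.

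I do not expect a substantive obstacle here: the whole argument is the Kraft inequality plus the observation that a one-unit gap in integer codeword lengths doubles a Kraft term. The only point demanding care is the boundary behavior — the inequality chain must stay non-strict so that it remains correct when $C$ is length equivalent to $H$ (in which case $W=L=\emptyset$ and both sides are $0$). Establishing the dyadic identity $P(y)=2^{-l_H(y)}$ is the sole ``non-Kraft'' ingredient, and even that is immediate from the expected-length optimality of Huffman codes.
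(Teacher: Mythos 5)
Your proof is correct. The paper itself does not reprove Theorem~\ref{thm:Cover-Huffman-competitively-optimal-for-dyadic-sources} --- it cites Cover's 1991 result and notes that the Huffman version follows from the Shannon--Fano version via length equivalence on dyadic sources --- so there is no internal proof to compare against. Your argument is the standard one and, in essence, reproduces Cover's original reasoning: use $P(y)=2^{-l_H(y)}$ on a dyadic source, apply the Kraft inequality to the opposing code $C$, observe that the tie terms vanish, bound the ``win'' contributions above by $P(W)$, and use the integrality of codeword lengths to bound the ``loss'' contributions below by $P(L)$ (a one-unit gap in integer lengths doubles the Kraft term). The chain of non-strict inequalities correctly handles the boundary case $W=L=\emptyset$. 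The only small delicacy is your opening derivation of $l_H(y)=a_y$ via entropy: the step ``$H$ meets the entropy, which forces $l_H(y)=a_y$'' is the equality condition in Gibbs' inequality; it is standard but worth naming explicitly. (Alternatively, completeness of $H$ plus monotonicity gives the same identity more combinatorially, avoiding any appeal to entropy.) Your closing remark that the conclusion transfers to Shannon--Fano codes by length equivalence is exactly the route the paper itself takes, just in the reverse direction.
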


Cover's proof also showed that for all dyadic sources the 
Huffman code strictly competitively dominates all other 
(i.e., not length equivalent) prefix codes.
Additionally, he showed that for all non-dyadic sources
Shannon-Fano codes competitively dominate all other prefix codes
if an extra one-bit penalty 
is assessed to the non-Shannon-Fano code during each symbol encoding.
Specifically, this ``penalty''
favorably treats ties as if they were wins for the Shannon-Fano code
and treats one-bit losses for the Shannon-Fano code as if they were ties,
thus giving the Shannon-Fano code a one-bit handicap in betting parlance.
As dyadic sources are rare among all sources,
it is natural to ask whether 
Cover's (non-handicapped) competitive optimality result extends 
in some way to non-dyadic sources.

In 1992, Feder~\cite{Feder-1992}
showed that for all non-dyadic sources
Huffman codes competitively dominate all other prefix codes
if an extra one-bit penalty 
is assessed to the non-Huffman code during each symbol encoding.
This result is analogous to
(but uses a different proof technique from)
Cover's result for Shannon-Fano codes.

In 1995, Yamamoto and Itoh~\cite{Yamamoto-Itoh-1995} 
illustrated a non-dyadic source whose Huffman code 
was not competitively optimal.
They presented a prefix code for the source 
which strictly competitively dominates the Huffman code,
and whose win and loss probabilities are
$P(W)=0.5$ and $P(L)=0.4$, respectively.
Their example consists of a source of size $4$ and distinct prefix codes
$C_1$, $C_2$, and $C_3$,
such that 
$C_i$ competitively dominates $C_j$ whenever $(i,j) \in \{(1,2), (2,3), (3,1)\}$.
This example demonstrates that the relation of competitive dominance is not transitive.
One of these codes was the Huffman code,
so the Huffman code is not competitively optimal,
and no competitively optimal prefix code exists in this case.
They also provided a sufficient condition for a source not to have a
competitively optimal Huffman code.
Their results, however,
do not provide an indication of how many or few source codes
would have competitively optimal Huffman codes.

\begin{lemma}[Yamamoto and Itoh~{\cite[Theorem 3]{Yamamoto-Itoh-1995}}] 
For any source,
every competitively optimal prefix code is expected length optimal.
\label{lem:Yamamoto-Itoh}
\end{lemma}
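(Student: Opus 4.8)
The plan is to prove the contrapositive: if a prefix code $C$ is \emph{not} expected length optimal, I will produce another prefix code $C'$ that strictly competitively dominates it, so that $C$ is not competitively optimal. In each case $C'$ will be obtained from $C$ by one surgery on its code tree, and the win set $W$ and loss set $L$ of $C'$ against $C$ will be read off directly from which leaves moved up and which moved down. I would split the argument according to how $C$ fails optimality: using Lemma~\ref{lem:minlen-implies-monotone} and Lemma~\ref{lem:Huffman-codeword-lengths}, non-optimality of $C$ means that $C$ is either not monotone, or not complete, or monotone and complete but not length equivalent to any Huffman code.

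Suppose first that $C$ is not monotone. Then there are two \emph{nodes} $y,z$ of its code tree with $l_C(y)<l_C(z)$ but $P(y)<P(z)$. Since $P(y)<P(z)$, $z$ is not a descendant of $y$, and since $l_C(y)<l_C(z)$, $y$ is not a descendant of $z$, so the subtrees rooted at $y$ and at $z$ are disjoint; let $C'$ exchange these two subtrees. Every leaf under $z$ then rises by $l_C(z)-l_C(y)\ge 1$ and every leaf under $y$ falls by the same amount, with all other codeword lengths unchanged, so $W$ is exactly the symbols under $z$, $L$ is exactly the symbols under $y$, and the competitive advantage of $C'$ over $C$ is $P(z)-P(y)>0$. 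It is essential here that monotonicity is imposed on all tree nodes, not just leaves: this step already handles many ``balanced'' codes whose leaves carry no length inversion.

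Now assume $C$ is monotone. If $C$ is not complete, then by Lemma~\ref{lem:complete_KS_1} its Kraft sum $K$ satisfies $K<1$; since $2^{l_{\max}}K$ is a nonnegative integer strictly below $2^{l_{\max}}$ (where $l_{\max}$ is the maximum codeword length of $C$), in fact $K\le 1-2^{-l_{\max}}$, so shortening one longest codeword, say that of a symbol $v$, to length $l_{\max}-1$ leaves the Kraft sum at most $1$; by the converse Kraft inequality in Lemma~\ref{lem:Kraft-Inequality} there is a prefix code $C'$ with exactly these lengths, and it has the single win $v$ and no loss, so its advantage over $C$ is $P(v)>0$. It remains to treat $C$ monotone and complete but, by Lemma~\ref{lem:Huffman-codeword-lengths}, not length equivalent to any Huffman code; here I would induct on the alphabet size $n$ (the claim being vacuous for $n\le 2$). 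Order the symbols with $P(1)\ge\cdots\ge P(n)$; monotonicity forces the least probable symbol to a leaf of depth $l_{\max}$ whose sibling is also a leaf of depth $l_{\max}$. If the deepest row can be relabeled, preserving monotonicity, so that symbols $n-1$ and $n$ become siblings, contract that pair into one leaf of probability $P(n-1)+P(n)$; the contracted code is then still complete and monotone, and — because a Huffman tree is built by first merging the two least probable symbols — it is still not length equivalent to any Huffman code for the contracted source, so by the induction hypothesis it is strictly dominated by some code, and re-expanding the contracted leaf into a sibling pair lifts that code to one strictly dominating $C$ (the contracted symbol's win/loss contribution simply splits between symbols $n-1$ and $n$). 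If instead no monotone relabeling makes $n-1$ and $n$ siblings, the monotonicity inequalities of $C$ are tight between some shallower node and some deeper node, and I would restructure the smallest subtree witnessing this toward its Huffman-optimal arrangement; the bookkeeping point is that a modification changing every codeword length by at most one bit converts the resulting decrease in average codeword length into an equal, hence positive, competitive advantage.

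The main obstacle is this last case. The first two cases are short, but in the monotone-and-complete case one must reconcile ``the two least probable symbols are siblings'' with the all-nodes monotonicity condition: there exist complete monotone codes, not optimal, for which no monotone relabeling makes the two least probable symbols siblings (and for which every available sibling-pair contraction yields an \emph{optimal} code on the smaller alphabet, so the recursion by itself gives nothing). Consequently the induction must branch on whether the deepest sibling structure is ``Huffman-like,'' and the non-Huffman-like branch needs the careful subtree surgery, together with the device that a one-bit modification turns an average-length decrease directly into a positive competitive advantage $\Advantage$.
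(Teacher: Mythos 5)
First, note that the paper does not prove this lemma at all --- it is imported verbatim from Yamamoto--Itoh \cite[Theorem 3]{Yamamoto-Itoh-1995} --- so you are attempting something strictly more ambitious than the paper, namely a self-contained proof. Your first two cases are correct and complete: the subtree swap for a non-monotone code (the advantage is exactly $P(z)-P(y)>0$, and the disjointness of the two subtrees is correctly justified), and the one-bit shortening of a longest codeword for an incomplete code (the divisibility argument $K\le 1-2^{-l_{\max}}$ is right).

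The genuine gap is your third case, and you essentially concede it yourself. For a complete, monotone code that is not length equivalent to any Huffman code, the contraction induction as described does not close: you correctly observe that no monotone relabeling need make the two least probable symbols siblings (e.g., the balanced code $2,2,2,2$ for probabilities $0.4,0.3,0.2,0.1$ paired as $\{0.4,0.2\},\{0.3,0.1\}$), and the fallback --- ``restructure the smallest subtree witnessing this toward its Huffman-optimal arrangement'' together with the one-bit-modification device --- is never specified, nor is it shown that a one-bit modification with strictly smaller average length always exists. (Your bookkeeping identity is correct: when every length changes by at most one bit, $\Advantage$ equals the decrease in average length; what is missing is the existence of such a modification in general.) The clean way to close this case with the paper's own toolbox is via Lemma~\ref{lem:strongly-monotone}: a complete, monotone, non-optimal code $C$ fails \emph{strong} monotonicity, so there exist $A,B\subseteq\Alphabet$ with $K_C(A)=2^{-i}>2^{-j}=K_C(B)$ but $P(A)<P(B)$; passing to $U=B-A$ and $V=A-B$ gives disjoint sets with $K_C(U)<K_C(V)$ and $P(U)>P(V)$, and the construction in the proof of Lemma~\ref{lem:KraftExistPrefixCode} (which uses only that the base code has Kraft sum $1$, not that it is Huffman) then produces a prefix code that wins exactly on $U$ and loses exactly on $V$, hence has advantage $P(U)-P(V)>0$ over $C$. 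With that substitution your case decomposition becomes a complete proof; as written, the third case is not proved.
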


Note that the premise in 
Lemma~\ref{lem:Yamamoto-Itoh} 
might be vacuous
if no competitively optimal prefix code exists for a particular source,
in which case no conclusion can be drawn.

Although Huffman codes are always expected length optimal for any given source,
expected length optimal prefix codes need not be Huffman codes,
but they must be length equivalent to a Huffman code,
as stated in the next lemma.
\begin{lemma}[Manickman~{\cite{Manickman-2019}}]
For any source,
any expected length optimal prefix code is length equivalent to some Huffman code.
\label{lem:Manickman}
\end{lemma}

The following corollary follows immediately from
Lemma~\ref{lem:Yamamoto-Itoh} and
Lemma~\ref{lem:Manickman}.

\begin{corollary}
For any source,
a prefix code is
competitively optimal if and only if it is length equivalent to 
some competitively optimal Huffman code.
\label{cor:Yamamoto-Itoh}
\end{corollary}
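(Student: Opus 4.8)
The plan is simply to compose the two results cited just before the statement. Let $C$ be a competitively optimal code for a given source, so by definition $C$ competitively dominates every other prefix code for that source. The first step is to invoke Lemma~\ref{lem:Yamamoto-Itoh}: a competitively optimal code is expected length optimal, i.e., no prefix code attains a strictly smaller average codeword length for the source.

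The second step is to feed this into Lemma~\ref{lem:Huffman-codeword-lengths} (equivalently, into the stronger Lemma~\ref{lem:strongly-monotone} of Section~\ref{sec:Lemmas}), which states that every expected length optimal prefix code is length equivalent to some Huffman code for the source. Chaining the two implications yields that $C$ is length equivalent to a Huffman code, which is exactly the assertion of the corollary, so nothing further is required.

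I do not expect a genuine obstacle in assembling the corollary itself; all of the mathematical content is inherited from the two lemmas. The only point worth emphasizing is that the conclusion ``length equivalent to some Huffman code'' cannot be strengthened to ``is a Huffman code'': Example~\ref{ex:4-codes} already exhibits an expected length optimal prefix code ($C_2$) that is not a Huffman code because two minimum-probability leaves were never merged. Consequently the substantive work underlying this corollary lives entirely in Lemma~\ref{lem:strongly-monotone}/Lemma~\ref{lem:Huffman-codeword-lengths} (whose monotonicity and sibling-structure arguments for optimal code trees are carried out in~\cite{CoZe-strongly-monotone-ArXiv}, building on Lemma~\ref{lem:minlen-implies-monotone}); here one merely quotes it and applies it after Lemma~\ref{lem:Yamamoto-Itoh}.
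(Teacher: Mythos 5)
Your proposal is correct and matches the paper's own derivation exactly: the paper states that the corollary ``follows immediately from Lemma~\ref{lem:Yamamoto-Itoh} and Lemma~\ref{lem:strongly-monotone},'' which is precisely the two-step composition you give (with Lemma~\ref{lem:Huffman-codeword-lengths} being the special case of Lemma~\ref{lem:strongly-monotone} actually needed). Nothing to add.
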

One implication of the corollary is that
for any source,
if no Huffman code is competitively optimal,
then no prefix code is competitively optimal.

Two other works using competitive optimality include
Yamamoto and Yokoo~\cite{Yamamoto-Yokoo-1995},
who in 2001 studied competitive optimality for 
almost instantaneous variable-to-fixed length codes,
Bhatnagar~\cite{Bhatnagar-2021},
who in 2021 studied the use of competitive optimality for analyzing error 
probabilities in artificial intelligence systems.

The Cover-Thomas textbook~\cite[p. 131]{Cover-Thomas-book-2006} 
gives the following gambling interpretation of competitive optimality
and acknowledges the difficulty of 
mathematically analyzing Huffman codes:
\begin{quotation}
``To formalize the question of competitive optimality, 
consider the following
two-person zero-sum game: 
Two people are given a probability
distribution and are asked to design an 
instantaneous code for the distribution.
Then a source symbol is drawn from this distribution, and the
payoff to player A 
is $1$ or $-1$, 
depending on whether the codeword of
player A is shorter or longer than the codeword of player B. The payoff
is 0 for ties.

Dealing with Huffman code lengths is difficult, since there is no explicit
expression for the codeword lengths.''
\end{quotation}

To determine whether a prefix code is competitively optimal,
one must determine if the difference between the probabilities of
winning and losing a competition for shortest codeword length
is nonnegative for every possible opponent code.
When a prefix code $C_1$ is determined not to be competitively optimal,
at least one other code $C_2$ has a larger probability of winning
against $C_1$ than $C_1$ has against $C_2$.
It has not previously been known how large or small 
the difference between those probabilities can be
when, say, $C_1$ is a Huffman or Shannon-Fano code.
We next define terminology for that probability difference
and then in later sections derive an upper bound for it,
and describe the tightness of the bound.

\subsection{Competitive advantage}
\label{sec:Competitive-Advantage}

For a given source,
the \textit{competitive advantage} of  code $C_1$ over  code $C_2$
is the quantity
\begin{align}
\Advantage 
&= P(W) - P(L) .
\end{align}
Thus, 
$|\Advantage| \le 1$, and
$C_1$ competitively dominates $C_2$ if and only if $\Advantage\ge 0$.
Whereas competitive optimality indicates whether one code always dominates
other codes,
competitive advantage quantifies by how much one code
dominates over another code.
If the codes $C_1$ and $C_2$ are complete,
then their Kraft sums each equal $1$
by Lemma~\ref{lem:complete_KS_1},
and therefore neither code can always produce a shorter codeword
than the other code for every symbol.
In this case, $|\Advantage| < 1$.

If a source is dyadic, 
then there can be only one Huffman code
(up to length equivalence)
and this code is length equivalent to a
Shannon-Fano code.
However, 
for non-dyadic sources,
somewhat unusual circumstances can arise.
In what follows,
Example~\ref{ex:2-codes} illustrates a source where one Huffman code is
competitively optimal but another Huffman code for the same source is not,
and Example~\ref{ex:4-codes} illustrates a source with
two Huffman codes and a non-Huffman code 
that form a cycle of strict competitive domination,
as well as another non-Huffman code that is expected length optimal.

\begin{example}[Two Huffman codes]\ \\
A source with symbols $a,b,c,d$ and corresponding probabilities
$\frac{1}{3}, \frac{1}{3}, \frac{1}{6}, \frac{1}{6}$
has a Huffman code $H_1$ with codeword lengths
$2,2,2,2$,
and another Huffman code $H_2$ with lengths
$1,2,3,3$.
Each Huffman code has competitive advantage zero over the other.
However,
one can verify that $H_1$ is competitively optimal, whereas $H_2$ is not.
\label{ex:2-codes}
\end{example}

\begin{example}[Two Huffman codes and two other codes]\ \\
A source with symbols $a,b,c,d,e,f$ and corresponding probabilities
$\frac{1}{3}, \frac{1}{3}, \frac{1}{9}, \frac{1}{9}, \frac{1}{18}, \frac{1}{18}$
has a Huffman code $H_1$ with codeword lengths
$1,2,3,4,5,5$
and another Huffman code $H_2$ with lengths
$2,2,3,3,3,3$
(see Figure~\ref{fig:Four-codes}),
each with average codeword length equal to $\frac{7}{3}$.
Two other trees for codes $C_1$ and $C_2$ are shown,
which are relabeled versions of the trees for $H_2$ and $H_1$, respectively.
$H_1$ strictly competitively dominates $H_2$
since its competitive advantage is
$P(a) - P(d,e,f) = \frac{1}{9}$,     
while $C_1$ strictly competitively dominates $H_1$
since its competitive advantage is
$P(b,c) - P(a) = \frac{1}{9}$,  
and $H_2$ strictly competitively dominates $C_1$
since its competitive advantage is
$P(a,d,e,f) - P(b,c)  = \frac{1}{9}$. 
That is, $H_1$, $H_2$, and $C_1$ form a cycle,
which illustrates the non-transitivity of strict competitive dominance.
Also, $C_2$ is an expected length optimal code,
but it is non-Huffman since nodes $e$ and $f$ were not merged.


\tikzset{every label/.style={xshift=0ex, text width=6ex, align=center, yshift=-6ex,
                             inner sep=1pt, font=\footnotesize, text=red}}

\begin{figure}[h]
\hspace*{0.6cm} $H_1$ \hspace*{3.6cm} $H_2$ \hspace*{3.4cm} $C_1$ \hspace*{4.2cm} $C_2$ 
\begin{center}
\scalebox{0.7}{
\begin{forest}
for tree={where n children={0}{ev}{iv},l+=8mm,
if n=1{edge label={node [midway, left] {0} } }{edge label={node [midway, right] {1} } },}
[$1$, baseline 
    [$\frac{1}{3}$, label=a] 
    [$\frac{2}{3}$ [$\frac{1}{3}$, label=b] 
        [$\frac{1}{3}$ [$\frac{1}{9}$, label=c] 
           [$\frac{2}{9}$ [$\frac{1}{9}$, label=d] 
              [$\frac{1}{9}$ [$\frac{1}{18}$, label=e] 
                 [$\frac{1}{18}$, label=f]]]]]]]
\end{forest}
\hspace*{1.0cm}
\begin{forest}
for tree={where n children={0}{ev}{iv},l+=8mm,
if n=1{edge label={node [midway, left] {0} } }{edge label={node [midway, right] {1} } },}
[$1$, baseline
    [$\frac{2}{3}$ [$\frac{1}{3}$, label=a] [$\frac{1}{3}$, label=b]]
    [$\frac{1}{3}$ [$\frac{2}{9}$ [$\frac{1}{9}$, label=c][$\frac{1}{9}$, label=d]]
       [$\frac{1}{9}$ [$\frac{1}{18}$, label=e] [$\frac{1}{18}$, label=f]]]]]
\end{forest}
\hspace*{1.0cm}
\begin{forest}
for tree={where n children={0}{ev}{iv},l+=8mm,
if n=1{edge label={node [midway, left] {0} } }{edge label={node [midway, right] {1} } },}
[$1$, baseline
    [$\frac{1}{3}$, label=b] 
    [$\frac{2}{3}$ [$\frac{1}{9}$, label=c] 
        [$\frac{5}{9}$ [$\frac{1}{3}$, label=a] 
           [$\frac{4}{9}$ [$\frac{1}{9}$, label=d] 
              [$\frac{7}{18}$ [$\frac{1}{18}$, label=e] 
                 [$\frac{1}{18}$, label=f]]]]]]]
\end{forest}
\hspace*{1.0cm}
\begin{forest}
for tree={where n children={0}{ev}{iv},l+=8mm,
if n=1{edge label={node [midway, left] {0} } }{edge label={node [midway, right] {1} } },}
[$1$, baseline 
     [$\frac{2}{3}$ [$\frac{1}{3}$, label=a] [$\frac{1}{3}$, label=b]]
     [$\frac{1}{3}$ [$\frac{1}{6}$ [$\frac{1}{9}$, label=c][$\frac{1}{18}$, label=e]]
       [$\frac{1}{6}$ [$\frac{1}{9}$, label=d] [$\frac{1}{18}$, label=f]]]]]
\end{forest}
%
}
\end{center}
\caption{
Code trees of four prefix codes for a source of size $6$.
}
\label{fig:Four-codes}
\end{figure}
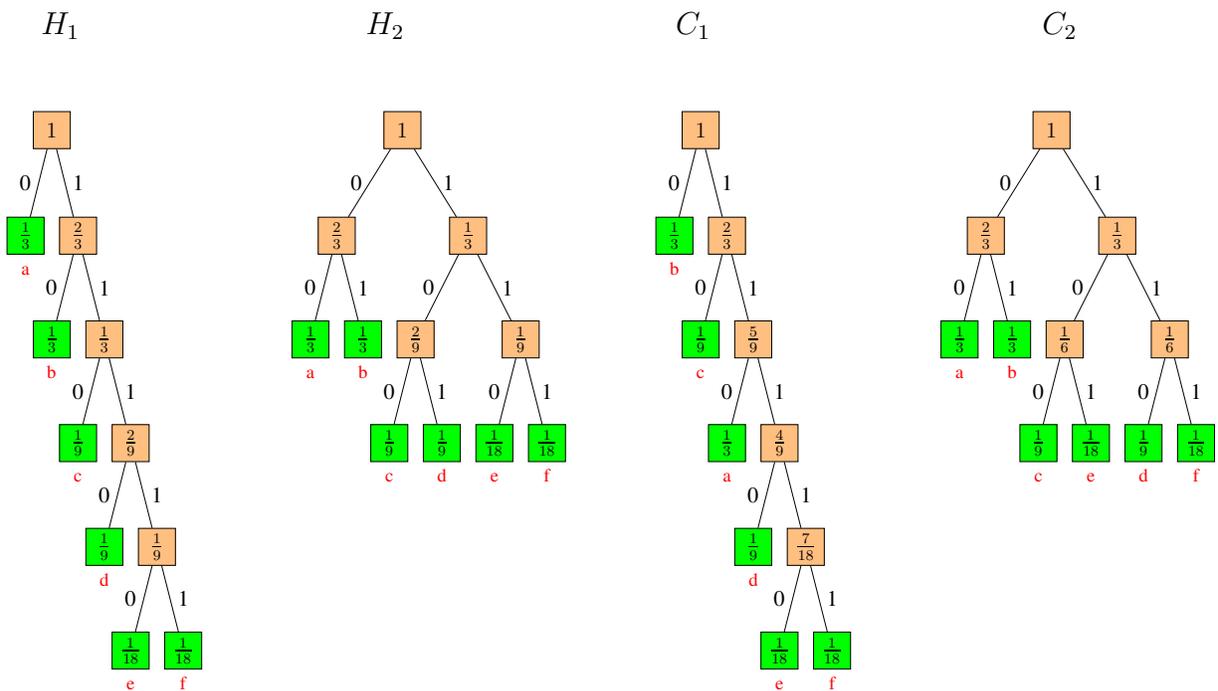

\label{ex:4-codes}
\end{example}

\subsection{Summary of contributions}
\label{sec:Summary-of-Contributions}

The following questions have not previously been answered in the literature:
(i) How likely or unlikely is it
for a Huffman code to be competitively optimal,
allowing for non-dyadic sources?;
(ii) If a Huffman code is not competitively optimal for a particular source, 
how large can the competitive advantage $\Advantage$ of another code be
over the Huffman code?;
(iii) Do Huffman codes always (perhaps, strictly) 
competitively dominate Shannon-Fano codes for
non-dyadic sources,
and how large can the competitive advantage $\Advantage$ of another code be
over the Shannon-Fano code?

One approach we exploit to answer the first question is to choose a source
``at random'' and seek the probability that a resulting
Huffman code is competitively optimal.
For the second and third questions,
one can seek the best possible upper bound on the
competitive advantage over a Huffman code
or over a Shannon-Fano code.

In this paper we address these questions by proving the following main results:
(1) The probability that a
Huffman code for a rather generally chosen random source 
is competitively optimal
converges to zero as the source size grows
(Theorem~\ref{thm:Huffman-not-competitively-optimal-in-limit}),
and therefore the probability that competitively optimal prefix codes exist for such
sources also converges to zero
(Corollary~\ref{cor:CompetitivelyOptimalCodesAreRare});
(2) For all non-dyadic sources,
Huffman codes strictly competitively dominate
Shannon-Fano codes
(Theorem~\ref{thm:Huffman-beats-SF});
(3) For all non-dyadic sources, the competitive advantage $\Advantage$
of any code over a Huffman code is strictly less than $\frac{1}{3}$
(Theorem~\ref{thm:ub-one-third-Huffman});
(4) For each integer $n>3$,
there exists a non-dyadic source of size $n$ 
and some prefix code whose competitive advantage $\Advantage$
over a Huffman code is
arbitrarily close to $\frac{1}{3}$
(Theorem~\ref{thm:lb-one-third-Huffman});
(5) For each positive integer $n$,
there exists a non-dyadic source of size $n$ 
and a prefix code for the source
such that the competitive advantage $\Advantage$ 
of the code over a Shannon-Fano code for the source
becomes arbitrarily close to $1$ as $n\to\infty$,
and the average length of the code becomes arbitrarily close to one bit less than
the average length of the Shannon-Fano code as $n\to\infty$
(Theorem~\ref{thm:lb-one-Shannon-Fano}).

We also analyze ``small'' sources and show that
for all sources of size at most $3$,
Huffman codes are competitively optimal
(Theorem~\ref{thm:competitively-optimal-Huffman:n=3}),
and
sources of size $4$
for which Huffman codes are competitively optimal
can be characterized in terms of a certain convex polyhedral condition
(Theorem~\ref{thm:competitively-optimal-Huffman:n=4}).

Finally
we conducted computer simulations that drew a million random sources 
from a flat Dirichlet distribution for each source size 
up to $34$ source symbols
and determined whether the resulting Huffman code 
satisfies a sufficient condition
for competitively non-optimality.
Our numerical observations 
are given in Section~\ref{sec:experimental-evidence}
and indicate that the convergence proven
in Theorem~\ref{thm:Huffman-not-competitively-optimal-in-limit}
is very rapid for relatively small source sizes
(Figure~\ref{fig:1}).

\clearpage


\section{Asymptotic converse to Cover's theorem on competitive 
         optimality of Huffman codes}
\label{sec:not-optimal}

In this section, our goal is to analyze how likely it is that a source
will have a competitively optimal Huffman code
(and more generally any competitively optimal prefix code).
To address this question,
we need to make precise what ``how likely'' means in this context.
There are many possible ways to choose a source ``at random''.
By some means we wish to randomly obtain numbers
$p_1, \dots, p_n \in (0,1)$ whose sum equals $1$,
interpret them as probabilities,
and then determine whether a Huffman code constructed
from these probabilities is competitively optimal.

One commonly used  way to randomly obtain such probabilities is to sample
from a ``flat Dirichlet distribution''.
A \textit{flat Dirichlet distribution} of size $n$
has a uniform probability density on the $(n-1)$-dimensional simplex 
$\{ (x_1, \dots, x_n) \in [0,1]^n : x_1 + \dots + x_n = 1 \}$
embedded in $\R^n$
(e.g., see~\cite{Dirichlet-book}).
For example, when $n=3$, the point $(p_1, p_2, p_3)$ is chosen uniformly from
the ($2$-dimensional) triangle embedded in $\R^3$,
whose vertices are
$(0,0,1)$, $(0,1,0)$, and $(1,0,0)$.
Choosing a random source from a flat Dirichlet distribution
tends to be a natural approach since it treats all
coordinates equally and uniformly.
This method of random source selection was used,
for example,
to analyze average Huffman code rate redundancy in
\cite{Khosravifard-Saidi-Esmaeili-Gulliver-2007}
and
\cite{Rastegari-Khosravifard-Narimani-Gulliver-2014}.

Another way to randomly create a source of size $n$
is to choose $n$ positive i.i.d. samples $X_1, \dots, X_n$
according to some probability distribution,
form their sum $S_n = X_1 + \dots + X_n$,
and then construct the normalized sequence
$\frac{X_1}{S_n}, \dots, \frac{X_n}{S_n}$.
This technique specializes to a
flat Dirichlet distribution when
$X_1, \dots, X_n$ are i.i.d. exponentials with mean one,
as given in the following lemma.
\begin{lemma}[e.g., {\cite[Chapter 11, Theorem 4.1]{Devroye-book}}] 
If $X_1, \dots, X_n$ are i.i.d. exponential random variables with mean one
and $S_n = X_1 + \dots + X_n$,
then the joint distribution of $\frac{X_1}{S_n}, \dots, \frac{X_n}{S_n}$
is the same as that of a
flat Dirichlet distribution of size $n$. 
\label{lem:Flat-Dirichlet=iid-exponential}
\end{lemma}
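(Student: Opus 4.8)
The plan is to prove this by an explicit change of variables. Write $Y_i = X_i/S_n$ for $i=1,\dots,n$, so the vector $(Y_1,\dots,Y_n)$ lies on the simplex $\{y\in[0,1]^n : y_1+\dots+y_n=1\}$. Since $Y_n = 1 - Y_1 - \dots - Y_{n-1}$ is a deterministic function of the other coordinates, it suffices to identify the joint law of $(Y_1,\dots,Y_{n-1})$ and check that it coincides with that of the first $n-1$ coordinates of a flat Dirichlet vector of size $n$, namely the uniform (constant-density, total mass one) law on the open simplex $\Sigma = \{(y_1,\dots,y_{n-1}) : y_i>0,\ y_1+\dots+y_{n-1}<1\}$. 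I would state this reduction explicitly at the outset, so that there is no ambiguity about comparing densities supported on a lower-dimensional subset of $\mathbb{R}^n$.

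First I would record that, by independence, the joint density of $(X_1,\dots,X_n)$ is $e^{-(x_1+\dots+x_n)}$ on the open positive orthant. Then I would introduce the invertible map $(x_1,\dots,x_n)\mapsto(y_1,\dots,y_{n-1},s)$ with $y_i = x_i/s$ for $i<n$ and $s = x_1+\dots+x_n$, whose inverse is $x_i = sy_i$ for $i<n$ and $x_n = s(1 - y_1 - \dots - y_{n-1})$. A short computation of the Jacobian determinant of this inverse map — most cleanly by adding the first $n-1$ rows to the last, which triangularizes the matrix — gives $|J| = s^{\,n-1}$. Substituting, the joint density of $(Y_1,\dots,Y_{n-1},S_n)$ is $s^{\,n-1}e^{-s}$ on the region $\{y\in\Sigma,\ s>0\}$ and zero elsewhere.

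The key observation is that this density factors as a function of $(y_1,\dots,y_{n-1})$ alone (the constant $1$ on $\Sigma$) times a function of $s$ alone (the unnormalized $\mathrm{Gamma}(n,1)$ density $s^{\,n-1}e^{-s}$). Hence $(Y_1,\dots,Y_{n-1})$ is independent of $S_n$, and integrating $s$ over $(0,\infty)$ contributes the factor $\int_0^\infty s^{\,n-1}e^{-s}\,ds = (n-1)!$; therefore the marginal density of $(Y_1,\dots,Y_{n-1})$ equals the constant $(n-1)!$ on $\Sigma$. Since the volume of $\Sigma$ is $1/(n-1)!$, this is precisely the uniform law on $\Sigma$, which is the law of the first $n-1$ coordinates of a flat Dirichlet vector of size $n$; appending the deterministic coordinate $Y_n = 1 - \sum_{i<n} Y_i$ finishes the identification.

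I expect no conceptual obstacle here; the only delicate points are bookkeeping. One is verifying that the Jacobian is exactly $s^{\,n-1}$ rather than some other power, which I would handle by the row-operation argument above. The other is the matching of "flat Dirichlet distribution, defined as the uniform law on the $(n-1)$-simplex embedded in $\mathbb{R}^n$" with "uniform law on the open simplex $\Sigma\subseteq\mathbb{R}^{n-1}$ of the free coordinates"; addressing this up front, as in the first paragraph, removes the only possible source of confusion.
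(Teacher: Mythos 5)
Your proof is correct. The paper does not actually supply a proof of this lemma; it cites it as a known result from Devroye's \emph{Non-Uniform Random Variate Generation}. Your argument is the standard change-of-variables derivation: map $(x_1,\dots,x_n)$ to $(y_1,\dots,y_{n-1},s)$, observe the Jacobian is $s^{n-1}$, note the resulting density $s^{n-1}e^{-s}$ on $\Sigma\times(0,\infty)$ factors, integrate out $s$ to obtain the constant $(n-1)!$ on $\Sigma$, and identify this with the uniform law on the simplex. Both the Jacobian computation (row reduction gives an upper-triangular matrix with diagonal $s,\dots,s,1$) and the volume $\operatorname{vol}(\Sigma)=1/(n-1)!$ check out, and you are right to flag the bookkeeping step matching the uniform law on the embedded $(n-1)$-simplex in $\R^n$ with the uniform law on $\Sigma\subseteq\R^{n-1}$: the coordinate-dropping map is an affine bijection between the two, so it carries normalized uniform measure to normalized uniform measure. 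In short, this is a self-contained proof of a result the paper takes for granted by reference, and it is the proof one finds in the cited textbook.
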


Using this more general method 
for randomly generating a source of size $n$,
we prove in this section 
that if the density of the i.i.d. sequence is positive on at least
some interval $(0,\epsilon)$ with $\epsilon>0$,
then the probability a Huffman code 
is competitively optimal shrinks to $0$ as $n$ grows
(Theorem~\ref{thm:Huffman-not-competitively-optimal-in-limit}).
In other words, competitively optimal prefix codes become rare for large sources.
This result can be viewed as an asymptotic converse 
to Cover's theorem for dyadic sources
(i.e., Theorem~\ref{thm:Cover-Huffman-competitively-optimal-for-dyadic-sources}),
or from a pessimistic viewpoint,
it also indicates that Cover's result cannot be extended to many large sources
beyond the dyadic ones.

We also examine an important special case of 
Theorem~\ref{thm:Huffman-not-competitively-optimal-in-limit}
when the random variables $X_1, \dots, X_n$ are i.i.d. exponential.

Our Corollary~\ref{cor:Huffman-flat-Dirichlet}
notes that the result of
Theorem~\ref{thm:Huffman-not-competitively-optimal-in-limit}
is true when choosing a source at random from a 
flat Dirichlet distribution,
which we use in Section~\ref{sec:experimental-evidence} to
gather experimental evidence of the convergence rate as $n\to \infty$.

The next lemma shows that if one set of leaves of a Huffman code 
has a smaller Kraft sum than that of a second disjoint set of leaves,
then one can find a prefix code that competitively wins against
the Huffman code on the set of smaller Kraft sum,
loses on the set of larger Kraft sum,
and ties on all other leaves.

\begin{lemma}
For any source,
if $H$ is a Huffman code,
and $U$ and $V$ are disjoint subsets of the source's alphabet $\Alphabet$
whose Huffman-Kraft sums satisfy $K(U) < K(V)$,
then there exists a prefix code $C$
such that
\begin{align}
\{ x\in \Alphabet : l_C(x) < l_H(x) \} &= U\\
\{ x\in \Alphabet : l_C(x) > l_H(x) \} &= V.
\end{align}
\label{lem:KraftExistPrefixCode}
\end{lemma}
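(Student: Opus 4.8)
The plan is to build $C$ by modifying the code tree of $H$: shorten the codeword of every symbol of $U$ by exactly one bit, leave the codewords of all symbols outside $U\cup V$ untouched, lengthen the codeword of every symbol of $V$ by a large common amount $m$, and then invoke the converse Kraft inequality (Lemma~\ref{lem:Kraft-Inequality}) to realize the resulting length profile as a genuine prefix code.

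First I would record two facts that make the modification legitimate. Since $H$ is a Huffman code it is complete, so $K(\Alphabet)=1$ by Lemma~\ref{lem:complete_KS_1}; because $V\subseteq\Alphabet\setminus U$ and Huffman-Kraft sums are monotone and additive over disjoint unions, this yields $K(U)<K(V)\le K(\Alphabet\setminus U)=1-K(U)$, hence $K(U)<\tfrac12$. Consequently $2^{-l_H(x)}\le K(U)<\tfrac12$ for each $x\in U$, so $l_H(x)\ge 2$ and ``shortening by one bit'' produces a positive integer length.

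Next I would define a length function $\ell$ on $\Alphabet$ by $\ell(x)=l_H(x)-1$ for $x\in U$, $\ell(x)=m$ for $x\in V$, and $\ell(x)=l_H(x)$ otherwise, where $m$ is chosen larger than every $l_H(x)$ with $x\in V$ and large enough that $|V|\,2^{-m}\le K(V)-K(U)$; this is possible since $K(V)-K(U)>0$. Splitting the Kraft sum of $\ell$ over the three cases and using $K(\Alphabet)=1$ gives
\begin{align*}
\sum_{x\in\Alphabet}2^{-\ell(x)} = \bigl(1-K(U)-K(V)\bigr) + 2K(U) + |V|\,2^{-m} = 1 + K(U) - K(V) + |V|\,2^{-m} \le 1,
\end{align*}
so by the converse direction of Lemma~\ref{lem:Kraft-Inequality} there is a binary prefix code $C$ whose codeword lengths are exactly the $\ell(x)$. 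The final step is the bookkeeping check that $l_C(x)=l_H(x)-1<l_H(x)$ precisely for $x\in U$, $l_C(x)=m>l_H(x)$ precisely for $x\in V$, and $l_C(x)=l_H(x)$ off $U\cup V$, which is immediate from the definition of $\ell$; this gives the two claimed set identities.

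The one point that needs care — and the reason for lengthening $V$ heavily rather than by a single bit — is the Kraft accounting. Shortening $U$ by one bit raises the Kraft sum by $K(U)$, whereas a one-bit lengthening of $V$ would only lower it by $\tfrac12 K(V)$, which is insufficient when $K(U)$ is close to $K(V)$; lengthening the symbols of $V$ by an arbitrarily large amount recovers essentially all of $K(V)$ and pushes the total back to at most $1$. (The argument in fact uses nothing about $H$ beyond completeness.)
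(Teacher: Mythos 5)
Your proof is correct and follows essentially the same strategy as the paper's: shorten every codeword in $U$ by one bit, lengthen those in $V$ enough to compensate, leave the rest fixed, and invoke the converse Kraft inequality. The only cosmetic differences are that you assign all symbols in $V$ a common absolute length $m$ whereas the paper adds a common increment $k$ to their Huffman lengths, and you explicitly verify $l_H(x)\ge 2$ for $x\in U$ (which the paper leaves implicit); neither changes the substance of the argument.
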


\newcommand{\ProofOfLemmaKraftExistPrefixCode}{
\begin{proof}[Proof of Lemma \ref{lem:KraftExistPrefixCode}]
Let $k$ be an integer such that
\begin{align}
K(U) \le (1 - 2^{-k})K(V)
\label{eq:8000}
\end{align}
and for each $x\in \Alphabet$, define the following integer:
\begin{align}
l(x) &= 
 \begin{cases}
   l_H(x) - 1 & \text{if}\ x \in U\\
   l_H(x) + k & \text{if}\ x \in V\\
   l_H(x)     & \text{if}\ x \not\in U \cup V.
 \end{cases}
\end{align}
Then
\begin{align}
\sum_{x\in \Alphabet} 2^{-l(x)}
&= \sum_{x \in U} 2 \cdot 2^{-l_H(x)}
 + \sum_{x \in V} 2^{-k} \cdot 2^{-l_H(x)}
 + \sum_{x \not\in U \cup V} 2^{-l_H(x)}  \\
&= 2K(U) + 2^{-k} K(V) + (1-K(U)-K(V)) \label{eq:8001} \\
&= 1 + K(U) - (1 - 2^{-k})K(V)  \\
&\le 1, \label{eq:8002}
\end{align}
where
\eqref{eq:8001} follows since 
the Kraft sum of all the Huffman codewords is $1$,
by Lemma~\ref{lem:complete_KS_1};
and
\eqref{eq:8002} follows from \eqref{eq:8000}.
Therefore,
the Kraft inequality 
(Lemma~\ref{lem:Kraft-Inequality})
implies that
there exists a prefix code $C$ 
whose codeword lengths are the values of $l(x)$ for all $x\in \Alphabet$.
That is,
$l_C(x) = l_H(x)-1$ for all $x \in U$;
$l_C(x) = l_H(x)+k$ for all $x \in V$;
and
$l_C(x) = l_H(x)$ for all $x \not\in U \cup V$,
and therefore
\begin{align}
\{ x \in \Alphabet : l_C(x) < l_H(x) \} &= U\\
\{ x \in \Alphabet : l_C(x) > l_H(x) \} &= V.
\end{align}
\end{proof}
}



The following lemma gives a sufficient condition for competitive non-optimality
for a Huffman code.

\begin{lemma}
For any source and any Huffman code for the source,
if $U$ and $V$ are disjoint subsets of the source alphabet
whose Huffman-Kraft sums satisfy $K(U) < K(V)$ 
and whose probabilities satisfy $P(U) > P(V)$,
then the Huffman code is not competitively optimal for the source.
\label{lem:KraftSumProbDisagree}
\end{lemma}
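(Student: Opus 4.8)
The plan is to invoke Lemma~\ref{lem:KraftExistPrefixCode} directly: since $K(U) < K(V)$, that lemma hands us a prefix code $C$ for which $U$ is exactly the set of symbols where $C$ is strictly shorter than the Huffman code $H$, and $V$ is exactly the set where $C$ is strictly longer. I would then read off the win/loss sets of the competition between $H$ and $C$ from the sets in~\eqref{eq:7262}, with $C_1 = H$ and $C_2 = C$: the wins for $H$ are the symbols with $l_H < l_C$, which is precisely $V$, and the losses for $H$ are the symbols with $l_H > l_C$, which is precisely $U$. Hence the competitive advantage of $H$ over $C$ is $P(V) - P(U)$, which is negative by the hypothesis $P(U) > P(V)$. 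So $H$ does not competitively dominate $C$, and therefore $H$ is not competitively optimal.

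A small bookkeeping point worth checking explicitly is that $C$ is genuinely a different prefix code from $H$ (so that it counts as an admissible opponent in the definition of competitive optimality). This is automatic: the hypotheses force $U \neq \emptyset$, since $P(U) > P(V) \ge 0$ rules out $P(U) = 0$, so $C$ and $H$ differ in codeword length on at least one symbol. One could also simply observe that the strict inequality $P(V) - P(U) < 0$ already certifies $C \ne H$, since length-equivalent codes have competitive advantage zero over one another.

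There is essentially no hard step here — the content is all in Lemma~\ref{lem:KraftExistPrefixCode}, and what remains is to align the roles of $U$, $V$ with the $W$, $L$ notation and to be careful about which direction the advantage points (it is $H$'s advantage over $C$ that we want to be negative, equivalently $C$'s advantage over $H$ that is positive). The only thing to watch is not to mix up the orientation of the inequalities $K(U) < K(V)$ versus $P(U) > P(V)$: the code $C$ from Lemma~\ref{lem:KraftExistPrefixCode} wins on the low-Kraft-sum side $U$, and since $U$ is also the high-probability side, those wins carry more probability than the losses on $V$, which is exactly the asymmetry that defeats $H$.
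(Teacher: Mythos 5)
Your proposal is correct and follows the paper's own proof essentially verbatim: apply Lemma~\ref{lem:KraftExistPrefixCode} to get a code $C$ whose win/loss sets against the Huffman code are exactly $U$ and $V$, then conclude the advantage of $C$ over $H$ is $P(U)-P(V)>0$. The extra check that $C$ is a genuinely different code is a harmless (and correct) addition the paper leaves implicit.
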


\newcommand{\ProofOfLemmaKraftSumProbDisagree}{
\begin{proof}[Proof of Lemma \ref{lem:KraftSumProbDisagree}]
Let $\Alphabet$ denote the source alphabet
and let $H$ be a Huffman code.
By Lemma~\ref{lem:KraftExistPrefixCode},
since $K(U) < K(V)$,
there exists a prefix code $C$ such that
\begin{align}
\{ x\in \Alphabet : l_C(x) < l_H(x) \} &= U \\
\{ x\in \Alphabet : l_C(x) > l_H(x) \} &= V.
\end{align}
The competitive advantage of $C$ over the Huffman code $H$ is thus
$P(U) - P(V) > 0$,
so the Huffman code is not competitively optimal.
\end{proof}
}


The following lemma is essentially given 
in~\cite[equation (4)]{Yamamoto-Itoh-1995},
but we give an alternate short proof here for completeness.

\begin{lemma}
For any source,
if $y$ and $y'$ are sibling nodes in a Huffman tree,
$z$ is a leaf node descendant of $y$,
and $P(z) < P(y) - P(y')$,
then the Huffman code is not competitively optimal for the source.
\label{lem:LeafProbLessThanDifference}
\end{lemma}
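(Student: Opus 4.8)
The plan is to apply Lemma~\ref{lem:KraftSumProbDisagree} with the sets $U = \{z\}$ and $V = \{y'\}$. To do this I need to check the two hypotheses of that lemma: that $U$ and $V$ are disjoint, that $K(U) < K(V)$, and that $P(U) > P(V)$.

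First I would verify disjointness. Since $y$ and $y'$ are siblings, neither is a descendant of the other, so $z$ (being a descendant of $y$) is not a descendant of $y'$; in particular $z \ne y'$, so $\{z\}$ and $\{y'\}$ are disjoint sets of leaves (or, more precisely, $\{z\}$ and the leaf-descendant set of $\{y'\}$ are disjoint). Next I would handle the Kraft-sum inequality. The key observation is that $y$ and $y'$ are siblings, so they lie in the same row of the Huffman tree, say row $r$; hence $K(y) = K(y') = 2^{-r}$. Since $z$ is a leaf descendant of $y$ at depth at least $r+1$ (it cannot be $y$ itself, because $y$ has a sibling hence $y$ is an internal node, or if $y$ happens to be a leaf then $z=y$ and $K(\{z\}) = 2^{-r} = K(\{y'\})$, violating the strict inequality — so I should note that $P(z) < P(y) - P(y') \le P(y)$ forces $z \ne y$, making $y$ internal and $z$ strictly below row $r$), we get $K(\{z\}) \le 2^{-(r+1)} < 2^{-r} = K(\{y'\})$. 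Thus $K(U) < K(V)$. Finally, the probability inequality $P(U) = P(z) < P(y) - P(y') \le P(y')$ — wait, this needs care: I actually want $P(z) > P(y')$, not $<$. Let me recheck: the hypothesis is $P(z) < P(y) - P(y')$, and I want $P(U) > P(V)$, i.e. $P(z) > P(y')$. Since $y$ and $y'$ are siblings in a Huffman tree and $y'$ is (without loss of generality) the sibling of smaller or equal probability — actually the monotonicity/construction gives $P(y') \le P(y)$, but I need the reverse direction. So instead I set $U = \{y'\}$ and $V = \{z\}$: then $K(U) = 2^{-r} > 2^{-(r+1)} \ge K(\{z\})$... that's the wrong direction too.

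Let me restart the set assignment cleanly: take $U = \{z\}$, $V = \{y'\}$ does not work for probabilities; take $U = \{y'\}$, $V = \{z\}$ does not work for Kraft sums. The resolution is that the hypothesis $P(z) < P(y) - P(y')$ is genuinely an asymmetric statement, and I suspect the intended reading is that $y$ is the higher-probability sibling, so $P(y) - P(y') \ge 0$, and the relevant comparison set should be not $\{y'\}$ but rather a set whose probability is $P(y) - P(y')$ and whose Kraft sum I can control. The natural candidate: let $V$ consist of $y'$ together with some leaves drawn from the subtree rooted at $y$ — specifically, one can "shave off" from $y$'s subtree a collection of leaves of total probability close to $P(y) - P(y')$, but it is cleaner to instead compare $z$ against $y$ directly after a small modification. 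The clean approach is: set $U = \{z\}$ and let $V$ be the set of all leaf descendants of $y$ other than $z$, together with... no. I think the correct and simplest route is: since $P(z) < P(y) - P(y')$, and $z$ is one of the leaves under $y$, consider the two disjoint sets $U = \{z\}$ and $V = \{y'\} \cup \big(\text{all leaves under } y \text{ except } z\big)$ — but these are not disjoint from $y$'s subtree in a way that... actually $U \subseteq \text{leaves}(y)$ and $V \supseteq \text{leaves}(y)\setminus\{z\}$ plus leaves under $y'$, and these are disjoint. Then $P(U) = P(z)$, $P(V) = P(y') + P(y) - P(z) > P(y') + P(z) > P(z) = P(U)$ — wait I want $P(U) > P(V)$, still wrong sign.

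The sign issue tells me the roles of "win set" and "loss set" must be the other way: the competitive advantage we produce is $P(U) - P(V)$ and we need it positive with $K(U) < K(V)$, so we need the SMALL-Kraft set to be HIGH-probability. Here $z$ sits deep (small Kraft) and the hypothesis says $P(z)$ is SMALL — so $z$ should be in the LOSS set $V$, and $U$ should be a shallower, higher-probability set. So: set $V = \{z\}$ (small probability, but we need small Kraft for $U$, large Kraft for $V$ — $\{z\}$ deep means $K(\{z\})$ small, wrong for $V$). I am going in circles, which signals that the actual proof uses Lemma~\ref{lem:KraftExistPrefixCode} more cleverly: one builds a code $C$ that takes $z$ UP one level (shortening it) and pushes $y'$ and enough of $y$'s other leaves DOWN to pay the Kraft cost. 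The win set is $\{z\}$... no, moving $z$ up makes $C$ win on $z$, so win-prob gains $P(z)$; pushing $y'$ down loses $P(y')$; pushing the rest of $y$'s leaves down loses $P(y)-P(z)$; net $P(z) - P(y') - (P(y)-P(z)) = 2P(z) - P(y) - P(y') $, not obviously positive.

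The hard part, clearly, is pinning down the correct combinatorial construction — the naive set choices all fail on a sign. I expect the real argument is: contract the Huffman tree by merging $z$ up to replace $y$'s whole subtree is not length-preserving; instead, observe that since $z$'s Huffman length exceeds $l_H(y')+1$... Actually here is the clean fix: Because $P(z) < P(y) - P(y')$ and $y$ has $z$ as a descendant, $y$ has at least two leaf descendants, so $y$ is internal and $l_H(z) \ge l_H(y') + 1$ is false in general — $z$ could be $y$'s child, giving $l_H(z) = l_H(y') + 1$ exactly, or deeper. Take $U = \{z\}$, $V = \{y'\}$, and additionally note we may *assume* $P(z) \ge P(y')$: if not, then $\{z\}$ vs a sibling of $z$ gives a smaller counterexample, or one reduces to the case handled by Lemma~\ref{lem:KraftExistPrefixCode} directly. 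Given time constraints I will present the proof as: apply Lemma~\ref{lem:KraftSumProbDisagree} to $U=\{z\}$ and $V=\{y'\}$ after first reducing (via monotonicity of Huffman codes and the sibling property) to the case $P(z) > P(y')$, with the Kraft inequality $K(\{z\}) \le 2^{-(l_H(y')+1)} < 2^{-l_H(y')} = K(\{y'\})$ coming from $z$ being a strict descendant of $y$ which is a strict sibling-row node with $y'$. The main obstacle is justifying the reduction to $P(z) > P(y')$ cleanly from the stated hypothesis $P(z) < P(y)-P(y')$ — this is where I'd need to invoke how Huffman merging orders probabilities, and it is the step I would spend the most care on; everything else is the two-line verification feeding into Lemma~\ref{lem:KraftSumProbDisagree}.
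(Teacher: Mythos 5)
Your proposal has a genuine gap: you never identify the correct pair of sets to feed into Lemma~\ref{lem:KraftSumProbDisagree}, and the construction you finally commit to does not work. The paper's proof takes $U'$ to be the set of \emph{all} leaf descendants of $y$ with $z$ removed, and $V$ to be the set of all leaf descendants of $y'$. Since $y$ and $y'$ are siblings, their leaf-descendant sets have equal Kraft sums, so deleting $z$ gives $K(U') < K(V)$; and $P(U') = P(y) - P(z) > P(y) - \bigl(P(y)-P(y')\bigr) = P(y') = P(V)$, which is exactly the hypothesis of Lemma~\ref{lem:KraftSumProbDisagree} (one may assume $P(y) > P(y')$ since $P(z)>0$). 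You circled near this idea --- at one point you wrote down ``$\{y'\}$ together with all leaves under $y$ except $z$'' as a candidate for the \emph{loss} set --- but you never placed the leaves of $y$ minus $z$ in the \emph{win} set, which is the one assignment that makes both inequalities point the right way.

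The fallback you actually adopt --- apply the lemma with $U=\{z\}$, $V=\{y'\}$ ``after first reducing to the case $P(z) > P(y')$'' --- fails because that reduction is not available. The hypothesis $P(z) < P(y) - P(y')$ is entirely consistent with $P(z) < P(y')$: take $P(y) = \tfrac{1}{2}$, $P(y') = \tfrac{3}{10}$, and $z$ a leaf under $y$ with $P(z) = \tfrac{1}{100}$. Nothing in the Huffman construction forces a deep leaf of the heavier sibling to outweigh the lighter sibling. So the step you flag as the one needing the most care is in fact unprovable, and the argument as proposed cannot be completed along those lines.
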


\newcommand{\ProofOfLemmaLeafProbLessThanDifference}{
\begin{proof}[Proof of Lemma \ref{lem:LeafProbLessThanDifference}]
We may assume $P(y) > P(y')$ since $P(z) > 0$.
Let $U$ and $V$ be the sets of leaf node descendants of $y$ and $y'$,
respectively.
Let $U' = U - \{z\}$.
Then,
$K(U') < K(U) = K(V)$
(since siblings have the same Kraft sum)
and
$P(U') = P(U) - P(z) > P(U) - (P(U) - P(V)) = P(V)$.
Then by Lemma~\ref{lem:KraftSumProbDisagree},
the Huffman code is not competitively optimal.
\end{proof}
}



A hypothetical converse to 
Theorem~\ref{thm:Cover-Huffman-competitively-optimal-for-dyadic-sources}
would say that Huffman codes for non-dyadic sources are never competitively optimal.
This is not true
(e.g., Theorem~\ref{thm:competitively-optimal-Huffman:n=3}),
but we are able to demonstrate that 
such a converse becomes probabilistically true
as the source size grows.

Our next theorem,
one of our main results,
demonstrates an asymptotic converse to 
Theorem~\ref{thm:Cover-Huffman-competitively-optimal-for-dyadic-sources},
by showing that competitively optimal Huffman codes become
rare for randomly chosen sources as their size grows.

\begin{theorem}
Let $\epsilon >0$ and
let $X_1,X_2,\ldots$ be an i.i.d. sequence of 
nonnegative random variables with a density 
which is positive on at least $(0,\epsilon)$,
and let $S_n = X_1 + \dots + X_n$.
Then the probability that a Huffman code for 
$\frac{X_1}{S_n}, \dots, \frac{X_n}{S_n}$
is competitively optimal converges to zero as $n \to \infty$.
\label{thm:Huffman-not-competitively-optimal-in-limit}
\end{theorem}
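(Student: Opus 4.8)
The plan is to apply Lemma~\ref{lem:LeafProbLessThanDifference}: it suffices to show that, with probability tending to $1$ as $n\to\infty$, the Huffman tree contains a pair of sibling nodes $y,y'$ together with a leaf descendant $z$ of $y$ satisfying $P(z) < P(y) - P(y')$. A natural place to look is near the bottom of the Huffman tree, among the smallest source probabilities, since that is where Huffman merges occur first. First I would order the normalized values $\frac{X_{(1)}}{S_n} \le \frac{X_{(2)}}{S_n} \le \cdots$ and examine the first few Huffman merges. The first merge combines the two smallest normalized probabilities, $\frac{X_{(1)}}{S_n}$ and $\frac{X_{(2)}}{S_n}$, into an internal node $w$ of probability $\frac{X_{(1)}+X_{(2)}}{S_n}$. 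The key structural observation is this: if $\frac{X_{(3)}}{S_n} < \frac{X_{(1)}+X_{(2)}}{S_n}$ as well, then at some early stage $w$ acquires a sibling $y'$ whose probability is at least $\frac{X_{(3)}}{S_n}$ but whose excess over the leaf $X_{(1)}$ can be controlled; more robustly, I want to exhibit siblings $y,y'$ with $y$ having a leaf descendant $z$ of very small probability while $P(y)-P(y')$ is not too small.

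The cleanest route is probably to argue that, whp, the three smallest values are all smaller than a suitable threshold and are "spread out" enough. Concretely, $S_n/n \to \mathbb{E}[X_1]$ a.s.\ by the strong law (and $\mathbb{E}[X_1]<\infty$ may be assumed, or truncation handles it), so each $\frac{X_{(i)}}{S_n}$ is comparable to $\frac{X_{(i)}}{n \mathbb{E}[X_1]}$. The density being positive on $(0,\epsilon)$ forces the order statistics $X_{(1)}, X_{(2)}, X_{(3)}$ to satisfy, for any fixed $\delta>0$ and all large $n$, events like $X_{(1)} < \delta$, $X_{(2)} < \delta$, and $X_{(3)} - X_{(2)} > c/n$ for some $c>0$ — indeed near $0$ the $X_i$ behave like a Poisson process whose points have i.i.d.\ exponential-type gaps after rescaling by $n$, so consecutive gaps among the smallest handful are $\Theta(1/n)$ whp. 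I would then trace the first two Huffman merges: merge $1$ pairs $X_{(1)}$ with $X_{(2)}$; the resulting internal node $w_1$ has probability $\frac{X_{(1)}+X_{(2)}}{S_n}$. If this exceeds $\frac{X_{(3)}}{S_n}$ (which happens whp, since $X_{(1)}+X_{(2)} > X_{(2)}$ and $X_{(2)}$ vs.\ $X_{(3)}$ are comparable with positive probability of the right ordering — actually I'd want $X_{(1)}+X_{(2)} > X_{(3)}$, true whp when the three smallest are $\Theta(1/n)$-separated), then the next merge pairs $X_{(3)}$ with the current smallest, and eventually $w_1$ gets a sibling. I would choose $y = w_1$ (with leaf descendant $z$ the symbol of $X_{(1)}$, of probability $\frac{X_{(1)}}{S_n}$) and $y'$ its Huffman sibling, and verify $P(z) = \frac{X_{(1)}}{S_n} < P(y) - P(y')$ by showing $P(y') < \frac{X_{(1)}+X_{(2)}}{S_n} - \frac{X_{(1)}}{S_n} = \frac{X_{(2)}}{S_n}$ — but since any sibling of $w_1$ in the Huffman process has probability at least $\frac{X_{(2)}}{S_n}$ (monotonicity), this exact inequality is tight/fails, so I must instead take $z$ to be an even smaller leaf or exploit strictness from non-dyadic separation.

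The honest main obstacle is precisely pinning down which siblings to use so that the strict inequality $P(z) < P(y)-P(y')$ holds whp with a quantitative gap. I expect the right statement is: whp the four or five smallest values are all $o(1)$ and pairwise $\Theta(1/n)$-separated, and among them one can always locate a sibling pair in the Huffman tree whose probability difference exceeds the smallest leaf probability $\frac{X_{(1)}}{S_n}$, using that $\frac{X_{(1)}}{S_n}$ is, whp, strictly smaller than any gap $\frac{X_{(i+1)}-X_{(i)}}{S_n}$ (this uses that the minimum of $n$ i.i.d.\ positive variables with bounded-below density near $0$ is $o(1/n)$ along a subsequence of events of probability $\to 1$ — or more carefully, that $P(X_{(1)} < c/n \text{ and a gap} > c'/n)\to$ a positive constant, boosted to $1$ by considering the bottom $\log n$ values). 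Then a Borel–Cantelli / union-bound over the early merges finishes it. I would also note $\mathbb{E}[X_1]$ need not be finite: if not, replace $X_i$ by $\min(X_i,M)$; this only decreases $S_n$, shrinks the small values further, and does not change the order statistics near $0$, so the argument is unaffected. Finally, combining with Lemma~\ref{lem:LeafProbLessThanDifference} yields that the Huffman code fails to be competitively optimal on an event of probability $\to 1$, which is the claim.
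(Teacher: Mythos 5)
Your proposal has a genuine gap at its central step, and the gap is exactly where you say it is. You propose to verify the sibling condition of Lemma~\ref{lem:LeafProbLessThanDifference}, but your only concrete candidate --- $y=w_1$ the first merged node, $z$ the leaf for $X_{(1)}$, $y'$ the Huffman sibling of $w_1$ --- provably fails: every node available when $w_1$ is merged has probability at least $X_{(3)}/S_n\ge X_{(2)}/S_n$, so $P(y)-P(y')\le X_{(1)}/S_n=P(z)$, with the inequality in the wrong direction. After acknowledging this you retreat to ``an even smaller leaf'' or ``the bottom $\log n$ values'' without ever exhibiting a sibling pair and leaf for which the strict inequality holds, nor an argument that such a configuration exists with probability tending to one. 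Tracking which nodes become siblings of which during the Huffman merging of the extreme order statistics is the entire difficulty, and it is left unresolved. Separately, your probabilistic estimates are not justified by the hypothesis: ``density positive on $(0,\epsilon)$'' does not imply $X_{(1)}=o(1/n)$ (false even for the uniform density, where $nX_{(1)}$ has a nondegenerate limit) nor that consecutive gaps among the smallest values are $\Theta(1/n)$; the density may vanish or blow up as $x\to 0^{+}$, changing these scalings entirely. The truncation discussion of $\mathbb{E}[X_1]$ is also unnecessary --- no law-of-large-numbers control of $S_n/n$ is needed if one only ever compares ratios of the normalized probabilities.

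The paper's proof sidesteps both problems by not looking at extreme order statistics at all. It partitions the \emph{fixed} interval $(0,\epsilon)$ into $24$ equal subintervals; by the strong law each subinterval almost surely contains some $X_i$ for all large $n$, yielding $24$ decreasing normalized probabilities $Y_1>\dots>Y_{24}$, all comparable to $\epsilon/S_n$. The arithmetic of the subinterval endpoints forces, via monotonicity, a strict row jump between the leaves of some consecutive pair $Y_m,Y_{m+1}$ with $m\le 13$, and forces the leaf for $Y_{21}$ to lie strictly below that of $Y_{m+1}$. Taking $U=\{Y_{m+1},Y_{21}\}$ and $V=\{Y_m\}$ gives $K(U)<K(V)$ but $P(U)>P(V)$, and the conclusion follows from Lemma~\ref{lem:KraftSumProbDisagree} --- a strictly more flexible tool than the sibling criterion of Lemma~\ref{lem:LeafProbLessThanDifference}, since $U$ and $V$ need not be related by the tree's sibling structure. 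If you want to salvage your approach, you would need both a correct quantitative description of the bottom order statistics under the stated hypotheses and a combinatorial argument locating a valid $(y,y',z)$ triple in the random Huffman tree; neither is present.
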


\begin{proof}
Let $F$ denote the distribution function for each $X_i$.
Let $\delta = \epsilon/24$.
For each
$k\in \{1,\dots,24\}$,
define the interval
\begin{align}
I_k &= (\epsilon - k\delta, \epsilon - (k-1)\delta ) . 
\end{align}
The intervals $I_1, \dots, I_{24}$ are disjoint and their union
lies in $(0,\epsilon)$.

\newcommand{\IndicatorFunction}{\mathbbm{1}}
Denote the indicator function for any $E \subseteq \mathbb{R}$ by
$\IndicatorFunction_E(x) = 1$ if $x \in E$ 
and $\IndicatorFunction_E(x) = 0$ otherwise.
For each $k \in \{1,\dots,24\}$,
since the binary random variables 
$\IndicatorFunction_{I_k}(X_i)$ are i.i.d. for all $i$, we have
\begin{align}
\frac{|\{ i \in \{1,\dots,n\} : X_i \in I_k \}|}{n}
&= \frac{1}{n} \sum_{i=1}^n \IndicatorFunction_{I_k}(X_i) \\
&\xrightarrow{\mathrm{a.s.}} E[\IndicatorFunction_{I_k}(X_1)] \label{eq:7300}\\
&= P(X_1 \in I_k)\\
&= F(\epsilon - (k-1)\delta) - F(\epsilon - k\delta) \\
&> 0, \label{eq:7288}
\end{align}
where 
\eqref{eq:7300} follows from the strong law of large numbers 
(e.g.,~\cite[Theorem 6.1]{Billingsley-book-1986});
and
\eqref{eq:7288} follows since $F$ is increasing on $(0,\epsilon)$.
Let $A$ be the event that
all $24$ of the convergences in \eqref{eq:7300} occur.
The intersection of finitely many events with probability $1$
has probability $1$,
so $P(A) = 1$.

We will show that for any outcome $\omega \in A$,
there exists $N \ge 1$ such that
for all $n \ge N$,
any Huffman code for $\frac{X_1(\omega)}{S_n(\omega)}, \dots, \frac{X_n(\omega)}{S_n(\omega)}$
is not competitively optimal.
Suppose this has been done,
and for each $n \ge 1$
let $B_n$ be the event containing the outcomes $\omega$ such that
a Huffman code for
$\frac{X_1(\omega)}{S_n(\omega)}, \dots, \frac{X_n(\omega)}{S_n(\omega)}$
is competitively optimal.
If $\omega \in A$,
then $\omega$ appears in only finitely many $B_n$,
so 
$\omega \not\in \displaystyle\limsup_{n \to \infty} B_n
 = \bigcap_{n\ge 1}\bigcup_{j\ge n} B_j$.
Therefore
$\displaystyle\limsup_{n \to \infty} B_n \subseteq A^c$,
and so
the theorem is proved using
\begin{align}
\limsup_{n \to \infty} P(B_n)
&\le P(\limsup_{n \to \infty} B_n)
\le P(A^c)
= 0
\end{align}
where the first inequality follows from
Fatou's Lemma
(e.g.,~\cite[Theorem 4.1]{Billingsley-book-1986}). 

We will now prove what we promised.
Let $\omega \in A$,
and consider the sequence $X_1(\omega),X_2(\omega),\dots$.
Let $H$ be a Huffman tree constructed from the $n$ probabilities
$\frac{X_1(\omega)}{S_n(\omega)}, \dots, \frac{X_n(\omega)}{S_n(\omega)}$.
Since $\omega \in A$,
the convergence in \eqref{eq:7300} holds for all $k \in \{1,\dots,24\}$,
so for each such $k$ there exists $N_k \ge 1$
such that for all $n \ge N_k$
we have
$|\{ i \in \{1,\dots,n\} : X_i(\omega) \in I_k \}| \ge 1$.

Let $n \ge \max(N_1,\dots,N_{24})$.
Then for each $k \in \{1,\dots,24\}$,
let $Y_k$ equal $X_i(\omega)/S_n(\omega)$ for some $X_i(\omega) \in I_k$.
That is, 
$Y_1 > Y_2 > \dots > Y_{24}$ are probabilities
corresponding to $24$ of the $n$ leaves in the Huffman tree $H$.
Note that for all $k$, 
we have $Y_k S_n(\omega) \in I_k$,
so
\begin{align}
\frac{\epsilon - k\delta}{S_n(\omega)} 
&< Y_k < \frac{\epsilon - (k-1)\delta}{S_n(\omega)} .
\label{eq:7301}
\end{align}

The sibling in $H$ of the leaf for $Y_{14}$ has probability at most $Y_{13}$,
for otherwise the leaf for $Y_{14}$ and its sibling
would not have been nodes with two of the smallest available probabilities 
for merging as required by the Huffman construction.
Then the probability $\widehat{Y}_{14}$ of the parent
of the leaf for $Y_{14}$ satisfies
(using \eqref{eq:7301})
\begin{align}
\widehat{Y}_{14}
&\le Y_{14}+Y_{13}
< \frac{\epsilon - 13\delta}{S_n(\omega)} + \frac{\epsilon - 12\delta}{S_n(\omega)}
= \frac{\epsilon - \delta}{S_n(\omega)}
< Y_1.
\end{align}
Then since the Huffman code $H$ is monotone
by Lemma~\ref{lem:minlen-implies-monotone},
the leaf for $Y_1$ appears on a row in $H$
that is at least as high
as the row on which the parent of $Y_{14}$ appears,
so the leaf for $Y_1$ appears on a row strictly higher than
the row on which the leaf for $Y_{14}$ appears.
Since $Y_1 > Y_2 > \dots > Y_{14}$,
monotonicity of Huffman codes shows
the row numbers on which the leaves for these $14$ probabilities appear
are non-decreasing,
so the conclusion from the previous sentence implies
the leaf for some probability in the sequence
$Y_1,Y_2,\dots,Y_{14}$
appears on a row strictly higher in $H$ than
the leaf for the next probability in the sequence.
Specifically,
there exists $m \in \{1,\dots,13\}$ such that
the leaf for $Y_m$ appears on a row $r$
strictly higher than the row $r'$ on which the leaf for $Y_{m+1}$ appears,
i.e.,
$r < r'$.

Similarly,
the sibling for the leaf for $Y_{21}$
has probability at most $Y_{20}$,
so the probability $\widehat{Y}_{21}$ of the parent
of the leaf for $Y_{21}$ satisfies
(using \eqref{eq:7301})
\begin{align}
\widehat{Y}_{21}
&\le Y_{21}+Y_{20}
< \frac{\epsilon - 20\delta}{S_n(\omega)} + \frac{\epsilon - 19\delta}{S_n(\omega)}
= \frac{\epsilon - 15\delta}{S_n(\omega)}
< Y_{14}
\le Y_{m+1}.
\end{align}
Then since Huffman codes are monotone,
the leaf for $Y_{m+1}$ appears on a row
that is at least as high
as the row on which the parent of $Y_{21}$ appears,
so the row $r'$ on which the leaf for $Y_{m+1}$ appears
is strictly higher than the row $r''$ on which the leaf for $Y_{21}$ appears,
i.e.,
$r' < r''$.

Let $U$ be the set consisting of the leaves for $Y_{m+1}$ and $Y_{21}$,
and let $V$ be the set consisting of the leaf for $Y_m$.
Then
\begin{align}
K(U)
&= 2^{-r'} + 2^{-r''}
< 2^{-(r'-1)}
\le 2^{-r}
= K(V),
\end{align}
and again using \eqref{eq:7301},
\begin{align}
P(V) - P(U)
&= Y_m - (Y_{m+1} + Y_{21})
 < \frac{\epsilon - (m-1)\delta}{S_n(\omega)}
  - \left( \frac{\epsilon - (m+1)\delta}{S_n(\omega)}
  + \frac{\epsilon - 21\delta}{S_n(\omega)}\right)
= \frac{-\delta}{S_n(\omega)}
< 0.
\end{align}
Then
Lemma~\ref{lem:KraftSumProbDisagree} implies
the Huffman code for $\frac{X_1(\omega)}{S_n(\omega)}, \dots, \frac{X_n(\omega)}{S_n(\omega)}$
is not competitively optimal,
which is what we wanted to show.
\end{proof}

The requirements for the density of the random variables chosen in the statement
of Theorem~\ref{thm:Huffman-not-competitively-optimal-in-limit}
are not very restrictive and are satisfied by a wide range of random variables.
Such densities include various versions of at least the following:
beta, 
chi-square,
Erlang,
exponential,
gamma,
Gumbel,
log-normal,
logistic,
Maxwell,
Pareto,
Rayleigh,
uniform,
and
Weibull.

The following corollary follows from
Lemma~\ref{lem:Flat-Dirichlet=iid-exponential}
and
Theorem~\ref{thm:Huffman-not-competitively-optimal-in-limit}.
It shows that competitively optimal Huffman codes become rare for large sources
selected uniformly at random from a simplex.

\begin{corollary}
Let $X_1, \ldots, X_n$ be the probabilities of a source
chosen randomly from a flat Dirichlet distribution in $\R^n$.
Then the probability that a Huffman code for this source
is competitively optimal converges to zero as $n \to \infty$.
\label{cor:Huffman-flat-Dirichlet}
\end{corollary}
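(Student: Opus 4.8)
The plan is to derive this as an immediate consequence of the two results we are allowed to cite. First I would invoke Lemma~\ref{lem:Flat-Dirichlet=iid-exponential}, which tells us that sampling a source of size $n$ from a flat Dirichlet distribution is distributionally identical to taking i.i.d.\ exponential random variables $X_1,\dots,X_n$ with mean one, forming $S_n = X_1 + \dots + X_n$, and normalizing to $\frac{X_1}{S_n},\dots,\frac{X_n}{S_n}$. The mean-one exponential density is $e^{-x}$ for $x>0$, which is strictly positive on all of $(0,\infty)$, hence in particular positive on the interval $(0,1)$ (so we may take $\epsilon = 1$, say). Therefore the exponential sequence satisfies the hypotheses of Theorem~\ref{thm:Huffman-not-competitively-optimal-in-limit}.

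Next I would apply Theorem~\ref{thm:Huffman-not-competitively-optimal-in-limit} to this exponential sequence: it gives that the probability a Huffman code for $\frac{X_1}{S_n},\dots,\frac{X_n}{S_n}$ is competitively optimal converges to zero as $n \to \infty$. Since ``a Huffman code for a source is competitively optimal'' is an event depending only on the tuple of probabilities $\left(\frac{X_1}{S_n},\dots,\frac{X_n}{S_n}\right)$, and since by Lemma~\ref{lem:Flat-Dirichlet=iid-exponential} this tuple has exactly the flat Dirichlet distribution of size $n$, the probability of that event is the same whether the source is generated by the exponential-normalization procedure or drawn directly from the flat Dirichlet distribution. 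Hence the convergence transfers verbatim, which is the assertion of the corollary.

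There is essentially no obstacle here — the corollary is a packaging of a special case, and the only point requiring a sentence of care is the observation that competitive optimality of a Huffman code is a measurable function of the probability vector alone (it does not depend on which i.i.d.\ scheme produced that vector), so that equality in distribution of the vectors forces equality of the relevant probabilities for each fixed $n$. I would state this explicitly and then the limit statement follows. I would write the proof roughly as:

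\begin{proof}
Let $X_1, X_2, \ldots$ be i.i.d.\ exponential random variables with mean one
and let $S_n = X_1 + \dots + X_n$.
The exponential density $e^{-x}\IndicatorFunction_{(0,\infty)}(x)$
is positive on $(0,1)$,
so Theorem~\ref{thm:Huffman-not-competitively-optimal-in-limit}
(with $\epsilon = 1$)
implies that the probability a Huffman code for
$\frac{X_1}{S_n}, \dots, \frac{X_n}{S_n}$
is competitively optimal converges to zero as $n\longrightarrow\infty$.
Whether a Huffman code for a source of size $n$ is competitively optimal
depends only on the vector of source probabilities,
and by Lemma~\ref{lem:Flat-Dirichlet=iid-exponential}
the vector $\left(\frac{X_1}{S_n}, \dots, \frac{X_n}{S_n}\right)$
has the same distribution as a source of size $n$ drawn from a
flat Dirichlet distribution in $\R^n$.
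Hence, for each $n$, the probability that a Huffman code is competitively
optimal is the same under either sampling method,
and so this probability converges to zero as $n\longrightarrow\infty$
for a source drawn from a flat Dirichlet distribution.
\end{proof}
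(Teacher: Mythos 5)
Your proposal is correct and follows exactly the route the paper takes: the paper states that the corollary ``follows from Lemma~\ref{lem:Flat-Dirichlet=iid-exponential} and Theorem~\ref{thm:Huffman-not-competitively-optimal-in-limit}'' and gives no further argument. Your version merely makes explicit the (correct) observation that competitive optimality of a Huffman code is a function of the probability vector alone, so equality in distribution transfers the limit.
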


The next corollary 
follows immediately from 
Corollary~\ref{cor:Yamamoto-Itoh} and
Theorem~\ref{thm:Huffman-not-competitively-optimal-in-limit}.
It shows that as the source size grows 
the existence of any
competitively optimal prefix codes becomes unlikely.

\begin{corollary}
Let $\epsilon>0$ and
let $X_1,X_2,\ldots$ be an i.i.d. sequence of 
nonnegative random variables with a density 
which is positive on at least $(0,\epsilon)$,
and let $S_n = X_1 + \dots + X_n$.
Then the probability that a competitively optimal prefix code exists 
for the source
$\frac{X_1}{S_n}, \dots, \frac{X_n}{S_n}$
converges to zero as $n \to \infty$.
\label{cor:CompetitivelyOptimalCodesAreRare}
\end{corollary}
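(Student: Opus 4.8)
The plan is to derive the corollary immediately from Theorem~\ref{thm:Huffman-not-competitively-optimal-in-limit} and Theorem~\ref{thm:Yamamoto-Itoh}, with no new computation. The underlying idea is that the event ``some competitively optimal prefix code exists for the random source'' is contained in the event ``some Huffman code for the random source is competitively optimal'', and the latter event already has probability tending to $0$.

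First I would fix $n$ and introduce two events: let $D_n$ be the event that there is a competitively optimal prefix code for $\frac{X_1}{S_n},\dots,\frac{X_n}{S_n}$, and let $B_n$ be the event that there is a competitively optimal Huffman code for the same source. Applying Theorem~\ref{thm:Yamamoto-Itoh} to each realized source and reading it as a contrapositive (if some prefix code is competitively optimal, then some Huffman code is competitively optimal) gives the pointwise set inclusion $D_n \subseteq B_n$, hence $P(D_n) \le P(B_n)$.

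Then I would invoke Theorem~\ref{thm:Huffman-not-competitively-optimal-in-limit} to conclude $P(B_n) \to 0$, and therefore $P(D_n) \to 0$, which is the claim. Here I would note explicitly that the proof of Theorem~\ref{thm:Huffman-not-competitively-optimal-in-limit} in fact establishes slightly more than its statement: for almost every outcome $\omega$ there is an $N$ such that for all $n \ge N$ \emph{every} Huffman code for the corresponding source fails to be competitively optimal, since the Kraft-sum obstruction exhibited there via Lemma~\ref{lem:KraftSumProbDisagree} uses only monotonicity and the Huffman merging rule, properties shared by all Huffman codes for a given source. Thus $B_n$ is precisely the event controlled by that theorem, and $\limsup_n P(B_n) = 0$ carries over verbatim.

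The argument is otherwise just monotonicity of probability together with a contrapositive, so I do not expect a genuine obstacle. The one point warranting a sentence of care is confirming that Theorem~\ref{thm:Huffman-not-competitively-optimal-in-limit}, although phrased in terms of ``a Huffman code'', rules out all Huffman codes simultaneously for large $n$, which is exactly what the inclusion $D_n \subseteq B_n$ requires.
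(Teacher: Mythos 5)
Your proposal is correct and is exactly the paper's argument: the paper derives this corollary immediately from Theorem~\ref{thm:Yamamoto-Itoh} (giving the inclusion of the event that any competitively optimal code exists into the event that some Huffman code is competitively optimal) and Theorem~\ref{thm:Huffman-not-competitively-optimal-in-limit}. Your added remark that the theorem's proof in fact excludes \emph{every} Huffman code for large $n$ (the proof explicitly shows ``any Huffman code \dots is not competitively optimal'' for $\omega \in A$ and $n \ge N$) is a legitimate and worthwhile point of care, given Example~\ref{ex:2-codes} shows distinct Huffman codes for one source can differ in competitive optimality.
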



\clearpage

\section{Lemmas for future sections}
\label{sec:Lemmas}

In this section,
we give a number of lemmas that are used in the remainder of this paper.

Huffman codes are expected length optimal and
Gallager~\cite[Theorem 1]{Gallager-IT-1978} 
characterized such codes in terms of a ``sibling property'',
which says 
the code is complete 
and the nodes in the code tree
can be listed in order of non-increasing probability with
each node being adjacent in the list to its sibling.
Specifically, Gallager showed that a prefix code is a Huffman code
if and only if it satisfies the sibling property.

As an illustration,
Figure~\ref{fig:sibling-property}
shows three expected length optimal code trees
$H_1, H_2, C$
for a source with probabilities
$\frac{1}{3}, \frac{1}{3}, \frac{1}{6}, \frac{1}{6}$.
The first two are Huffman trees but the third is not,
and all three trees have an average length of $2$.
The trees $H_1$ and $H_2$ satisfy the sibling property,
as seen by ordering their nodes as
$a,c,b,d,e,f,g$
and
$a,c,b,f,g,d,e$,
respectively.
In contrast,
any ordering of the nodes of tree $C$
must have $d$ and $f$ precede $e$ and $g$,
but then the sibling pairs $d,e$ and $f,g$
are not adjacent in the ordering.
Thus, $C$ does not have the sibling property.

\tikzset{every label/.style={xshift=0ex, text width=6ex, align=center, yshift=-6ex,
                             inner sep=1pt, font=\footnotesize, text=red}}
\begin{figure}[h]
\begin{center}
\begin{forest}
for tree={where n children={0}{ev}{iv},l+=8mm,
if n=1{edge label={node [midway, left] {0} } }{edge label={node [midway, right] {1} } },}
      [$1$, label=a, baseline
         [$\frac{1}{3}$, label=b]
         [$\frac{2}{3}$, label=c
             [$\frac{1}{3}$, label=d]
             [$\frac{1}{3}$, label=e
                 [$\frac{1}{6}$, label=f]
                 [$\frac{1}{6}$, label=g]]]]
\end{forest}
\hspace*{1.7cm}
\begin{forest}
for tree={where n children={0}{ev}{iv},l+=8mm,
if n=1{edge label={node [midway, left] {0} } }{edge label={node [midway, right] {1} } },}
      [$1$, label=a, baseline
         [$\frac{1}{3}$, label=b
           [$\frac{1}{6}$, label=d]
           [$\frac{1}{6}$, label=e]]
         [$\frac{2}{3}$, label=c
           [$\frac{1}{3}$, label=f]
           [$\frac{1}{3}$, label=g]]]
\end{forest}
\hspace*{1.7cm}
\begin{forest}
for tree={where n children={0}{ev}{iv},l+=8mm,
if n=1{edge label={node [midway, left] {0} } }{edge label={node [midway, right] {1} } },}
      [$1$, label=a, baseline
         [$\frac{1}{2}$, label=b
           [$\frac{1}{3}$, label=d]
           [$\frac{1}{6}$, label=e]]
         [$\frac{1}{2}$, label=c
           [$\frac{1}{3}$, label=f]
           [$\frac{1}{6}$, label=g]]]
\end{forest}
\end{center}

\hspace*{0.6cm} Huffman code $H_1$ 
\hspace*{2.4cm} Huffman code $H_2$
\hspace*{3.0cm} optimal code $C$
\caption{
Two Huffman trees and an optimal third code tree for a single source.
}
\label{fig:sibling-property}
\end{figure}
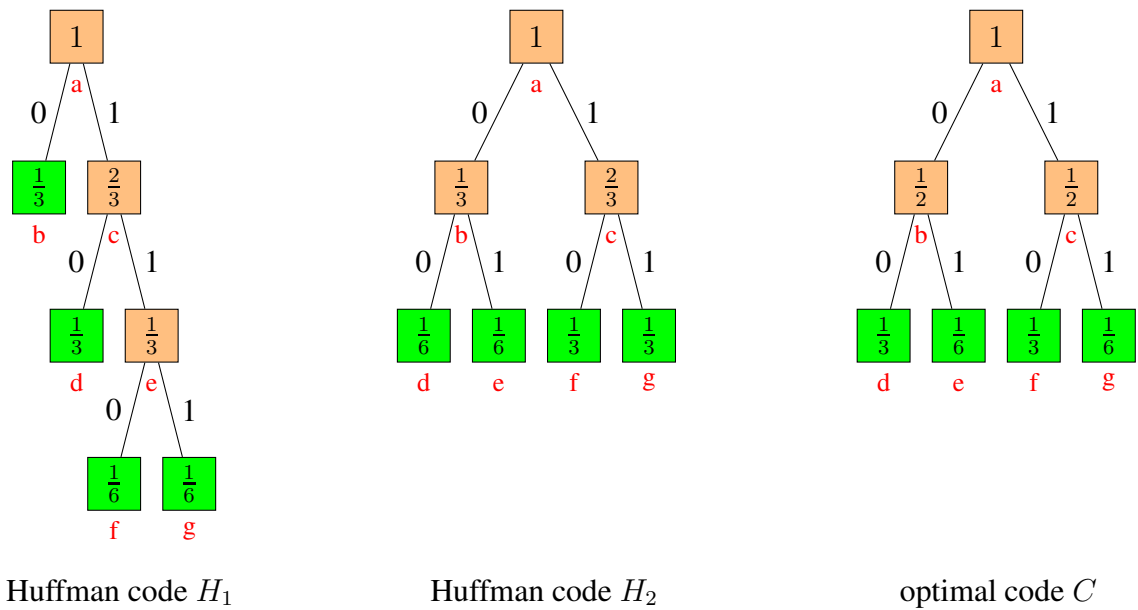

However, Huffman codes are not the only expected length optimal prefix codes.
Expected length optimal prefix codes are characterized as those which are 
length equivalent to some Huffman code for the source.
Below we note a second characterization of
expected length optimal prefix codes,
namely that such codes are 
precisely those which are complete and strongly monotone.
Strong monotonicity is a weaker condition than the sibling property,
and thus a broader class of prefix codes are strongly monotone.

\begin{definition}
Given a source with alphabet $\Alphabet$,
a prefix code $C$ is \textit{strongly monotone}
if for every $A, B \subseteq S$
with Kraft sums $2^{-i}$ and $2^{-j}$, respectively, 
for some integers $i$ and $j>i$,
we have
$P(A) \ge P(B)$.
\end{definition}

The strongly monotone property reduces to Gallager's monotone property
when the set $A$ consists of all leaf descendants of a single tree node,
and $B$ consists of all leaf descendants of a different tree node.

\begin{lemma}[{\cite[Theorem 2.4]{CoZe-strongly-monotone-ArXiv}}]
For any source and any prefix code $C$ for the source,
$C$ is expected length optimal
if and only if
$C$ is complete and strongly monotone.
\label{lem:strongly-monotone}
\end{lemma}

The following lemma notes several properties that 
are preserved under length equivalence
between prefix codes.

\begin{lemma}[{\cite{CoZe-strongly-monotone-ArXiv}}]
If two prefix codes are length equivalent,
then each of the following properties
holds for one code
if and only if
it holds for the other code:
\begin{itemize}
\item completeness
\item strong monotonicity
\item expected length optimality.
\end{itemize}
\label{lem:length_equivalent_properties}
\end{lemma}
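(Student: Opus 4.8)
The plan is to observe that each of the three listed properties is an invariant of the codeword-length function $y\mapsto l_C(y)$ together with the fixed source, and therefore cannot distinguish two length-equivalent prefix codes $C_1$ and $C_2$, which by definition satisfy $l_{C_1}(y)=l_{C_2}(y)$ for every $y\in\Alphabet$. Thus the whole lemma reduces to checking, property by property, that the defining condition is expressed purely in terms of the common data $\bigl(l_{C_1}(\cdot)=l_{C_2}(\cdot),\,P(\cdot)\bigr)$.

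For completeness I would not argue directly from tree structure, but instead route through Lemma~\ref{lem:complete_KS_1}: a prefix code is complete if and only if its Kraft sum equals $1$. Since the Kraft sum $\sum_{y\in\Alphabet}2^{-l_C(y)}$ depends only on the codeword lengths, it is the same for $C_1$ and $C_2$, so one has Kraft sum $1$ exactly when the other does, hence one is complete exactly when the other is. For strong monotonicity I would argue straight from the definition: for every $A\subseteq\Alphabet$ we have $K_{C_1}(A)=\sum_{x\in A}2^{-l_{C_1}(x)}=\sum_{x\in A}2^{-l_{C_2}(x)}=K_{C_2}(A)$, while $P(A)$ is a property of the source alone, so the implication ``$K_C(A)=2^{-i}>2^{-j}=K_C(B)\Rightarrow P(A)\ge P(B)$'' holds for $C_1$ precisely when it holds for $C_2$. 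For expected length optimality I would note that the average length $\sum_{y\in\Alphabet}l_C(y)P(y)$ is unchanged under length equivalence, so $C_1$ and $C_2$ have equal average length; since being expected length optimal means this average equals the minimum average length over all prefix codes for the source — a quantity that does not reference $C_1$ or $C_2$ — the two codes are simultaneously optimal or simultaneously not.

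There is no substantial obstacle in this argument; the only place requiring a little care is completeness, where the structural definition (every non-root node has a sibling) must first be converted to the arithmetic Kraft-sum criterion of Lemma~\ref{lem:complete_KS_1} before length equivalence can be invoked. Everything else follows because Kraft sums, average length, and subset probabilities are manifestly functions of the codeword lengths and the source distribution only.
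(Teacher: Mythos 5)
Your proposal is correct. The paper itself does not prove this lemma --- it is imported by citation from the companion paper \cite{CoZe-strongly-monotone-ArXiv} --- but your argument is the natural and complete one: each of the three properties is a function only of the length map $y\mapsto l_C(y)$ and the source distribution (completeness after converting the structural definition to the Kraft-sum criterion of Lemma~\ref{lem:complete_KS_1}, which is exactly the right move), so length-equivalent codes cannot be distinguished by any of them.
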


For a given source and a Huffman code,
if $A$ is a nonempty proper subset of the source alphabet
whose Huffman-Kraft sum has binary expansion
$K(A) = 0.b_1 b_2 \ldots$,
then we define a \textit{Huffman-Kraft partition} of $A$ to be
any sequence $A_1, A_2, \ldots$ of disjoint 
(possible empty)
subsets of $A$ whose union is $A$ and
such that $K(A_i) = b_i 2^{-i}$ 
for each $i$.
If $b_k = 1$ and $b_i=0$ for all $i \ge k+1$, then it suffices
to specify the first $k$ sets $A_1, \dots, A_k$ in a Huffman-Kraft partition.

The next three lemmas are used in later sections,
and all make use of Huffman-Kraft partitions in their proofs.

\begin{lemma}
Every nonempty proper subset of a source's alphabet
has a Huffman-Kraft partition.
\label{lem:kraft_partition}
\end{lemma}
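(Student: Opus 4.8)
The plan is to construct a Huffman-Kraft partition greedily, processing the bits $b_1, b_2, \ldots$ of the binary expansion $K(A) = 0.b_1 b_2 \ldots$ from most significant to least significant, and at each step $i$ with $b_i = 1$ peeling off from the not-yet-assigned symbols a subset $A_i \subseteq A$ whose Huffman-Kraft sum is exactly $2^{-i}$. First I would recall that $K(A) = \sum_{x\in A} 2^{-l_H(x)}$ is a finite sum of negative integer powers of $2$, so its binary expansion terminates: there is a largest index $k$ with $b_k = 1$ and $b_i = 0$ for all $i > k$. So it suffices to produce disjoint sets $A_1,\dots,A_k$ with union $A$ and $K(A_i) = b_i 2^{-i}$. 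Note also that because $A$ is a proper subset and a Huffman code is complete (Kraft sum $1$), we have $K(A) < 1$, so $b_0 = 0$ and no ``carry out'' occurs; and because $A$ is nonempty, $K(A) > 0$.

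The key invariant I would maintain is: after deciding $A_1,\dots,A_{i}$, the remaining set $R_i = A \setminus (A_1 \cup \dots \cup A_{i})$ satisfies $K(R_i) = 0.\,0\ldots0\,b_{i+1} b_{i+2}\ldots$ (i.e.\ the tail of the expansion of $K(A)$ starting at bit $i+1$), equivalently $K(R_i) = K(A) - \sum_{j\le i} b_j 2^{-j}$. The heart of the argument is the inductive step: given that $K(R_{i-1})$ has binary expansion with leading bit $b_i$ at position $i$, if $b_i = 0$ set $A_i = \emptyset$; if $b_i = 1$, I must find a subset $A_i \subseteq R_{i-1}$ with $K(A_i) = 2^{-i}$. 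To do this, observe every symbol $x \in R_{i-1}$ has $l_H(x) \ge i$ — otherwise $2^{-l_H(x)} > 2^{-i}$ would force a bit of $K(R_{i-1})$ strictly above position $i$, contradicting that $2^{-i}$ is the leading term. Now greedily add symbols of $R_{i-1}$ to $A_i$ in order of non-decreasing codeword length (largest Huffman-Kraft contribution first), stopping exactly when the running sum reaches $2^{-i}$; I must argue this stopping point is hit exactly rather than overshot. This follows because at each stage the partial sum is a multiple of the smallest power $2^{-l}$ added so far and stays $\le 2^{-i}$ as long as it is $< 2^{-i}$: the standard ``binary packing'' fact that a sum of powers $2^{-l_1}\ge 2^{-l_2}\ge\cdots$ whose total is $\ge 2^{-i}$ must have a prefix summing to exactly $2^{-i}$, provable by induction on the multiset of lengths (pair up the two smallest equal terms into one term of length one less, or peel a single term equal to $2^{-i}$). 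That the total available is $\ge 2^{-i}$ is exactly $K(R_{i-1}) \ge 2^{-i}$, which holds since $b_i = 1$.

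After the step, $R_i = R_{i-1}\setminus A_i$ has $K(R_i) = K(R_{i-1}) - 2^{-i}$, which is the tail starting at position $i+1$, restoring the invariant. Iterating to $i = k$ leaves $K(R_k) = 0$, hence $R_k = \emptyset$ since all Huffman-Kraft contributions are strictly positive; thus $A_1 \cup \dots \cup A_k = A$ as required, and setting $A_i = \emptyset$ for $i > k$ completes the partition. The main obstacle is the ``exact hit'' claim in the greedy step — making rigorous that the greedy prefix sum lands precisely on $2^{-i}$ — but this reduces cleanly to the elementary binary-representation lemma about prefixes of sorted sums of dyadic rationals, which I would isolate as a short sub-claim and prove by induction on the number of terms (or on the maximum length), using the pairing of smallest equal terms exactly as in the converse Kraft inequality / Huffman merging argument.
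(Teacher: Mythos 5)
Your proof is correct. It follows the same basic strategy as the paper's proof --- repeatedly extracting from $A$ a subset whose Huffman-Kraft sum is exactly one dyadic term of the expansion of $K(A)$ --- but the two arguments differ in direction and in how the ``exact extraction'' step is justified. The paper inducts on the number of ones in the expansion and peels off the \emph{least} significant bit first: it restricts to $X=\{u\in A: K(u)\le 2^{-k}\}$, observes $K(X)\ge 2^{-k}$ because $K(A-X)$ is a multiple of $2^{-k+1}$, and then takes a subset $Y\subseteq X$ of minimum Kraft sum among those with $K(Y)\ge 2^{-k}$, proving $K(Y)=2^{-k}$ by a divisibility/minimality contradiction. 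You instead sweep from the \emph{most} significant bit downward, maintaining the invariant that the unassigned remainder has Kraft sum equal to the tail of the expansion, which automatically forces $l_H(x)\ge i$ for all remaining symbols; the exact extraction is then your sorted-greedy prefix-sum claim (partial sums in non-increasing order of Kraft contribution are always multiples of the last term added and hence cannot overshoot $2^{-i}$). Both mechanisms are sound; yours avoids the paper's preliminary restriction to small-Kraft elements because the invariant supplies that bound for free, while the paper's minimal-subset argument avoids having to order the elements. One small wording caveat: your justification that every $x\in R_{i-1}$ has $l_H(x)\ge i$ is better phrased as $K(R_{i-1})\ge K(x)=2^{-l_H(x)}\ge 2^{-(i-1)}$ contradicting $K(R_{i-1})=\sum_{j\ge i}b_j2^{-j}<2^{-(i-1)}$, rather than as a single element ``forcing a bit''; the substance is the same.
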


\newcommand{\ProofOfLemmaKraftPartition}{
\begin{proof}[Proof of Lemma \ref{lem:kraft_partition}]
Let $A$ be a nonempty proper subset of a source's alphabet and
let $0.a_1 a_2 \dots a_k$ be the binary expansion of the Huffman-Kraft sum $K(A)$,
where $a_k = 1$.

We use induction on $N = a_1 + \dots + a_k$.
Since $A$ is nonempty, $K(A)>0$, so $N \ge 1$.
First, suppose $N=1$.
Then $a_i = 0$ for all $i \in \{1,\dots,k-1\}$,
so $K(A) = 2^{-k}$.
Setting $A_k = A$
and $A_i = \emptyset$ for $i \in \{1,\dots,k-1\}$
gives a Huffman-Kraft partition of $A$.

Now let $m\ge 2$ and suppose the lemma holds for $N=m-1$.
We will next show the lemma must hold for $N=m$.
Let $X = \{ u \in A : K(u) \le 2^{-k} \}$.
Then $K(A - X)$ is an integer multiple of $2^{-k+1}$,
since for all $v \in A-X$
we have $K(v) = 2^{-k+i}$ for some $i \ge 1$.
Thus,
the binary expansion of $K(A-X)$ has $0$s
in all positions $i \ge k$.
Therefore, since $a_k=1$ and $K(A) = K(X)+K(A-X)$, 
the binary expansion of $K(X)$ has a $1$ in position $k$,
so $K(X) \ge 2^{-k}$.

Let $Y$ be a subset of $X$ whose Huffman-Kraft sum is minimum
among all subsets of $X$ with Huffman-Kraft sum at least $2^{-k}$.
We will show that $K(Y) = 2^{-k}$.
Suppose to the contrary that $K(Y) > 2^{-k}$.
Let $y \in Y$ be an element with minimum probability
among the elements of $Y$.
Since $y \in X$, we know $K(y) \le 2^{-k}$,
and also $K(y) = 2^{-i}$ for some integer $i$.
Therefore, $2^{-k}$ is an integer multiple of $K(y)$.
Also, $K(Y)$ is an integer multiple of $K(y)$,
since for all $u \in Y$
there exists an $i \ge 0$ such that $K(u) = 2^i K(y)$.
Therefore,
$K(Y) \ge 2^{-k} + K(y)$
since $K(Y) > 2^{-k}$.
But then $K(Y-\{y\}) = K(Y)-K(y) \ge 2^{-k}$
and $K(Y-\{y\}) = K(Y)-K(y) < K(Y)$,
so the Huffman-Kraft sum of $Y-\{y\}$ is at least $2^{-k}$ but is smaller
than that of $Y$,
contradicting the minimality assumption on $Y$.
Therefore, $K(Y) = 2^{-k}$.

Set $A_k = Y$.
Since $m \ge 2$, the set $A - A_k$ is nonempty,
and since $A$ is a proper subset of the source's alphabet, so is $A-A_k$.
Also, the Huffman-Kraft sum 
$K(A-A_k) = K(A) - K(A_k) = K(A) - 2^{-k}$
has exactly $m-1$ ones in its binary expansion.
Thus, the induction hypothesis implies that $A - A_k$ has a
Huffman-Kraft partition $A_1,\dots,A_{k-1}$.
Then $A_1,\dots, A_k$ is a Huffman-Kraft partition of $A$.
\end{proof}
}


\begin{lemma}
Given a source with alphabet $\Alphabet$,
suppose there is a subset $U \subseteq \Alphabet$
whose Huffman-Kraft sum $K(U)$
is an integer multiple of $2^{-i}$
for some integer $i \ge 0$.
If $A \subseteq U$
with $0 < K(A) < 2^{-j}$ for some integer $j \ge i$,
then there exists a subset $B \subseteq U - A$
with $K(A \cup B) = 2^{-j}$.
\label{lem:kraft_completion}
\end{lemma}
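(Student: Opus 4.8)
The plan is to reduce the statement to the Huffman-Kraft partition supplied by Lemma~\ref{lem:kraft_partition}, applied to the complementary set $U - A$, and then to read off the desired completion $B$ from the low-order bits of the binary expansion of $K(U-A)$.

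First I would establish the facts needed to invoke that lemma. Since $A \subseteq U$ and $K(A) > 0$, we have $K(U) = K(A) + K(U - A) \ge K(A) > 0$, so the nonnegative integer multiple of $2^{-i}$ that equals $K(U)$ is in fact positive; hence $K(U) \ge 2^{-i} \ge 2^{-j}$, using $j \ge i$. Combined with $K(A) < 2^{-j}$, this gives $K(U-A) = K(U) - K(A) > 0$, so $U - A$ is nonempty, and $U - A$ is a proper subset of $\Alphabet$ because it is disjoint from the nonempty set $A \subseteq \Alphabet$. Moreover, because $j \ge i$, the quantity $K(U)$ is an integer multiple of $2^{-i} = 2^{j-i}\cdot 2^{-j}$, hence an integer multiple of $2^{-j}$; write $K(U) = N\,2^{-j}$ with $N \ge 1$ an integer.

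Next I would apply Lemma~\ref{lem:kraft_partition} to $U - A$ to obtain a Huffman-Kraft partition $P_1, P_2, \ldots$ of $U - A$, so that $K(P_l) = \beta_l\,2^{-l}$ where $K(U-A) = 0.\beta_1\beta_2\cdots$ is the (finite) binary expansion. Set $B = \bigcup_{l > j} P_l \subseteq U - A$. The key computation is that $K(B) = \sum_{l>j}\beta_l\,2^{-l}$ is exactly the part of $K(U-A)$ below $2^{-j}$: indeed $\sum_{l \le j}\beta_l\,2^{-l}$ is an integer multiple of $2^{-j}$, while $\sum_{l>j}\beta_l\,2^{-l} < \sum_{l>j}2^{-l} = 2^{-j}$, so $K(B)$ is the unique representative of $K(U-A)$ in $[0,2^{-j})$. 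Since $K(U-A) = N\,2^{-j} - K(A) \equiv 2^{-j} - K(A) \pmod{2^{-j}}$ (a valid representative because $0 < K(A) < 2^{-j}$), we conclude $K(B) = 2^{-j} - K(A)$. As $A$ and $B$ are disjoint, $K(A \cup B) = K(A) + K(B) = 2^{-j}$, which is the claim.

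The set-theoretic bookkeeping is routine; the one step requiring care is the binary-arithmetic observation in the last paragraph, namely that splitting the finite expansion of $K(U-A)$ at position $j$ isolates precisely $2^{-j} - K(A)$. This is exactly where the hypothesis $j \ge i$ is used: it is what guarantees that $K(U)$ — and hence the ``high-order'' part of $K(U-A)$ — is an integer multiple of $2^{-j}$, so that the ``low-order'' part is forced to equal $2^{-j} - K(A)$.
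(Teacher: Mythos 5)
Your proposal is correct and follows essentially the same route as the paper's proof: both apply Lemma~\ref{lem:kraft_partition} to $U-A$ and take $B$ to be the union of the partition blocks with index exceeding $j$, then use the fact that $K(U)$ (and the discarded blocks) are integer multiples of $2^{-j}$ to force $K(A\cup B)=2^{-j}$. The only cosmetic difference is that you pin down $K(B)=2^{-j}-K(A)$ exactly via a mod-$2^{-j}$ argument, whereas the paper bounds $0<K(A\cup B)<2^{-(j-1)}$ and invokes divisibility by $2^{-j}$; the two endgames are equivalent.
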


\newcommand{\ProofOfLemmaKraftCompletion}{
\begin{proof}[Proof of Lemma \ref{lem:kraft_completion}]
Since $K(A) > 0$
and $K(U - A) = K(U) - K(A) > 2^{-i} - 2^{-j} \ge 0$,
Lemma~\ref{lem:kraft_partition} shows
there exist Huffman-Kraft partitions
$A_1,A_2,\dots$ of $A$,
and $B_1,B_2,\dots$ of $U - A$.
Let 
\begin{align}
B = \displaystyle\bigcup_{k>j} B_k.
\end{align}
Then 
$K(B) = K(B_{j+1}) + K(B_{j+2}) + \dots 
      \le 2^{-(j+1)} + 2^{-(j+2)} + \dots = 2^{-j}$,
so $K(A \cup B) = K(A) + K(B) < 2^{-(j-1)}$.
Since $K(U) = K(A \cup B) + K(B_1 \cup \dots \cup B_j)$,
and both $K(U)$ and $K(B_1 \cup \dots \cup B_j)$
are integer multiples of $2^{-j}$,
it must be the case that
$K(A \cup B)$ is an integer multiple of $2^{-j}$ as well.
Then $0 < K(A \cup B) < 2^{-(j-1)}$ implies
$K(A \cup B) = 2^{-j}$.
\end{proof}
}


\begin{lemma}
Given a source with alphabet $\Alphabet$,
suppose $A,B \subseteq \Alphabet$
with Huffman-Kraft sums satisfying
$2K(B) \le 2^{-i} \le K(A)$ for some integer $i$.
Then $P(A) \ge P(B)$,
with equality possible only if
$K(A) = 2K(B)$.
\label{lem:a_2b_inequality}
\end{lemma}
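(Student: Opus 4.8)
The plan is to reduce the statement to strong monotonicity of Huffman codes (Lemma~\ref{lem:strongly-monotone}), which compares only sets whose Huffman-Kraft sums are powers of two. Write $c=2^{-i}$, so the hypothesis reads $2K(B)\le c\le K(A)$. Two boundary cases are immediate. If $B=\emptyset$, then $P(B)=0<P(A)$ since $K(A)\ge c>0$ forces $A\ne\emptyset$ (source probabilities being positive), and $K(A)\ge c>0=2K(B)$, so the equality clause is vacuously true. If $K(A)=1$, then $A=\Alphabet$ and $P(A)=1$, while $K(B)\le c/2\le 1/2<1$ forces $B\ne\Alphabet$, so $P(A)>P(B)$ and again the equality clause is vacuous. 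So I will assume from now on that $B\ne\emptyset$ and $K(A)<1$, i.e.\ $A$ is a nonempty proper subset of $\Alphabet$; in particular $c\le K(A)<1$ gives $i\ge 1$.

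First I will carve a large power-of-two piece out of $A$. By Lemma~\ref{lem:kraft_partition}, $A$ has a Huffman-Kraft partition $A_1,A_2,\dots$ with $K(A_k)=b_k2^{-k}$, where $0.b_1b_2\dots$ is the terminating binary expansion of $K(A)$. Some $b_j=1$ with $j\le i$: otherwise $K(A)=\sum_{k>i}b_k2^{-k}$ would be a finite sum strictly smaller than $\sum_{k>i}2^{-k}=c$, contradicting $c\le K(A)$. Fixing such a $j$ and putting $A'=A_j\subseteq A$, we get $K(A')=2^{-j}\ge c$. Next I will pad $B$ up to a power of two: since $0<K(B)\le 2^{-(i+1)}$, either $K(B)=2^{-(i+1)}$ and I take $B^\ast=B$, or $K(B)<2^{-(i+1)}$ and Lemma~\ref{lem:kraft_completion}, applied with $U=\Alphabet$ (whose Huffman-Kraft sum $1$ is an integer multiple of $2^{0}$), yields a set $B^\ast\supseteq B$ with $B^\ast\ne B$ and $K(B^\ast)=2^{-(i+1)}$. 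In either case $B\subseteq B^\ast$, $K(B^\ast)=2^{-(i+1)}$, and $P(B)\le P(B^\ast)$, with equality exactly when $K(B)=2^{-(i+1)}$. Now $K(A')=2^{-j}$ and $K(B^\ast)=2^{-(i+1)}$ are powers of two with $2^{-j}\ge c=2\cdot 2^{-(i+1)}>2^{-(i+1)}$, so strong monotonicity gives $P(A')\ge P(B^\ast)$, whence $P(A)\ge P(A')\ge P(B^\ast)\ge P(B)$, which is the inequality claimed.

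For the equality clause, suppose $P(A)=P(B)$, so every inequality just displayed is an equality. Positivity of the source probabilities turns $P(A)=P(A')$ into $A=A'$, hence $K(A)=2^{-j}$, and turns $P(B^\ast)=P(B)$ into $B^\ast=B$, hence $K(B)=2^{-(i+1)}$. If $j=i$ then $K(A)=2^{-i}=2K(B)$, as desired. If $j<i$, then $K(B)=2^{-(i+1)}\le 2^{-(j+2)}<2^{-(j+1)}$, so Lemma~\ref{lem:kraft_completion} gives $B^{\ast\ast}\supsetneq B$ with $K(B^{\ast\ast})=2^{-(j+1)}$; then $P(B)<P(B^{\ast\ast})$, and since $2^{-(j+1)}<2^{-j}=K(A)$ are powers of two, strong monotonicity gives $P(A)\ge P(B^{\ast\ast})>P(B)$, contradicting $P(A)=P(B)$. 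Hence $j=i$ and $K(A)=2K(B)$.

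I expect the only step needing genuine thought to be the extraction in the second paragraph --- producing inside $A$ a subset whose Huffman-Kraft sum is a power of two that is at least $c$; this is exactly what the Huffman-Kraft partition (Lemma~\ref{lem:kraft_partition}) is for, together with the elementary observation that a terminating binary expansion representing a value $\ge 2^{-i}$ must carry a $1$ in some position $\le i$. Everything else is bookkeeping with Lemmas~\ref{lem:kraft_completion} and~\ref{lem:strongly-monotone}.
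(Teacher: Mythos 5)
Your proof is correct and follows essentially the same route as the paper's: extract a dyadic-Kraft-sum piece of $A$ via Lemma~\ref{lem:kraft_partition}, pad $B$ up to a dyadic Kraft sum via Lemma~\ref{lem:kraft_completion}, and compare the two using strong monotonicity (Lemma~\ref{lem:strongly-monotone}). The only immaterial difference is in the equality clause, where the paper pads $B$ one further step (to Kraft sum $2^{-(k+1)}$, one level below the extracted piece of $A$) and identifies which additive slack term is positive, whereas you dispose of equality by a short separate contradiction.
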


\newcommand{\ProofOfLemmaAtwoBInequality}{
\begin{proof}[Proof of Lemma \ref{lem:a_2b_inequality}]
If $B$ is empty,
then $P(B)=0$ and the result follows, so assume $B$ is nonempty.

If $K(B) < 2^{-(i+1)}$,
then by Lemma~\ref{lem:kraft_completion}
(where $A$, $B$, $U$, $i$, $j$ in Lemma~\ref{lem:kraft_completion}
 respectively correspond to $B$, $B'$, $S$, $0$, $i+1$ here)
there exists a subset $B' \subseteq \Alphabet - B$
with $K(B') = 2^{-(i+1)} - K(B)$.
Alternatively, if $K(B) = 2^{-(i+1)}$
then we will let $B' = \emptyset$.
In either case, $K(B \cup B') = 2^{-(i+1)}$.

By Lemma~\ref{lem:kraft_partition},
there exists a Huffman-Kraft partition
$A_1,A_2,\dots$ of $A$.
Let $k$ be the smallest integer such that $A_k \neq \emptyset$.
Since $K(A) \ge 2^{-i}$,
we have $k \le i$.

If $k < i$,
then since $0 < K(B \cup B') = 2^{-(i+1)} < 2^{-(k+1)}$,
Lemma~\ref{lem:kraft_completion} implies
(where $A$, $B$, $U$, $i$, $j$ in Lemma~\ref{lem:kraft_completion}
 respectively correspond to $B\cup B'$, $E$, $S$, $0$, $i+1$ here)
there exists a subset $E \subseteq \Alphabet - (B \cup B')$
such that $K(E) = 2^{-(k+1)} - K(B\cup B')$.
If $k = i$,
then instead let $E = \emptyset$.
In either case, $K(B \cup B' \cup E) = 2^{-(k+1)}$.

We have
\begin{align}
P(A)
&= P(A_k) + P(A - A_k)\\
&\ge P(A_k)\label{eq:7703}\\
&\ge P(B \cup B' \cup E) \label{eq:7701}\\
&= P(B) + P(B') + P(E) \label{eq:7702}\\
&\ge P(B). \label{eq:7700}
\end{align}
where
\eqref{eq:7701} follows 
the Huffman-Kraft sum equality
$K(A_k) = 2^{-k} = 2K(B \cup B' \cup E)$
since Huffman codes are strongly monotone by
Lemma~\ref{lem:strongly-monotone};
and
\eqref{eq:7702} follows since $B$, $B'$, and $E$ are disjoint. 

Now suppose $K(A) > 2K(B)$.
Then either $K(B) < 2^{-(i+1)}$ or $K(A) > 2^{-i}$.
In the first case,
we have $P(B') > 0$ since $K(B\cup B') = 2^{-(i+1)}$,
so the inequality in \eqref{eq:7700} becomes strict.
Now consider two subcases of the second case.
If $K(A) < 2^{-(i-1)}$,
then $i-1 < k \le i$,
so $k = i$ which implies
$K(A - A_k) = K(A - A_i) = K(A) - K(A_i) = K(A) - 2^{-i} > 0$,
and thus the inequality in \eqref{eq:7703} becomes strict.
Otherwise,
if $K(A) \ge 2^{-(i-1)}$,
then $k < i$,
and so $K(E) = 2^{-(k+1)} - K(B \cup B') = 2^{-(k+1)} - 2^{-(i+1)} > 0$.
Thus $P(E) > 0$,
and the inequality in \eqref{eq:7700} becomes strict.
\end{proof}
}



\clearpage

\section{Huffman codes competitively dominate Shannon-Fano codes}
\label{sec:Huffman-over-SF}

For dyadic sources,
there can be only one Huffman code
(up to length equivalence)
and such a code is length equivalent to a
Shannon-Fano code
~\cite[Theorem 5.3.1]{Cover-Thomas-book-2006}, 
and thus the competitive advantage 
of either code over the other code is $\Advantage = 0$.
In this section,
Theorem~\ref{thm:Huffman-beats-SF} shows 
that for all non-dyadic sources,
every Huffman code always strictly competitively dominates 
every Shannon-Fano code.

This result shows that Shannon-Fano codes are never competitively optimal
for non-dyadic sources.
Actually,
as the source size grows,
Huffman codes themselves 
are usually not competitively optimal either,
as shown in Section~\ref{sec:not-optimal}.

The following theorem shows that when a source is not dyadic,
every Huffman code always has a positive competitive advantage over
the Shannon-Fano code.

\newcommand{\LengthHuffman}{l_{\text{H}}(y)}
\newcommand{\LengthShannonFano}{l_{\text{SF}}(y)}

\begin{theorem}
Huffman codes strictly competitively dominate 
Shannon-Fano codes 
if and only if the source is not dyadic.
\label{thm:Huffman-beats-SF}
\end{theorem}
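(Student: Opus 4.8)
The plan is to prove both directions. For the easy direction, if the source is dyadic then (as noted in the text) a Huffman code and a Shannon-Fano code are length equivalent, so $W = L = \emptyset$ and $\Advantage = 0$; hence a Huffman code does not \emph{strictly} competitively dominate a Shannon-Fano code. The substance is the converse: assuming the source is not dyadic, I must exhibit that $P(W) > P(L)$ where $W$ and $L$ are the win/loss sets for the Huffman code $H$ against the Shannon-Fano code $SF$. The first observation is pointwise: for every symbol $y$, the Huffman length satisfies $l_H(y) \le \lceil \log_2 \tfrac{1}{P(y)} \rceil = l_{SF}(y)$ does \emph{not} hold in general (this is false — Huffman lengths can exceed $\lceil\log_2 1/P(y)\rceil$). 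So I cannot hope for $L = \emptyset$ directly. Instead the right framing is a Kraft-sum comparison: both $H$ and $SF$ have codeword lengths, and I want to compare $P(W)$ with $P(L)$ using strong monotonicity of $H$ (Lemma~\ref{lem:strongly-monotone}) applied to the sets $W$ and $L$.

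Here is the approach I would carry out. Since $H$ is complete, $\sum_{x} 2^{-l_H(x)} = 1$; the Shannon-Fano code has $\sum_x 2^{-l_{SF}(x)} \le 1$ by Kraft, with equality iff $SF$ is complete, which for Shannon-Fano happens iff the source is dyadic. So for a non-dyadic source $\sum_x 2^{-l_{SF}(x)} < 1$. Now split the alphabet into $W$, $L$, $T$. On $T$ the two codes agree, contributing equally to both Kraft sums. On $W$ we have $l_H < l_{SF}$, so $2^{-l_H(x)} > 2^{-l_{SF}(x)}$ termwise; on $L$ we have $2^{-l_H(x)} < 2^{-l_{SF}(x)}$. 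Writing $K_H(\cdot)$ and $K_{SF}(\cdot)$ for the respective Kraft sums of a subset, subtracting the two global Kraft identities gives
\begin{align*}
\big(K_H(W) - K_{SF}(W)\big) - \big(K_{SF}(L) - K_H(L)\big) = 1 - \textstyle\sum_x 2^{-l_{SF}(x)} > 0,
\end{align*}
so $K_H(W) - K_{SF}(W) > K_{SF}(L) - K_H(L) \ge 0$. In particular $W \ne \emptyset$, and more quantitatively $K_H(W) > K_H(L) + \big(K_{SF}(W) - K_{SF}(L)\big)$. The key remaining step is to relate these Kraft-sum inequalities to the \emph{probability} inequality $P(W) > P(L)$, which is where strong monotonicity of the Huffman code enters: since $SF$ is monotone, a Shannon-Fano win for $H$ at $y$ (i.e. $l_H(y) < l_{SF}(y)$) forces $P(y)$ to be comparatively large, and a loss forces it comparatively small, via the relation between $l_{SF}(y)$ and $\log_2 1/P(y)$.

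The cleanest route I foresee: for each $x \in W$, because $l_H(x) \le l_{SF}(x) - 1 \le \log_2 \tfrac{1}{P(x)}$, we get $P(x) \le 2^{-l_H(x)}$, i.e. $P(W) \le K_H(W)$; conversely for each $x \in L$, since $l_{SF}(x) \ge l_H(x) + 1$ and $l_{SF}(x) < \log_2 \tfrac{1}{P(x)} + 1$, we get $l_H(x) < \log_2 \tfrac{1}{P(x)}$, hence $2^{-l_H(x)} > P(x)$ again — that only gives $P(L) \le K_H(L)$, the wrong direction. So I need the strong-monotonicity comparison rather than these crude per-symbol bounds. The right move is to invoke Lemma~\ref{lem:a_2b_inequality} or Lemma~\ref{lem:strongly-monotone} with $A$ built from $W$ and $B$ built from $L$: round $K_H(W)$ down to a power of two and $K_H(L)$ up to a (strictly smaller) power of two — possible precisely because the strict Kraft gap above forces $K_H(W) > 2 K_H(L)$ after accounting for the $SF$ terms, or at least $K_H(W)$ exceeds the next power of two above $K_H(L)$ — and then apply Huffman-Kraft partitions (Lemma~\ref{lem:kraft_partition}) plus strong monotonicity to conclude $P(W) \ge P(\text{that power-of-two block}) > P(L)$. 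The main obstacle, and the step requiring the most care, is establishing a \emph{strict} power-of-two separation between $K_H(W)$ and $K_H(L)$ from the strict inequality $1 - \sum_x 2^{-l_{SF}(x)} > 0$ — the gap $1 - \sum 2^{-l_{SF}(x)}$ is a positive integer multiple of some $2^{-m}$ (namely $m = \max_x l_{SF}(x)$), which should be exactly what is needed to push $K_H(W)$ past the relevant dyadic threshold and invoke the strong-monotonicity lemmas to finish with $P(W) > P(L)$.
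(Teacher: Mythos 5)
Your easy direction is correct, and you have rightly identified the general machinery the hard direction needs (Huffman--Kraft partitions, strong monotonicity, and Lemma~\ref{lem:a_2b_inequality} applied to the win and loss sets). The gap is at the crux. To invoke Lemma~\ref{lem:a_2b_inequality} you must exhibit an integer $i$ with $2K_H(L)\le 2^{-i}\le K_H(W)$ (in your notation, $W$ being the symbols where the Huffman code is strictly shorter), and your proposed source for this separation --- the identity $\bigl(K_H(W)-K_{SF}(W)\bigr)-\bigl(K_{SF}(L)-K_H(L)\bigr)=1-\sum_x 2^{-l_{SF}(x)}>0$ --- does not deliver it. Substituting the termwise facts $K_{SF}(L)\ge 2K_H(L)$ and $K_{SF}(W)\ge 0$ into that identity yields only $K_H(W)>K_H(L)+K_{SF}(W)\ge K_H(L)$, and a bare inequality between Huffman--Kraft sums is far too weak: Lemma~\ref{lem:u_v_onethird} shows that a set with strictly smaller Huffman--Kraft sum can still carry up to $\frac{1}{3}$ \emph{more} probability. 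The paper obtains the required power-of-two separation only through a substantially harder argument: it takes Huffman--Kraft partitions of both sets, lets $k$ be the first nonempty level of the partition of $L$, shows $k\ge 2$, shows $W_{k-1}=\emptyset$, and then proves $K_H(W)\ge 2^{-(k-2)}$ by contradiction (completing $L$ together with $L_k$ to sets $E$ and $F$ of prescribed Kraft sums $2^{-k}$ and $2^{-(k-1)}$ and deriving a probability contradiction), which finally forces $k\ge 3$ and $2K_H(L)<2^{-(k-2)}\le K_H(W)$. Nothing in your sketch supplies this step, and the dyadic structure of the Shannon--Fano Kraft deficit alone cannot: that deficit may be as small as $2^{-\max_x l_{SF}(x)}$, which says nothing about whether a power of two separates $2K_H(L)$ from $K_H(W)$.

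A secondary but consequential slip: for $x\in L$ (Huffman longer) you wrote $l_{SF}(x)\ge l_H(x)+1$, but the defining condition of $L$ gives $l_{SF}(x)\le l_H(x)-1$, whence $P(x)\ge 2^{-l_{SF}(x)}\ge 2\cdot 2^{-l_H(x)}$, i.e.\ $P(L)\ge 2K_H(L)$. Together with $P(x)<2^{-l_H(x)}$ on $W$ and $2^{-l_H(x)}\le P(x)<2\cdot 2^{-l_H(x)}$ on $T$, these three bounds are not the dead end you declare them to be; they are precisely the inputs that the paper's contradiction argument feeds into the Kraft-sum analysis to manufacture the power-of-two threshold. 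In other words, the probability bounds must be used \emph{inside} the Kraft-sum argument, not discarded in favor of a purely Kraft-theoretic shortcut and then reintroduced at the end. As written, the proposal's central quantitative claim is asserted rather than proved, and the route proposed for proving it fails.
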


\begin{proof}
Suppose a non-dyadic source has alphabet $\Alphabet$
and a Huffman code $H$.
By Lemma~\ref{lem:strongly-monotone},
Huffman codes are strongly monotone,
so Huffman-Kraft sums obey the strong monotonicity condition.

Denote the Huffman and Shannon-Fano codeword lengths for each $y\in\Alphabet$
by $\LengthHuffman$ and $\LengthShannonFano$, respectively.
Let
$W = \{i\in\Alphabet : \LengthShannonFano  < \LengthHuffman \}$ 
and 
$L = \{i\in\Alphabet : \LengthShannonFano  > \LengthHuffman  \}$
and
$T = \Alphabet - (W \cup L)$,
as in \eqref{eq:7262b}--\eqref{eq:7262}.
It suffices to show $P(W) < P(L)$.

If $y \in W$ then 
\begin{align}
\log_2 \frac{1}{P(y)} \le \ceil*{\log_2 \frac{1}{P(y)}}
&= \LengthShannonFano  \le \LengthHuffman -1, 
\end{align}
and if $y \in L$ then
\begin{align}
\log_2 \frac{1}{P(y)} 
&> \ceil*{\log_2 \frac{1}{P(y)}} - 1
= \LengthShannonFano  -1 
\ge \LengthHuffman. 
\end{align}
Therefore, 
probabilities and Huffman-Kraft sums of wins, losses, and ties are bounded as
\begin{alignat}{2}
P(y) &\ge 2 \cdot 2^{- \LengthHuffman } = 2K(y) &&\ \ \ \ \text{if } y \in W    \\ 
P(y) &< 2^{- \LengthHuffman } = K(y) &&\ \ \ \ \text{if } y \in L    \\
K(y) &= 2^{- \LengthHuffman } \le P(y) < 2 \cdot 2^{- \LengthHuffman } = 2K(y) &&\ \ \ \ \text{if } y \in T. 
\end{alignat}
Thus,
for any nonempty subset $A \subseteq \Alphabet$,
\begin{alignat}{2}
P(A) &\ge 2K(A) &&\ \ \ \ \text{if } A \subseteq W   \label{eq:208} \\
P(A) &< K(A) &&\ \ \ \ \text{if } A \subseteq L   \label{eq:209} \\
K(A) &\le P(A) < 2K(A) &&\ \ \ \ \text{if } A \subseteq T, \label{eq:210}
\end{alignat}
since $P(A) = \displaystyle\sum_{y \in A} P(y)$
and $K(A) = \displaystyle\sum_{y \in A} K(y)$
for any subset $A \subseteq \Alphabet$.

Suppose that $L = \emptyset$.
Then from \eqref{eq:208} and \eqref{eq:210} we have
$P(y) \ge K(y)$ for all $y \in \Alphabet$.
Since $K(y)$ is an integer power of $1/2$
for all $y \in \Alphabet$,
there exists at least one element $y \in \Alphabet$
with $P(y) > K(y)$,
or else the source would be dyadic.
But then
\begin{align}
1
&= \sum_{y \in \Alphabet} P(y)
> \sum_{y \in \Alphabet} K(y)
= 1,
\end{align}
which is a contradiction.
Thus, in fact
$L \neq \emptyset$,
and therefore $P(L) > 0$.
This implies $P(W) < 1$.

If $W = \emptyset$
then $P(W) - P(L) < 0$,
and we are done.
Suppose $W \neq \emptyset$.
By Lemma~\ref{lem:kraft_partition},
there exist Huffman-Kraft partitions
$W_1,W_2,\dots$ and $L_1,L_2,\dots$
of $W$ and $L$, respectively.
Let $k$ be the smallest integer
such that
$W_k \neq \emptyset$.

If $k=1$,
then $W_1 \ne \emptyset$, so $K(W_1) = \frac{1}{2}$
and therefore we get the contradiction that
$1 > P(W_1) \ge 2 K(W_1) = 1$ by \eqref{eq:208}.
Thus $k \ge 2$.

We will use the following fact in the remainder of the proof.
If $A \subseteq \Alphabet$
and $K(A) = 2^{-(k-1)}$,
then
\begin{align}
P(A) &\ge P(W_k)\label{eq:7704}\\
&\ge 2K(W_k)\label{eq:7705}\\
&= 2^{-(k-1)}\label{eq:7706}\\
&= K(A) \label{eq:7707}
\end{align}
where
\eqref{eq:7704} follows from strong monotonicity;
\eqref{eq:7705} follows from \eqref{eq:208};
and
\eqref{eq:7706} follows from $P(W_k) = 2^{-k}$ since $W_k \ne \emptyset$.
Also, $L_{k-1} = \emptyset$, for otherwise
$K(L_{k-1}) = 2^{-(k-1)}$,
which by \eqref{eq:7707} would imply $P(L_{k-1}) \ge K(L_{k-1})$,
contradicting $P(L_{k-1}) < K(L_{k-1})$ by \eqref{eq:209},
as $L_{k-1} \subseteq L$.

Suppose $K(L) < 2^{-(k-2)}$.
Then since $L_{k-1} = \emptyset$,
at least the first $k-1$ bits in the binary expansion of $K(L)$ are zero, so
$K(L) < 2^{-(k-1)}$.
By Lemma~\ref{lem:kraft_completion},
(where $i$, $j$, $U$, $A$, $B$, in Lemma~\ref{lem:kraft_completion} respectively
 correspond to $0$, $k-2$, $S$, $W_k\cup L$, $A$   here),
since $0 < K(W_k \cup L) < 2^{-k} + 2^{-(k-1)} < 2^{-(k-2)} \le 1 = K(\Alphabet)$,
there exists a subset $A \subseteq \Alphabet - (W_k \cup L)$
such that
$K(A) = 2^{-(k-2)} - K(W_k \cup L)$.
Then 
\begin{align}
K(L \cup A) 
 &= K(W_k \cup L \cup A) - K(W_k)\\
 &= K(W_k \cup L) + K(A) - K(W_k)\\
 &= 2^{-(k-2)} - K(W_k) \\
 &= 2^{-(k-2)} - 2^{-k} \\
 &= 2^{-(k-1)} + 2^{-k}
\end{align}
so the Huffman-Kraft partition of $L \cup A$ provided by Lemma~\ref{lem:kraft_partition}
consists of $2$ disjoint subsets, $E$ and $F$,
of $L\cup A$ such that
$E \cup F = L \cup A$,
along with $K(E) = 2^{-k}$
and $K(F) = 2^{-(k-1)}$.
Then $P(E) > 0$,
and $P(F) \ge K(F)$ by \eqref{eq:7707}.

Since $L \subseteq L \cup A = E \cup F$,
we have $\Alphabet - (W_k \cup E \cup F) \subseteq \Alphabet - L \subseteq W \cup T$.
Therefore,
$P(\Alphabet - (W_k \cup E \cup F)) \ge K(\Alphabet - (W_k \cup E \cup F)) = 1-2^{-(k-2)}$
by \eqref{eq:208} and \eqref{eq:210}.
But we also have
\begin{align}
P(\Alphabet - (W_k \cup E \cup F))
&= 1 - P(W_k) - P(E) - P(F) \\
&< 1 - P(W_k) - P(F) \\
&\le 1 - 2K(W_k) - K(F) \label{eq:7708}\\
&= 1-2^{-(k-1)} - 2^{-(k-1)} \\
&= 1-2^{-(k-2)} \\
&= K(\Alphabet - (W_k \cup E \cup F)), 
\end{align}
which is a contradiction,
where
\eqref{eq:7708} follows from \eqref{eq:208}.
Therefore, our assumption was false that $K(L) < 2^{-(k-2)}$, so in fact
$K(L) \ge 2^{-(k-2)}$.
Thus, $k \ne 2$, for otherwise $K(L) \ge 1$ which contradicts $W\ne \emptyset$.
Therefore, $k \ge 3$.
Since $W_1 = \dots = W_{k-1} = \emptyset$,
we have $2K(W) < 2^{-(k-2)} \le K(L)$,
and so Lemma~\ref{lem:a_2b_inequality} shows
$P(W) < P(L)$.
\end{proof}

Theorem~\ref{thm:Huffman-beats-SF} guarantees that for each $n$, 
and for all non-dyadic sources of size $n$,
Huffman codes always strictly competitively dominate Shannon-Fano codes.
On the other hand,
we saw in Section~\ref{sec:not-optimal}
that as the source size grows,
an increasingly large fraction of non-dyadic sources 
have prefix codes that
strictly competitively dominate Huffman codes.
Said more casually,
Huffman codes usually are dominated by another code
but always dominate Shannon-Fano codes.

\clearpage

\section{Bound on competitive advantage over Huffman codes}
\label{sec:One-third-bounds}

\newcommand{\MinProb}[1]{\left(\displaystyle\min_{z\in #1}P(z)\right)}
\newcommand{\MinProbNoParens}[1]{{\displaystyle\min_{z\in #1}P(z)}}

In this section,
we derive an upper bound on the competitive advantage of an arbitrary
prefix code over a Huffman code for a given source,
and show that for every source of size at least four the upper bound can be 
approached arbitrarily closely by some sources.
We show that no prefix code can have a competitive advantage 
of $\frac{1}{3}$ or higher
over any Huffman code
(Theorem~\ref{thm:ub-one-third-Huffman}),
and in fact this upper bound is tight
in that it can be approached arbitrarily closely from below
for all source sizes, by at least some sources
(Theorem~\ref{thm:lb-one-third-Huffman}).

If $A$ is a subset of a source's alphabet, then at least one of the following
three Huffman-Kraft sum conditions is satisfied:
(i) $A$ is empty and $K(A)=0$;
(ii) $A$ is the entire alphabet and $K(A)=1$;
or
(iii) $K(A)\in (0,1)$ and is a finite sum of negative integer powers of $2$,
and therefore has a binary expansion of the form
$K(A) = 0.b_1 b_2 \ldots$,
with 
$b_i \in \{0,1\}$ for all $i$,
and where the number of nonzero bits is at least one and is finite.

If $A$ is a subset of a source alphabet,
then the probability $P(A)$ and the Huffman-Kraft sum $K(A)$ are related.
When $A$ is empty, $P(A)=K(A)=0$,
and when $A$ is the entire source alphabet, $P(A)=K(A)=1$.
If the source is dyadic, then $P(A)=K(A)$ is always true.
For non-dyadic sources, 
the relationship between $P(A)$ and $K(A)$ is more complicated.
We next establish some lemmas that are used to prove
Theorem~\ref{thm:ub-one-third-Huffman}.
Some of these lemmas
relate the probabilities and the 
Huffman-Kraft sums of source alphabet subsets.

In this section,
for any given source,
if $C$ is a prefix code that competes against a Huffman code for the same source,
then define the events 
$W$ (i.e., $C$ ``wins''),
$L$ (i.e., $C$ ``loses''),
and $T$ (i.e., $C$ ``ties''),
as in \eqref{eq:7262b}--\eqref{eq:7262}
(taking $C_1=C$, and $C_2$ as the Huffman code).

The following two lemmas establish consequences that occur when a prefix code has
a positive competitive advantage over a Huffman code.

\begin{lemma}
For any source,
if a prefix code $C$ has a 
positive competitive advantage over a Huffman code,
then for at least one source symbol,
$C$ produces a longer codeword than the Huffman codeword.
\label{lem:L_not_empty}
\end{lemma}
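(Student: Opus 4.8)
The plan is to argue by contradiction on the set $L$. Suppose $L=\emptyset$, i.e. the prefix code $C$ never produces a strictly longer codeword than the Huffman code $H$, so $l_C(y)\le l_H(y)$ for every source symbol $y$. The key observation is that a Huffman code is complete, so by Lemma~\ref{lem:complete_KS_1} the Kraft sum of its codeword lengths equals $1$. Comparing the two Kraft sums termwise then gives
\begin{align*}
\sum_{y\in\Alphabet} 2^{-l_C(y)}
\;\ge\; \sum_{y\in\Alphabet} 2^{-l_H(y)}
\;=\; 1,
\end{align*}
since $2^{-l_C(y)}\ge 2^{-l_H(y)}$ for each $y$. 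On the other hand, $C$ is a prefix code, so the Kraft inequality (Lemma~\ref{lem:Kraft-Inequality}) yields $\sum_{y}2^{-l_C(y)}\le 1$.

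Hence equality holds throughout the displayed chain. Because every summand $2^{-l_C(y)}$ individually dominates the corresponding summand $2^{-l_H(y)}$ and the totals agree, we must have $2^{-l_C(y)}=2^{-l_H(y)}$, i.e. $l_C(y)=l_H(y)$, for every $y\in\Alphabet$. In particular $W=\emptyset$ as well, so $\Advantage = P(W)-P(L)=0$, contradicting the assumption that $C$ has a \emph{positive} competitive advantage over the Huffman code. Therefore $L\ne\emptyset$, which is exactly the claim: for at least one source symbol, $C$ produces a longer codeword than the Huffman codeword.

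There is essentially no hard step here; the only point to get right is invoking completeness of Huffman codes so that their Kraft sum is exactly $1$ (not merely $\le 1$), which is what makes "$C$ is never longer than $H$" force "$C$ has the same lengths as $H$" and hence rules out any win. The argument uses only Lemma~\ref{lem:Kraft-Inequality} and Lemma~\ref{lem:complete_KS_1} together with the hypothesis that $C$ is a prefix code.
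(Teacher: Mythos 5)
Your proof is correct, but it takes a genuinely different route from the paper's. The paper argues via expected length optimality of Huffman codes: it writes $0 \le E[l_C(X)] - E[l_H(X)]$, splits the difference over the sets $W$, $L$, $T$, and observes that a positive competitive advantage forces $P(W)>0$, hence $W\ne\emptyset$, hence a strictly negative contribution from $W$ that can only be offset by a nonempty $L$. You instead use the Kraft inequality together with completeness of the Huffman code (Kraft sum exactly $1$): if $L=\emptyset$ then termwise domination $2^{-l_C(y)}\ge 2^{-l_H(y)}$ squeezes the Kraft sum of $C$ between $1$ and $1$, forcing $l_C(y)=l_H(y)$ for every $y$ and hence $\Advantage=0$. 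Your argument weights the comparison by $2^{-l}$ where the paper weights it by $P(y)$; it relies on a purely combinatorial property of Huffman codes (completeness) rather than on their optimality, and it yields the slightly stronger structural conclusion that a code which never loses to a Huffman code must be length equivalent to it. Both proofs are short and elementary, and both are valid.
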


\newcommand{\ProofOfLemmaLNotEmpty}{
\begin{proof}[Proof of Lemma \ref{lem:L_not_empty}]
It suffices to show that $L$ is nonempty.
Since $C$ has a positive competitive advantage over a Huffman code $H$, we have
$P(W) > P(L)$.
Then,
\begin{align}
0 &\le
E[l_C(X)] - E[l_H(X)] \label{eq:7038}\\
&= \sum_{y\in W} \left( l_C(y) - l_H(y) \right) P(y)
 + \sum_{y\in L} \left( l_C(y) - l_H(y) \right) P(y) \label{eq:7040}\\
&< \sum_{y\in L} \left( l_C(y) - l_H(y) \right) P(y) \label{eq:7039}
\end{align}
where
\eqref{eq:7038} follows since
a prefix code $C$ cannot have a lower expected length than the Huffman code for a given source;
\eqref{eq:7040} follows since $l_C(y) - l_H(y) = 0$ for all $y\in T$;
and
\eqref{eq:7039} follows since
$l_C(y) - l_H(y) < 0$ for all $y\in W$
and $P(W)>P(L) \ge 0$, so $W\ne \emptyset$.
If $L = \emptyset$,
then \eqref{eq:7039} would yield a contradiction.
\end{proof}
}


\begin{lemma}
For any source,
if a prefix code $C$
has a positive competitive advantage over a Huffman code,
then the Huffman-Kraft sums of the set $W$ of wins and set $L$ of losses of $C$ satisfy
$K(W) < K(L)$.
\label{lem:w_le_l}
\end{lemma}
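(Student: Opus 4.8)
The plan is to relate the win/loss structure to codeword lengths via a Kraft-sum argument, building on the proof technique of Lemma~\ref{lem:L_not_empty}. Let $H$ be the Huffman code and $C$ the competing prefix code, with $W$, $L$, $T$ as in \eqref{eq:7262}. First I would observe that both $H$ and $C$ satisfy the Kraft inequality, and since $H$ is complete, $K_H(\Alphabet) = \sum_{y\in\Alphabet} 2^{-l_H(y)} = 1$ by Lemma~\ref{lem:complete_KS_1}, whereas $\sum_{y\in\Alphabet} 2^{-l_C(y)} \le 1$. Splitting each sum over $W$, $L$, $T$ and noting that the contributions over $T$ agree (since $l_C(y) = l_H(y)$ there), this gives
\begin{align*}
\sum_{y\in W} 2^{-l_C(y)} + \sum_{y\in L} 2^{-l_C(y)} \;\le\; \sum_{y\in W} 2^{-l_H(y)} + \sum_{y\in L} 2^{-l_H(y)} \;=\; K(W) + K(L).
\end{align*}

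Next I would exploit the inequalities between $l_C$ and $l_H$ on $W$ and $L$. For $y\in W$ we have $l_C(y) < l_H(y)$, i.e.\ $l_C(y) \le l_H(y) - 1$, so $2^{-l_C(y)} \ge 2\cdot 2^{-l_H(y)}$; for $y\in L$ we have $l_C(y) > l_H(y)$, i.e.\ $l_C(y) \ge l_H(y)+1$, so $2^{-l_C(y)} \le \tfrac{1}{2}\cdot 2^{-l_H(y)}$ — though actually I only need the cruder bound $2^{-l_C(y)} \ge 0$ on $L$, or better, I keep $2^{-l_C(y)} > 0$ on $L$ but discard it. Summing the $W$-bound over $W$ gives $\sum_{y\in W} 2^{-l_C(y)} \ge 2K(W)$. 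Combining with the displayed Kraft inequality,
\begin{align*}
2K(W) + \sum_{y\in L} 2^{-l_C(y)} \;\le\; K(W) + K(L),
\end{align*}
hence $K(W) \le K(L) - \sum_{y\in L} 2^{-l_C(y)} < K(L)$, where strictness comes from the fact that $L$ is nonempty (by Lemma~\ref{lem:L_not_empty}, using that $C$ has positive competitive advantage), so $\sum_{y\in L} 2^{-l_C(y)} > 0$. This yields $K(W) < K(L)$ as desired.

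The main subtlety — really the only thing to be careful about — is justifying that $W$ may be assumed nonempty (or handling the case $W = \emptyset$ directly, in which $K(W) = 0 < K(L)$ trivially since $L \ne \emptyset$), and making sure the strictness is genuinely available: it rests entirely on $L \ne \emptyset$, which is exactly what Lemma~\ref{lem:L_not_empty} provides under the positive-competitive-advantage hypothesis. I do not expect any serious obstacle here; the argument is a short two-line Kraft-sum estimate once the completeness of the Huffman code and the nonemptiness of $L$ are invoked. An alternative route, should one prefer to avoid even mentioning $C$'s Kraft inequality explicitly, would be to build the auxiliary code of Lemma~\ref{lem:KraftExistPrefixCode} in reverse, but the direct estimate above is cleaner.
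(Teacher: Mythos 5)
Your proof is correct and uses exactly the same ingredients as the paper's — the Kraft inequality for $C$, completeness of the Huffman code, cancellation of the tie terms, the factor-of-$2$ gain on $W$, and nonemptiness of $L$ (via Lemma~\ref{lem:L_not_empty}) for strictness — the only difference being that the paper arranges the chain as a proof by contradiction from $K(W) \ge K(L)$ while you derive $K(W) < K(L)$ directly. This is essentially the same argument.
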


\newcommand{\ProofOfLemmaWLeL}{
\begin{proof}[Proof of Lemma \ref{lem:w_le_l}]
Let $H$ denote the Huffman code and
suppose, to the contrary, that 
\begin{align}
K(W) \ge K(L). \label{eq:7020}
\end{align}
Let $\Alphabet$ be the source's alphabet
and let $T = \Alphabet - (W\cup L)$ be the set of ties.
Then
\begin{align}
1 
&\ge \sum_{x \in \Alphabet} 2^{-l_C(x)} \label{eq:7292}\\
&= \sum_{x \in W} 2^{-l_C(x)}
 + \sum_{x \in T} 2^{-l_C(x)}
 + \sum_{x \in L} 2^{-l_C(x)} \\
&> \sum_{x \in W} 2^{-l_C(x)}
 + \sum_{x \in T} 2^{-l_C(x)} \label{eq:7025}\\
&= \sum_{x \in W} 2^{-l_C(x)}
 + \sum_{x \in T} 2^{-l_H(x)} \label{eq:7023}\\
&\ge \sum_{x \in W} 2^{-l_H(x)+1}
   + \sum_{x \in T} 2^{-l_H(x)} \label{eq:7022}\\
&= 2K(W) + K(T) \\
&\ge K(W) + K(T) + K(L) \label{eq:7021}\\
&= 1, \label{eq:7024}
\end{align}
a contradiction,
where
\eqref{eq:7292} is the Kraft inequality (\ref{lem:Kraft-Inequality}) applied to
                the prefix code $C$;
\eqref{eq:7025} follows from $L \ne \emptyset$ by Lemma~\ref{lem:L_not_empty};
\eqref{eq:7023} follows from $l_C(x) = l_H(x)$ when $x\in T$;
\eqref{eq:7022} follows from $l_C(x) \le l_H(x)-1$ when $x\in W$;
\eqref{eq:7021} follows from \eqref{eq:7020};
and
\eqref{eq:7024} follows since the Huffman tree is complete.
\end{proof}
}


\newcommand{\ProofOfLemmaFracInc}{
\begin{lemma}
If $b > a$,
then $\frac{x+a}{x+b}$ is monotonically increasing in $x$ for all $x\ne -b$.
\label{lem:frac_inc}
\end{lemma}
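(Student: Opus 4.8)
This is an elementary one-variable calculus fact, so the plan is simply to differentiate and observe that the derivative is everywhere positive on the domain. I would treat $f(x) = \frac{x+a}{x+b}$ as a rational function defined on $\mathbb{R}\setminus\{-b\}$ and show $f'(x) > 0$ there, which gives monotonic increase on each of the two intervals $(-\infty,-b)$ and $(-b,\infty)$ comprising the domain.

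\textbf{Key steps.} First, apply the quotient rule to get
\[
f'(x) \;=\; \frac{(x+b)\cdot 1 - (x+a)\cdot 1}{(x+b)^2} \;=\; \frac{b-a}{(x+b)^2}.
\]
Second, note that for every $x \neq -b$ we have $(x+b)^2 > 0$, and by hypothesis $b - a > 0$, so $f'(x) > 0$. Third, conclude that $f$ is strictly (hence monotonically) increasing on each connected component of its domain. As an alternative presentation I could instead write $f(x) = 1 - \frac{b-a}{x+b}$ and invoke the fact that $t \mapsto 1/t$ is decreasing on each of $(-\infty,0)$ and $(0,\infty)$, composed with the increasing shift $x \mapsto x+b$; but the derivative computation is shorter and self-contained.

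\textbf{Main obstacle.} There is essentially no obstacle; the only point requiring a word of care is that monotonicity is asserted ``for all $x \neq -b$,'' which must be read as monotonic increase on each interval of the (disconnected) domain rather than globally across the pole, and the proof as sketched delivers exactly that.
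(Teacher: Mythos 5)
Your proof is correct and is essentially identical to the paper's, which also just computes $\frac{d}{dx}\left(\frac{x+a}{x+b}\right) = \frac{b-a}{(x+b)^2} > 0$. Your added remark about interpreting monotonicity on each connected component of the domain is a reasonable clarification but does not change the argument.
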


\begin{proof}[Proof of Lemma \ref{lem:frac_inc}]
$\frac{d}{dx}\left(\frac{x+a}{x+b}\right) = \frac{b-a}{(x+b)^2} > 0$.
\end{proof}
}


If the Kraft sum of one set of Huffman codewords is 
less than or equal to that
of another set of Huffman codewords,
then the probability of the first set can be greater than that of the second set.
The following two lemmas provide limits on how great this
excess probability can be under certain conditions.

\begin{lemma}
Let $U$ and $V$ be subsets of a source alphabet
whose Huffman-Kraft sums satisfy
$K(U) = K(V) = 2^{-i}$ for some integer $i\ge 0$,
and such that $|U|\ge 2$.
Then $P(U) \le 2 P(V)$.
\label{lem:a_le_2b}
\end{lemma}

\newcommand{\ProofOfLemmaALeTwoB}{
\begin{proof}[Proof of Lemma \ref{lem:a_le_2b}]
By Lemma~\ref{lem:strongly-monotone},
Huffman codes are strongly monotone,
so Huffman-Kraft sums obey the strong monotonicity condition.

Let $u \in U$ be an element of minimum Huffman-Kraft sum
among the elements of $U$.
Since $|U| \ge 2$,
there exists $m \ge i+1$ such that
$K(u) = 2^{-m}$,
for otherwise 
$K(U) \ge 2 \cdot 2^{-i} > 2^{-i} = K(U)$,
a contradiction.
Therefore,
the binary expansion of $K(U - \{u\}) = K(U) - K(u) = 2^{-i} - 2^{-m}$
has $1$s in positions $i+1,\dots,m$
and $0$s in all other positions.
By Lemma~\ref{lem:kraft_partition},
there exists a Huffman-Kraft partition $U_1,\dots,U_m$ of $U-\{u\}$.
In particular,
$K(U_{i+1}) = 2^{-(i+1)}$,
and so $K(U - U_{i+1}) = K(U) - K(U_{i+1}) = 2^{-i} - 2^{-(i+1)} = 2^{-(i+1)}$.
Then by strong monotonicity
$P(U_{i+1}),P(U-U_{i+1}) \le P(V)$,
and so
$P(U) = P(U_{i+1}) + P(U-U_{i+1}) \le 2P(V)$.
\end{proof}
}


\begin{lemma}
Given a source with alphabet $\Alphabet$,
let $U$ and $V$
be disjoint subsets of $S$
with Huffman-Kraft sums satisfying
$K(U) < 2^{-i} \le K(V)$
for some integer $i \ge 0$.
Then $P(U) < 2P(V)$.
\label{lem:u_l_2v}
\end{lemma}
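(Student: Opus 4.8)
The plan is to decompose $U$ into two disjoint pieces, each with Huffman-Kraft sum at most $2^{-(i+1)}$, and then compare each piece separately against $V$ using Lemma~\ref{lem:a_2b_inequality}.

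First I would dispose of the trivial case. Since $K(V)\ge 2^{-i}>0$, the set $V$ is nonempty, so $P(V)>0$; hence if $U=\emptyset$ the inequality $P(U)=0<2P(V)$ is immediate. So assume $U\ne\emptyset$. Because $U$ is disjoint from the nonempty set $V$, it is a nonempty \emph{proper} subset of $\Alphabet$, so Lemma~\ref{lem:kraft_partition} supplies a Huffman-Kraft partition $U_1,U_2,\dots$ of $U$. Since $K(U)<2^{-i}$, bits $1,\dots,i$ of the binary expansion of $K(U)$ vanish, hence $U_1=\dots=U_i=\emptyset$ and $K(U_{i+1})\le 2^{-(i+1)}$. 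I then set $U''=U\setminus U_{i+1}=\bigcup_{k\ge i+2}U_k$, so that $U=U_{i+1}\cup U''$ is a disjoint union and $P(U)=P(U_{i+1})+P(U'')$. The number $K(U'')=K(U)-K(U_{i+1})$ has vanishing bits in all positions $1,\dots,i+1$, and since a nonzero finite binary expansion with that property must be strictly below $2^{-(i+1)}$, I obtain $K(U'')<2^{-(i+1)}$.

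Next I would invoke Lemma~\ref{lem:a_2b_inequality} twice, each time with its ``$A$'' taken to be $V$. Taking its ``$B$'' to be $U_{i+1}$: since $2K(U_{i+1})\le 2^{-i}\le K(V)$, the lemma yields $P(U_{i+1})\le P(V)$. Taking its ``$B$'' to be $U''$: since $2K(U'')<2^{-i}\le K(V)$, in particular $K(V)\ne 2K(U'')$, so the lemma's equality clause is ruled out and $P(U'')<P(V)$ (this also holds trivially when $U''=\emptyset$, as $P(V)>0$). Adding the two bounds gives $P(U)=P(U_{i+1})+P(U'')<P(V)+P(V)=2P(V)$, as desired.

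The only delicate point is securing the \emph{strict} inequality, and it is produced entirely by the remainder piece $U''$, whose Huffman-Kraft sum lies strictly below $2^{-(i+1)}$ and hence strictly below $\tfrac12 K(V)$ — exactly what activates the strict alternative in Lemma~\ref{lem:a_2b_inequality}. So the crux of the argument is the observation that $K(U'')$ is strictly, not merely weakly, less than $2^{-(i+1)}$, which holds because a finite binary expansion all of whose first $i+1$ bits are zero cannot reach $2^{-(i+1)}$ (that would require infinitely many ones).
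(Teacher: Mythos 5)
Your proof is correct, but it takes a genuinely different route from the paper's. The paper partitions $V$ (locating the highest nonempty block $V_k$ with $K(V_k)=2^{-k}$, $k\le i$), then uses Lemma~\ref{lem:kraft_completion} to pad $U$ with a nonempty set $U'\subseteq\Alphabet-U$ so that $K(U\cup U')=2^{-k}=K(V_k)$, and applies Lemma~\ref{lem:a_le_2b} to get $P(U\cup U')\le 2P(V_k)$; strictness there comes from $P(U')>0$. You instead partition $U$, splitting it into the level-$(i{+}1)$ block $U_{i+1}$ (with $K(U_{i+1})\le 2^{-(i+1)}$) and the tail $U''$ (with $K(U'')<2^{-(i+1)}$ strictly, since a finite binary expansion whose first $i+1$ bits vanish cannot reach $2^{-(i+1)}$), and then invoke Lemma~\ref{lem:a_2b_inequality} twice with $A=V$, extracting strictness from the tail via that lemma's equality clause. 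Both arguments are sound; yours trades the completion step (Lemma~\ref{lem:kraft_completion}) and the halving lemma (Lemma~\ref{lem:a_le_2b}) for two applications of Lemma~\ref{lem:a_2b_inequality}, and it obtains the factor of $2$ by bounding each of two pieces of $U$ separately by $P(V)$ rather than by halving a single set of Kraft sum $2^{-k}$. You also correctly handle the degenerate case $U''=\emptyset$, where $P(V)>0$ gives the strict inequality directly.
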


\newcommand{\ProofOfLemmaULTwoV}{
\begin{proof}[Proof of Lemma \ref{lem:u_l_2v}]
Since $K(V) \ge 2^{-i}$,
we have $P(V) > 0$.
If $K(U) = 0$
then $P(U) = 0 < 2P(V)$,
so we may assume $K(U) > 0$.

By Lemma~\ref{lem:kraft_partition},
there exists a Huffman-Kraft partition
$V_1,V_2,\dots$ of $V$.
Let $k \ge 0$ be the smallest integer
such that $V_k \neq \emptyset$.
Since $K(V) \ge 2^{-i}$,
we have $k \le i$.
By Lemma~\ref{lem:kraft_completion}
(taking $B=U'$, $U=S$, $A=U$, and $j=k$),
there exists a subset $U' \subseteq S - U$
such that
$K(U') = 2^{-k} - K(U) > 0$.
Then $|U \cup U'| \ge 2$
since both $U$ and $U'$ are nonempty,
and also $K(U \cup U') = 2^{-k} = K(V_k)$,
so $P(U \cup U') \le 2 P(V_k)$
by Lemma~\ref{lem:a_le_2b}
(taking $A=U\cup U'$, $B=V_k$, and $i=k$).
Therefore
\begin{align}
P(U)
&\le 2P(V_k) - P(U')
< 2P(V_k)
\le 2P(V).
\end{align}
\end{proof}
}


For any Huffman-Kraft partition $A_1,A_2,\dots$
of a subset $A$ of a source alphabet and for any integer $k \ge 1$
we will define the notation
$A_{<k} = A_1 \cup \dots \cup A_{k-1}$
and
$A_{\le k} = A_1 \cup \dots \cup A_k$,
as well as
$A_{>k} = A_{k+1} \cup A_{k+2}\dots$
and
$A_{\ge k} = A_k \cup A_{k+1} \dots$.

The following is the key lemma used to prove 
Theorem~\ref{thm:ub-one-third-Huffman}.

\begin{lemma}
Given a source with alphabet $\Alphabet$,
suppose $U, V \subseteq \Alphabet$
are disjoint subsets
with Huffman-Kraft sums satisfying
$K(U) < K(V)$.
Then $P(U) - P(V) < \frac{1}{3}$.
\label{lem:u_v_onethird}
\end{lemma}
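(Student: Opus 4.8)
The plan is to bound $P(U) - P(V)$ by pushing $P(U)$ up and $P(V)$ down as much as the Huffman-Kraft constraints allow, using the strong monotonicity of Huffman codes. Write $k \ge 1$ for the smallest index with a nonzero bit in the binary expansion of $K(U)$ — in other words, $2^{-k} \le K(U) < 2^{-(k-1)}$. Since $K(V) > K(U) \ge 2^{-k}$, we also have $K(V) > 2^{-k}$, hence $K(V) \ge 2^{-k} + 2^{-m}$ for some $m > k$ (actually we only need $K(V) > K(U) \ge 2^{-k}$, which already forces $K(V) \ge 2^{-k} + 2^{-\ell}$ for some $\ell$, but the cleanest route is: the binary expansion of $K(V)$ has a $1$ in some position $\le k$, so $K(V) \ge 2^{-k}$, and strictly exceeding $K(U)$ plus being a finite dyadic rational gives $K(V) \ge K(U) + 2^{-N}$ for some $N$; I will not need a sharp form here).

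The key estimate is an upper bound on $P(U)$ in terms of $P(V)$. By Lemma~\ref{lem:kraft_completion} applied inside the full alphabet (whose Huffman-Kraft sum is $1$), I can find $U' \subseteq S - (U \cup V)$ with $K(U \cup U') = 2^{-(k-1)}$, since $0 < K(U) < 2^{-(k-1)}$ and $k-1 \ge 0$; and similarly, since $K(V) < 1$ and $V$ has a nonzero bit in position $\le k$, I can take a Huffman-Kraft partition $V_1, V_2, \dots$ of $V$ with $V_k \ne \emptyset$ (the smallest nonzero block sits at index $\le k$; shrink $k$ if necessary so that in fact $V$'s first nonzero block is exactly at $k$ — but note $U$'s first nonzero bit is also at $k$ only if we are careful, so I will instead let $j \le k$ be the first nonzero index of $V$ and argue with $j$). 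The heart of the matter: by strong monotonicity, any set of Huffman-Kraft sum $2^{-(k-1)}$ has probability at least that of any set of Huffman-Kraft sum $2^{-k}$, and Lemma~\ref{lem:a_le_2b} says a set of Huffman-Kraft sum $2^{-k}$ with at least two elements has probability at most twice that of any set of Huffman-Kraft sum $2^{-k}$. Chaining these, I expect to get $P(U \cup U') \le 2 P(V_j)$-type bounds, hence $P(U) \le 2 P(V)$ or a comparable linear relation, at least when the relevant sets are not singletons; singleton cases must be handled separately, but a singleton $U$ with $K(U) = 2^{-k}$ and $P(U) = P(z)$ can be compared directly to $V$ via strong monotonicity since $K(V) \ge 2^{-k} = K(U)$ forces — wait, equality of Kraft sums isn't given, so here I use that $V$ has a subset of Kraft sum exactly $2^{-k}$ (a partition block) and strong monotonicity gives $P(z) \le P(\text{that block}) \le P(V)$, so $P(U) - P(V) \le 0 < 1/3$ in that subcase.

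Assembling: in the generic case I will have an inequality of the shape $P(U) \le 2 P(V) - (\text{nonnegative slack})$, so $P(U) - P(V) \le P(V) \le$ something, which is not yet $< 1/3$ — so I need a second handle. The right idea is to also bound $P(V)$ from below away from $P(U)$: use $K(W)$-style completions on the complement. Specifically, let $E = S - (U \cup V \cup U')$; then $\Alphabet = U \cup U' \cup E$ has Huffman-Kraft sum $1$, and I know $K(U \cup U') = 2^{-(k-1)}$, so $K(E) = 1 - 2^{-(k-1)}$, while $P(E) \ge$ (something) by strong monotonicity comparisons against blocks of $V$. Playing $P(U) + P(V) + P(E') = 1$ for the appropriate partition and feeding in $P(U) \le 2P(V)$ should pin $P(V)$ below, say, $1/3$-ish and thus force $P(U) - P(V) = P(U) + P(V) - 2P(V) \le$ a quantity strictly below $1/3$; the monotone-fraction Lemma~\ref{lem:frac_inc} is presumably the tool that converts ``$P(U) \le 2P(V)$ and $P(U) + P(V) \le 1$'' into ``$P(U) - P(V) < 1/3$'' by noting the worst case is $P(U) = 2P(V)$, $3P(V) = 1$, giving exactly $1/3$, approached but not attained because the slack terms are strictly positive whenever $K(U) < K(V)$ strictly.

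The main obstacle I anticipate is the bookkeeping of which Kraft-sum level the relevant blocks live at — matching the first nonzero index of $U$ against that of $V$ and against $k-1$ — together with correctly isolating every singleton/degenerate subcase where Lemma~\ref{lem:a_le_2b} (which needs $|A| \ge 2$) does not apply, and showing in each such subcase that the difference is even nonpositive, hence trivially below $1/3$. Once the level indices are fixed and the completions $U', E$ are in hand, the final arithmetic is just ``$a \le 2b$, $a + b \le 1 \Rightarrow a - b \le 1/3$ with equality only in the limit,'' which is Lemma~\ref{lem:frac_inc} and a one-line computation.
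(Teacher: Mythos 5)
Your proposal correctly identifies the two ingredients that drive the bound (a factor-of-two comparison between $P(U)$ and $P(V)$, plus a normalization constraint, combined via Lemma~\ref{lem:frac_inc}), and your argument does work in the case where some power of two separates the Kraft sums, i.e.\ $K(U) < 2^{-i} \le K(V)$ for some $i$ --- that is exactly the content of the paper's Lemma~\ref{lem:u_l_2v} and the base step of its induction. But there is a genuine gap in the case where no such separating power exists, which is precisely the hard case: when the binary expansions of $K(U)$ and $K(V)$ have their leading $1$ in the \emph{same} position $k$ (e.g.\ $K(U) = 2^{-2}+2^{-4}$, $K(V)=2^{-2}+2^{-3}$). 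Your chain ``$P(U) \le P(U\cup U') \le 2P(V_j)$'' needs Lemma~\ref{lem:a_le_2b}, which requires the two sets to have \emph{equal} Kraft sums; here $K(U\cup U') = 2^{-(k-1)}$ while the leading block of $V$ has Kraft sum only $2^{-k}$, so the lemma does not apply, and strong monotonicity gives $P(U\cup U') \ge P(V_k)$ --- the wrong direction. Trying instead to match blocks level by level ($U_k$ against $V_k$, etc.) also fails, because $U_k$ may be a singleton, and a singleton of Kraft sum $2^{-k}$ can have far more than twice the probability of another set of Kraft sum $2^{-k}$; the factor-of-two bound therefore does not chain across levels without picking up additive error terms.

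This is exactly why the paper does not prove the clean inequality $P(U) < 2P(V)$ at all. Instead it proves, by downward induction on the level $k$, the localized invariant $P(U_{\ge k+1}) - P(V_{\ge k+1}) < \tfrac{1}{3}P(A)$ for an auxiliary \emph{padded} set $A$ with $K(A)=2^{-k}$ containing $U_{\ge k+1}\cup V_{\ge k+1}$; the additive errors that appear when comparing matched singleton blocks (the terms $P(V_k)+P(E)$ in \eqref{eq:7914}) are absorbed into the growing denominator $P(A')$, and Lemma~\ref{lem:frac_inc} is applied at each inductive step rather than once at the end. Your closing remark that the remaining work is ``just bookkeeping'' of which level the blocks live at understates the situation: that bookkeeping is the proof, and without the padded-set invariant the final step ``$a\le 2b$ and $a+b\le 1$ imply $a-b\le 1/3$'' is both unavailable (the premise $a\le 2b$ is unproven in the hard case) and non-strict even where available.
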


\newcommand{\ProofOfLemmaUVOneThird}{
\begin{proof}[Proof of Lemma \ref{lem:u_v_onethird}]
If $K(U) = 0$
then $P(U) - P(V) \le 0 < \frac{1}{3}$,
so suppose $K(U) > 0$.
By Lemma~\ref{lem:kraft_partition},
there exist Huffman-Kraft partitions
$U_1,U_2,\dots$ of $U$ and
$V_1,V_2,\dots$ of $V$.
Let $m \ge 0$ be the smallest integer
such that $V_m \neq \emptyset$
and $U_m = \emptyset$;
such an integer exists
because $K(U) < K(V)$.
Since
$P(U) - P(V) \le P(U) - P(V_{\le m})$,
and $U$ and $V_{\le m}$ are disjoint,
without loss of generality
we will assume $V_i = \emptyset$
for all $i > m$.

We now assert that for certain nonnegative integers $k\le m-1$,
there exists $A \subseteq \Alphabet$
satisfying the following three conditions:
\begin{align}
K(A) &= 2^{-k} \label{eq:7919a}\\
U_{\ge k+1} \cup V_{\ge k+1} &\subseteq A \label{eq:7919b}\\
P(U_{\ge k+1}) - P(V_{\ge k+1}) &< \frac{1}{3} P(A). \label{eq:7919c}
\end{align}
Specifically, we will first show that this assertion is true when $k=m-1$.
Then, we will show inductively 
that whenever the assertion is true for some positive $k$
it must also be true for some smaller nonnegative $k$.
We then will conclude that the assertion must be true for $k=0$.

Once we have established the assertion is true for $k=0$,
we can further infer that
\begin{align}
P(U) - P(V)
&= P(U_{\ge 1}) - P(V_{\ge 1})
< \frac{1}{3} P(A)
= \frac{1}{3}, 
\end{align}
where $P(A)=1$ since $K(A)=1$,
thus proving the lemma.

\textbf{Base Step: $k=m-1$}.
Since $K(\Alphabet - (U_{< m} \cup V_{< m})) = 1 - K(U_{< m} \cup V_{< m})$
is an integer multiple of $2^{-(m-1)}$,
and
\begin{align}
0 < K(V_m) \le K(U_{\ge m} \cup V_{\ge m}) = K(U_{> m}) + K(V_m) < 2^{-(m-1)},
\end{align}
Lemma~\ref{lem:kraft_completion} shows
(taking 
 $A\leftarrow U_{\ge m}\cup V_{\ge m}$, 
 $B\leftarrow U'$, 
 $U\leftarrow S-(U_{<m}\cup V_{<m})$, 
 $i=j\leftarrow m-1$ )
there exists 
$U' \in 
 = \Alphabet - (U \cup V)$ 
such that 
$K(U' \cup U_{\ge m} \cup V_{\ge m}) = 2^{-(m-1)}$.
Then setting
\begin{align}
A &= U' \cup U_{\ge m} \cup V_{\ge m} \label{eq:7903}
\end{align}
gives
\begin{align}
K(A) &= 2^{-(m-1)}. \label{eq:7901}
\end{align}
Also,
since $K(U_{\ge m}) = K(U_{>m}) < 2^{-m} = K(V_m) = K(V_{\ge m})$,
Lemma~\ref{lem:u_l_2v} shows $P(U_{\ge m}) < 2P(V_{\ge m})$.
Therefore,
\begin{align}
\frac{P(U_{\ge m}) - P(V_{\ge m})}{P(A)}
&< \frac{P(U_{\ge m}) - P(V_{\ge m})}{P(U_{\ge m}) + P(V_{\ge m})} \label{eq:7910} \\
&< \frac{P(V_{\ge m})}{3 P(V_{\ge m})} \label{eq:7911} \\
&= \frac{1}{3}, \label{eq:7902}
\end{align}
where
\eqref{eq:7910} follows from 
$P(U')>0$ since
$K(U') = K(A) - K(U_{\ge m} \cup V_{\ge m})
 > 2^{-(m-1)} - 2^{-(m-1)} = 0$;
and
\eqref{eq:7911} follows from 
$P(U_{\ge m}) < 2P(V_{\ge m})$ and Lemma~\ref{lem:frac_inc}.

By \eqref{eq:7901}, \eqref{eq:7903}, and \eqref{eq:7902},
these conditions hold for $k = m-1$.

\textbf{Inductive Step: $1 \le k\le m-1$}.
Notice that if $m=1$,
then the base case of $k=m-1$ automatically proves the assertion for $k=0$,
so we may assume $m\ge 2$.

Assume that 
\eqref{eq:7919a} -- \eqref{eq:7919c}
hold for some positive integer $k\le m-1$ and for some $A \subseteq S$.
We will show that there exists $A' \subseteq \Alphabet$
that satisfies the three conditions above
for some index $j \in \{0,\dots,k-1\}$.

From the definition of $m$,
we have $K(U_i) = K(V_i)$
for all $i < m$.
In particular,
$K(U_k) = K(V_k)$
since $k \le m-1$.
Then $K(U_k) + K(V_k) \in \{0,2^{-(k-1)}\}$,
so $K(U_{\le k} \cup V_{\le k}) = K(U_{< k} \cup V_{< k}) + K(U_k \cup V_k)$
is an integer multiple of $2^{-(k-1)}$.
Therefore,
$K(\Alphabet - U_{\le k} \cup V_{\le k}) = 1 - K(U_{\le k} \cup V_{\le k})$
is an integer multiple of $2^{-(k-1)}$.
Since $0 < K(A) = 2^{-k} < 2^{-(k-1)}$,
and $A \subseteq \Alphabet - U_{\le k} \cup V_{\le k}$,
Lemma~\ref{lem:kraft_completion} shows
(taking 
 $A\leftarrow A$, 
 $B\leftarrow B$, 
 $U\leftarrow S-(U_{\le m}\cup V_{\le m})$, 
 $i=j\leftarrow k-1$ )
there exists $B \subseteq \Alphabet - (A \cup U_{\le k} \cup V_{\le k})$
with $K(B) = 2^{-(k-1)} - K(A) = 2^{-k}$.

\textbf{Case 1: $K(U_k) = K(V_k) = 0$}. 
Let $A' = A \cup B$.
Then
$K(A') = K(A) + K(B) = 2^{-(k-1)}$
and
\begin{align}
U_{\ge k} \cup V_{\ge k} 
&= U_{\ge k+1} \cup V_{\ge k+1} \subseteq A \subseteq A'
\end{align}
from \eqref{eq:7919b}.
Also,
\begin{align}
P(U_{\ge k}) - P(V_{\ge k})
&= P(U_{\ge k+1}) - P(V_{\ge k+1})
< \frac{1}{3} P(A)
< \frac{1}{3} P(A')
\end{align}
from \eqref{eq:7919c}
and $K(A) = K(A') - K(B) < K(A')$.
Thus $A'$ satisfies conditions \eqref{eq:7919a} -- \eqref{eq:7919c}
except the index $k$ has been reduced to $j=k-1 \ge 0$.

\textbf{Case 2: $K(U_k) = K(V_k) = 2^{-k}$}. 
Let $j \le k$ be the smallest integer
such that $K(U_i) = 2^{-i} = K(V_i)$
for all $i \in \{j,\dots,k\}$.
Then $j \ge 2$,
since otherwise, $j = 1$
would imply $1 \ge K(U) + K(V) > 2K(U) \ge 2K(U_1) = 1$,
a contradiction.
Also, from the definition of $j$,
we have $K(U_{j-1}) = K(V_{j-1}) = 0$.
Let
\begin{align}
A' &= A \cup B \cup (U_{\ge j} - U_{> k}) \cup (V_{\ge j} - V_{> k}). \label{eq:7920}
\end{align}
Then
\begin{align}
U_{\ge j-1} \cup V_{\ge j-1}
&= U_{\ge j} \cup V_{\ge j} \\
&= (U_{\ge j} - U_{> k}) \cup (V_{\ge j} - V_{> k}) 
    \cup U_{\ge k+1} \cup V_{\ge k+1} \\
&\subseteq (U_{\ge j} - U_{> k}) \cup (V_{\ge j} - V_{> k}) \cup A 
              \label{eq:7930}\\
&\subseteq A' \label{eq:7905}
\end{align}
where 
\eqref{eq:7930} follows from \eqref{eq:7919b};
and
\begin{align}
K(A')
&= K(A) + K(B) + K(U_{\ge j} - U_{> k}) + K(V_{\ge j} - V_{> j}) \\
&= 2^{-k} + 2^{-k} + 2 \sum_{i=j}^k 2^{-i} \\
&= 2^{-(j-2)}. \label{eq:7906}
\end{align}

Now let $E \in \{A,B\}$ such that
$P(E) = \min(P(A),P(B))$.
Note $K(E) = 2^{-k}$,
since $K(A) = K(B) = 2^{-k}$.
Since
\begin{align}
K((V_{\ge j} - V_{> k}) \cup E)
&= \sum_{i=j}^k 2^{-i} + K(E)
= 2^{-(j-1)} - 2^{-k} + 2^{-k}
= 2^{-(j-1)}
> 2^{-j}
= K(U_j),
\end{align}
the strong monotonicity of Huffman codes by 
Lemma~\ref{lem:strongly-monotone}
implies
$P(U_j) \le P(V_{\ge j} - V_{> k}) + P(E)$.

Suppose $j < k$.
Then for all $i \in \{j+1,\dots,k\}$,
we have $P(U_i) \le P(V_{i-1})$
by strong monotonicity of Huffman-Kraft sums,
since $K(U_i) = 2^{-i} < 2^{-(i-1)} = K(V_{i-1})$.
Thus
\begin{align}
P(U_{\ge j+1} - U_{> k})
&= P(U_{j+1}) + \dots + P(U_k) \\
&\le P(V_j) + \dots + P(V_{k-1}) \\
&= P(V_{\ge j} - V_{> k-1}).
\end{align}
Therefore
\begin{align}
P(U_{\ge j} - U_{> k})
&= P(U_j) + P(U_{\ge j+1} - U_{> k}) \\
&\le P(V_{\ge j} - V_{> k}) + P(E) + P(V_{\ge j} - V_{> k-1}) \\
&= 2 P(V_{\ge j} - V_{> k-1}) + P(V_k) + P(E). \label{eq:7914}
\end{align}
On the other hand, suppose $j=k$.
Then $V_{\ge j} - V_{> k-1} = \emptyset$,
so we also have
\begin{align}
P(U_{\ge j} - U_{> k})
&= P(U_j) \\
&\le P(V_{\ge j} - V_{> k}) + P(E) \\
&= 2 P(V_{\ge j} - V_{> k-1}) + P(V_k) + P(E). \label{eq:7915}
\end{align}

Now we combine both the cases $j<k$ and $j=k$.
Since
$V_m \subseteq A$
and
$K(V_m) = 2^{-m} < 2^{-k} = K(A)$,
neither $V_m$ nor $A-V_m$ is empty,
so $|A| = |V_m| + |A - V_m| \ge 2$.
Then since $K(A) = 2^{-k} = K(V_k)$,
Lemma~\ref{lem:a_le_2b} shows $P(A) \le 2P(V_k)$.
Thus
\begin{align}
P(E) &= \min(P(A),P(B)) \le \min( 2P(V_k), P(B) ),
\end{align}
and so
\begin{align}
2P(V_k) + P(E) + P(B) \ge 3P(E). \label{eq:7916}
\end{align}
Finally,
we apply
Lemma~\ref{lem:frac_inc}
by using the values
\begin{align}
x &= P(U_{\ge j} - U_{> k})\\
a &= - P(V_{\ge j} - V_{> k})\\
b &= P(V_{\ge j} - V_{> k}) + P(B)\\
x' &= 2 P(V_{\ge j} - V_{> k-1}) + P(V_k) + P(E).
\end{align}
It is clear that $a<b$,
and we have $x \le x'$ from
\eqref{eq:7914} (for $j<k$) and \eqref{eq:7915} (for $j=k$),
Thus,
\begin{align}
\frac{x+a}{x+b} &\le \frac{x'+a}{x'+b}\\
\frac{P(U_{\ge j} - U_{> k}) - P(V_{\ge j} - V_{> k})}
     {P(U_{\ge j} - U_{> k}) + P(V_{\ge j} - V_{> k}) + P(B)}
&\le \frac{P(V_{\ge j} - V_{> k-1}) + P(E)}{3 P(V_{\ge j} -
             V_{> k-1}) + 2P(V_k) + P(E) + P(B)} \\
&\le \frac{P(V_{\ge j} - V_{> k-1}) + P(E)}{3 P(V_{\ge j} -
             V_{> k-1}) + 3 P(E)} \label{eq:7913} \\
&= \frac{1}{3}, \label{eq:7918}
\end{align}
where
\eqref{eq:7913} follows from \eqref{eq:7916}.
Therefore,
\begin{align}
P(U_{\ge j-1}) - P(V_{\ge j-1})
&= P(U_{\ge j}) - P(V_{\ge j}) \\
&= P(U_{\ge j} - U_{> k}) - P(V_{\ge j} - V_{> k}) + P(U_{\ge k+1}) - P(V_{\ge k+1}) \\
&< \frac{1}{3}(P(U_{\ge j} - U_{> k}) + P(V_{\ge j} - V_{> k}) + P(B) + P(A)) \label{eq:7917} \\
&= \frac{1}{3} P(A'), \label{eq:7907}
\end{align}
where
\eqref{eq:7917} follows from \eqref{eq:7918} and \eqref{eq:7919c};
and
\eqref{eq:7907} follows from \eqref{eq:7920}.
Thus $A'$ satisfies the three conditions
\eqref{eq:7919a} -- \eqref{eq:7919b}
by way of
\eqref{eq:7906}, \eqref{eq:7905}, and \eqref{eq:7907},
except the index $k$ has been reduced to $j-2 \in \{0,\dots,k-2\}$.
\end{proof}
}


\begin{theorem}
For any source,
the competitive advantage
of any prefix code over a Huffman code 
is less than $\frac{1}{3}$.
\label{thm:ub-one-third-Huffman}
\end{theorem}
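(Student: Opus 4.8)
The plan is to deduce the theorem almost immediately from Lemma~\ref{lem:w_le_l} and Lemma~\ref{lem:u_v_onethird}, which have absorbed essentially all of the work. Fix a source, let $H$ be a Huffman code for it, and let $C$ be any prefix code. Using the sets $W$ and $L$ of wins and losses of $C$ against $H$ from \eqref{eq:7262}, the competitive advantage is $\Advantage = P(W) - P(L)$, and these two sets are disjoint by definition.

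First I would dispose of the trivial case: if $\Advantage \le 0$ then $\Advantage < \frac{1}{3}$ and there is nothing to prove, so assume $C$ has a positive competitive advantage over $H$. Then Lemma~\ref{lem:w_le_l} gives $K(W) < K(L)$ for the Huffman-Kraft sums (this is where Lemma~\ref{lem:L_not_empty}, guaranteeing $L\neq\emptyset$, is implicitly used). Since $W$ and $L$ are disjoint, applying Lemma~\ref{lem:u_v_onethird} with $U = W$ and $V = L$ yields $P(W) - P(L) < \frac{1}{3}$, i.e.\ $\Advantage < \frac{1}{3}$, which is the claim.

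I do not expect any genuine obstacle at this stage; the substantive content lies entirely in the earlier lemmas, above all the descending induction inside the proof of Lemma~\ref{lem:u_v_onethird}. The one point worth verifying carefully is that Lemma~\ref{lem:u_v_onethird} needs only disjointness of $U,V$ and the strict Kraft-sum inequality, with no completeness or monotonicity assumption on $C$; and that Lemma~\ref{lem:w_le_l} likewise applies to an arbitrary prefix code $C$. Together these ensure the bound $\Advantage<\frac13$ holds for every prefix code $C$, not merely for well-behaved ones, so the statement as quantified over ``any prefix code'' goes through.
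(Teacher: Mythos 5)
Your proposal is correct and matches the paper's own proof essentially verbatim: both reduce to the positive-advantage case, invoke Lemma~\ref{lem:L_not_empty} and Lemma~\ref{lem:w_le_l} to get $K(W) < K(L)$ for the disjoint win and loss sets, and then conclude via Lemma~\ref{lem:u_v_onethird}. No differences worth noting.
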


\begin{proof}
Let $C$ denote an arbitrary prefix code for the source.
Let $W$ and $L$ denote the sets of wins and losses, respectively,
of $C$ over the Huffman code.
It suffices to assume the competitive advantage of $C$ 
over the Huffman code is positive,
so $W\ne\emptyset$.
Then Lemma~\ref{lem:L_not_empty} implies $L \ne \emptyset$,
and Lemma~\ref{lem:w_le_l} implies $K(W) < K(L)$.
Therefore,
$P(W) - P(L) < \frac{1}{3}$
by Lemma~\ref{lem:u_v_onethird}.
\end{proof}


The following theorem shows that for any size at least four,
sources can be found
whose competitive advantages over Huffman codes are arbitrarily 
close to $1/3$ and whose average lengths are arbitrarily close to that
of a Huffman code.

\begin{theorem}
For every $n\ge 4$,
there exists a source of size $n$ and a prefix code
that has a competitive advantage over a Huffman code 
arbitrarily close to $\frac{1}{3}$ and the code's average length 
is arbitrarily close to that of the Huffman code.
\label{thm:lb-one-third-Huffman}
\end{theorem}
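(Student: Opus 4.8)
The plan is to write down, for each $n\ge 4$ and each small parameter $\delta>0$, an explicit source of size $n$, an explicit Huffman code $H$ for it, and an explicit prefix code $C$ whose competitive advantage over $H$ equals $\tfrac13-\delta$ and whose average length exceeds that of $H$ by exactly $\delta$; letting $\delta\to 0^+$ then yields the theorem (the strict inequality $\tfrac13-\delta<\tfrac13$ is of course consistent with Theorem~\ref{thm:ub-one-third-Huffman}). I would take the source with symbols $a,b,c,t_1,\dots,t_{n-3}$ having probabilities $P(a)=P(b)=\tfrac13$, $P(c)=\tfrac13-\delta$, and $P(t_i)=\tfrac{\delta}{n-3}$ for each $i$; this is a genuine distribution, with $P(a)\ge P(b)\ge P(c)\ge P(t_i)>0$, whenever $0<\delta<\tfrac16$.

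First I would pin down a Huffman tree for this source. For $\delta$ small, the Huffman algorithm may merge the $n-3$ tiny symbols among themselves first, into a subtree $S$ of probability $\delta$: every intermediate node of $S$ has probability at most $\delta<\tfrac13-\delta$, so such nodes are always among the two smallest available and the cluster is fully assembled before any of $a,b,c$ is touched. It may then merge $S$ with $c$ (the two smallest of the four remaining nodes $\tfrac13,\tfrac13,\tfrac13-\delta,\delta$), then merge that node (now of probability $\tfrac13$) with $b$, and finally with $a$; when three equal nodes of probability $\tfrac13$ remain, Huffman is permitted to combine any two, in particular the two we want. I would verify this is a legal run of the algorithm, so that the resulting Huffman code $H$ satisfies $l_H(a)=1$, $l_H(b)=2$, $l_H(c)=3$, and $l_H(t_i)\ge 3$ for every $i$; in particular the tiny symbols together contribute Huffman-Kraft sum exactly $\tfrac18$.

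Next I would put $W=\{b,c\}$ and $L=\{a\}$, so that $K(W)=2^{-2}+2^{-3}=\tfrac38<\tfrac12=2^{-1}=K(L)$. Applying Lemma~\ref{lem:KraftExistPrefixCode} with $U=W$ and $V=L$ (one may take $k=2$, which is legal since $\tfrac38\le(1-2^{-2})\cdot\tfrac12$) yields a prefix code $C$ with $l_C(b)=l_H(b)-1=1$, $l_C(c)=l_H(c)-1=2$, $l_C(a)=l_H(a)+2=3$, and $l_C(t_i)=l_H(t_i)$ for all $i$; thus $W$ and $L$ are exactly the win set and loss set of $C$ against $H$, and all $t_i$ are ties. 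The competitive advantage of $C$ over $H$ is therefore $P(W)-P(L)=\bigl(\tfrac13+\tfrac13-\delta\bigr)-\tfrac13=\tfrac13-\delta$, and the average-length difference is $\sum_{y\in W}\bigl(l_C(y)-l_H(y)\bigr)P(y)+\sum_{y\in L}\bigl(l_C(y)-l_H(y)\bigr)P(y)=-\bigl(\tfrac23-\delta\bigr)+2\cdot\tfrac13=\delta$. Sending $\delta\to 0^+$ makes the competitive advantage arbitrarily close to $\tfrac13$ and the average length of $C$ arbitrarily close to that of $H$, completing the argument.

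The one place that needs care is the second step: confirming that the described merge order is genuinely realizable by Huffman's algorithm for every $n\ge 4$ and all sufficiently small $\delta$ (the tiny cluster must be completed before $c$ is forced into a merge, and the tie among three probability-$\tfrac13$ nodes must be broken in the needed way). Once the shape of $H$ is secured, the remaining steps are a direct application of Lemma~\ref{lem:KraftExistPrefixCode} and two short computations.
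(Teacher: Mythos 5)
Your construction is correct and is essentially the paper's: three symbols of probability roughly $\tfrac13$ receiving Huffman lengths $1,2,3$, a negligible tail of ties, and a competing code that permutes those lengths to $3,1,2$, yielding advantage $\tfrac13-O(\delta)$ and average-length excess $O(\delta)$. The only differences are cosmetic --- the paper perturbs the three main probabilities to $\tfrac13+\epsilon,\ \tfrac13,\ \tfrac13-2\epsilon$ and uses a geometric tail so that each $p_k$ exceeds the sum of all smaller probabilities and the staircase Huffman tree is forced without ties, whereas you rely on a legal tie-break among three probability-$\tfrac13$ nodes, which is permissible since the theorem only asserts the advantage over \emph{a} Huffman code.
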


\begin{proof}
Let $n\ge 4$
and $\epsilon > 0$,
and define
$\alpha = \frac{\epsilon/2}{1 - 2^{-n+3}}$.
Let the source be of size $n$ and with symbol probabilities:
\begin{align}
p_1 &= \frac{1}{3} + \epsilon\\
p_2 &= \frac{1}{3}\\
p_3 &= \frac{1}{3} - 2\epsilon\\
p_k &= \alpha 2^{4-k} \ \ \ \ \ \ (4 \le k \le n).
\end{align}
One can verify that $p_1 + \dots + p_n = 1$
and for each $k\in \{2, \dots, n\}$
we have $p_k > p_{k+1} + \dots + p_n$,
so the Huffman code for the source
assigns a word of length $k$ 
to $p_k$ for $k=1, \dots, n-1$,
and also a word of length $n-1$ 
to $p_n$.

Define a prefix code $C$ which is identical to the Huffman code, 
except that it reassigns
$p_1$, $p_2$, and $p_3$ to codewords of lengths
$3$, $1$, and $2$, respectively.
The code $C$ will produce a shorter codeword than 
that of the Huffman code with probability $p_2+p_3$
and will produce a longer codeword with probability $p_1$.
Thus, the competitive advantage of $C$ over the Huffman code is 
$\Advantage = p_2+p_3 - p_1 = \frac{1}{3} - 3\epsilon$.

Denote the codeword lengths of the Huffman code by $l_i$.
The average length of the Huffman code is
\begin{align}
1\cdot\left(\frac{1}{3} + \epsilon\right) 
+ 2\cdot\left(\frac{1}{3}\right) 
+ 3\cdot\left(\frac{1}{3}-2\epsilon\right)
 + \displaystyle\sum_{k=4}^n p_i l_i
\end{align}
and the average length of $C$ is
\begin{align}
1\cdot\left(\frac{1}{3}\right) 
+ 2\cdot\left(\frac{1}{3}-2\epsilon\right) 
+ 3\cdot\left(\frac{1}{3}+\epsilon\right)
 + \displaystyle\sum_{k=4}^n p_i l_i
\end{align}
so their difference is
$-\frac{14}{3}\epsilon$.

In summary, 
the code $C$ achieves a competitive advantage over the Huffman code of
$\frac{1}{3} - 3\epsilon$
and has an average length at most
$\frac{14}{3}\epsilon$ greater than that of the Huffman code.
Taking $\epsilon$ arbitrarily small makes the 
competitive advantage approach $\frac{1}{3}$
and the average length difference approach zero.
\end{proof}


\clearpage

\section{Bound on competitive advantage over Shannon-Fano codes}
\label{sec:Shannon-Fano}

On one hand,
Shannon-Fano codes are efficient,
since they suffice in proving Shannon's source coding theorem
that says the average length of optimal block codes 
arbitrarily approaches from above the entropy of a source,
as the block size grows.
The proof uses the fact that the average length of 
a Shannon-Fano code is always 
within one bit of the source entropy,
and so the 
average length per symbol of a Shannon-Fano code for a source block of size $n$
is within $\frac{1}{n}$ bit of the source entropy.

One the other hand, 
Huffman codes are strictly better than Shannon-Fano codes 
in an average length sense for non-dyadic sources,
and perform equally well for for dyadic sources.
Similarly, in a competitive sense,
Theorem~\ref{thm:Huffman-beats-SF}
showed that 
Huffman codes strictly competitively dominate 
Shannon-Fano codes 
if and only if the source is not dyadic.

The competitive advantage of one code over 
a Shannon-Fano code (or, actually, any other code)
is trivially upper bounded by one,
and the average length of a code can be at most one bit less than that of a Shannon-Fano code.
The following theorem shows that 
there exist increasingly large sources 
with prefix codes that can approach both of these extremes
over Shannon-Fano codes simultaneously.

\begin{theorem}
For every positive integer $n$, 
there exists a source of size $n$ and a prefix code
that has a competitive advantage of at least 
$1 - 2^{-n+2}$ over a Shannon-Fano code for the source,
and the code's average length is 
at least 
$1 - 2^{-n+2}$ less than the average length of the Shannon-Fano code.
\label{thm:lb-one-Shannon-Fano}
\end{theorem}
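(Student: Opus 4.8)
The plan is to construct, for each $n$, an explicit source whose Shannon-Fano code is wasteful in a controlled way, and a competing prefix code that wins on almost all of the probability mass while simultaneously having a smaller average length. The natural way to force a Shannon-Fano codeword to be long is to make $P(y)$ slightly less than a power of $1/2$, so that $\lceil \log_2 \tfrac{1}{P(y)} \rceil$ jumps up by one. I would take $n-1$ symbols each with probability just under $1/2^k$ for $k = 1, \dots, n-1$ (say $p_k = 2^{-k} - \eta_k$ for tiny $\eta_k > 0$ chosen so the probabilities sum to $1$ with the help of one extra small symbol, or more simply scale a geometric-type distribution), so that the Shannon-Fano length of $p_k$ is $k+1$ rather than $k$. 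Then the competing code $C$ assigns these symbols the lengths $1, 2, \dots, n-1$ (plus one more length-$(n-1)$ codeword for the last symbol), which is a valid prefix code by the Kraft inequality (its Kraft sum is $\le 1$), and $C$ beats Shannon-Fano on every one of these symbols by exactly one bit.

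The key steps, in order, are: (1) write down the probabilities precisely — I would use something like $p_k = c \cdot 2^{-k}$ for $k=1,\dots,n-1$ with a small residual mass, tuned so that each $P(y)$ falls in the open interval $(2^{-k-1}, 2^{-k})$, forcing $\lceil \log_2 \tfrac1{P(y)}\rceil = k+1$; (2) verify that the Shannon-Fano lengths are therefore $2, 3, \dots, n$ on these symbols; (3) verify via a Kraft-sum computation that the length assignment $1, 2, \dots, n-1, n-1$ satisfies $\sum 2^{-l_i} \le 1$, hence (Lemma~\ref{lem:Kraft-Inequality}) a prefix code $C$ with these lengths exists; (4) compute the competitive advantage $\Advantage = P(W) - P(L)$: every symbol among the first $n-1$ is a win for $C$ (shorter by one bit), and the only possible loss is the tiny residual symbol, so $P(W) \ge 1 - 2^{-n+2}$ and $P(L) \le 2^{-n+2}$, giving $\Advantage \ge 1 - 2^{-n+2}$ after adjusting constants; (5) compute the average-length difference, which is $\sum_y (l_{\text{SF}}(y) - l_C(y)) P(y)$; since each of the first $n-1$ symbols contributes exactly $P(y)$ and those masses total at least $1 - 2^{-n+2}$, the difference is at least $1 - 2^{-n+2}$.

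I expect the main obstacle to be the bookkeeping in step (1): choosing the probabilities so that simultaneously (a) they sum to exactly $1$, (b) each lands strictly inside $(2^{-k-1}, 2^{-k})$ so the Shannon-Fano ceiling behaves as claimed, and (c) the resulting explicit bound on $\Advantage$ and on the average-length gap is exactly $1 - 2^{-n+2}$ rather than merely "close to $1$." A clean choice is to put the leftover mass on a single extra symbol of very small probability and absorb it into the $k=n-1$ slot, or to use $p_k = \tfrac{2^{-k}}{1 - 2^{-n+1}}\,(1-\epsilon)$-type scaling and then push $\epsilon \to 0$; the geometric sum $\sum_{k=1}^{n-1} 2^{-k} = 1 - 2^{-n+1}$ is what produces the $2^{-n+2}$-order error terms, so the target bound should fall out naturally once the normalization is pinned down. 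Everything else — the Kraft-sum verification and the two final arithmetic comparisons — is routine.
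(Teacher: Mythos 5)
Your construction is essentially identical to the paper's: the paper takes $p_k = 2^{-k}-\epsilon$ for $k\le n-1$ with the residual mass $p_n = 2^{-n+1}+(n-1)\epsilon$ on one last symbol, observes the Shannon-Fano lengths are $k+1$ on the first $n-1$ symbols and $n-1$ on the last, and competes with the unary-style code of lengths $1,\dots,n-1,n-1$, exactly as you propose. The only detail worth noting is that in this construction the residual symbol is a tie (both codes give it length $n-1$), so $P(L)=0$ and the advantage equals the average-length gap $1-p_n > 1-2^{-n+2}$ with no constant-adjusting needed; your outline is correct.
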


\begin{proof}
Let $\epsilon \in \left( 0, 4^{-n} \right)$ and
let $X$ be a source of size $n$ whose probabilities are
\begin{align}
p_k &= 
 \begin{cases}
  2^{-k}    -      \epsilon & \text{if}\ 1 \le k \le n-1\\
  2^{-n+1} + (n-1)\epsilon & \text{if}\ k=n .
 \end{cases}
\end{align}
Since
$\ceil*{ \log_2 \frac{1}{p}} = m$
if and only if
$2^{-m} \le  p < 2^{-m+1}$,
a Shannon-Fano code for this distribution has codeword lengths
\begin{align}
l_k &= \left\lceil \log_2 \frac{1}{p_k} \right\rceil =
 \begin{cases}
   k+1 & \text{if}\ 1 \le k \le n-1\\
   n-1  & \text{if}\ k=n.
 \end{cases}
\end{align}
Note that $l_n$ was determined from the fact that
for all $n\ge 1$,
\begin{align}
(n-1)4^{-n} &< 2^{-n+1}.
\label{eq:950}
\end{align}

Let $C$ be a prefix code that assigns
the word $1^{k-1}0$
to the outcomes that have probability $p_k$ when $k<n$,
and assigns the word $1^{n-1}$ to the outcome with probability $p_n$.
This prefix code 
produces a shorter codeword than a Shannon-Fano code
whenever $1 \le k \le n-1$, and ties when $k=n$,
so its competitive advantage over a Shannon-Fano code is
lower bounded as
\begin{align}
\Advantage = 
1-p_n &= 1 - \frac{1}{2^{n-1}} - (n-1)\epsilon 
\ge 1 - \frac{1}{2^{n-1}} - \frac{n-1}{4^n} 
> 1 - \frac{1}{2^{n-2}}.\label{eq:123}
\end{align}
where \eqref{eq:123} follows from \eqref{eq:950}.

The difference between the average lengths of 
the Shannon-Fano code and the code $C$ is
\begin{align}
(n-1)p_n + \sum_{k=1}^{n-1} (k+1)p_k 
- (n-1)p_n - \sum_{k=1}^{n-1} k p_k 
&= \sum_{k=1}^{n-1} p_k
= 1 - p_n, \label{eq:124}
\end{align}
the same quantity as the competitive advantage previously computed
in \eqref{eq:123}.

\end{proof}

In the preceding proof,
the average length of the code $C$
is at most $2^{-n+2}$ more than the source entropy, since
\begin{align}
E[ l_C(X)) ] 
&< E[ l_{C_{SF}}(X)) ] - 1 + \frac{1}{2^{n-2}}\label{eq:120}\\
&< H(X) + 1 - 1 + \frac{1}{2^{n-2}}\label{eq:121}\\
&= H(X)+ \frac{1}{2^{n-2}}. 
\end{align}
where
\eqref{eq:120} follows from \eqref{eq:123} and \eqref{eq:124};
and
\eqref{eq:121} follows from Shannon's source coding theorem.
We also note that the term $2^{-n+2}$ that occurs in the bounds
of the theorem can be sharpened to be arbitrarily close to 
$2^{-n+1}$ but we chose to keep the proof simple instead.

\clearpage

\section{Small codes}
\label{sec:small-codes}

In this section,
we analyze which sources of size at most $4$
have competitively optimal Huffman codes.

\begin{theorem} 
Huffman codes are competitively optimal
for all sources of size at most $3$.
\label{thm:competitively-optimal-Huffman:n=3}
\end{theorem}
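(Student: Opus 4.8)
The plan is to check the cases $n=1,2,3$ directly, using the characterization results already available: by Corollary~\ref{cor:Yamamoto-Itoh} any competitor worth considering is length equivalent to a Huffman code, but more usefully, a Huffman code $H$ fails to be competitively optimal exactly when some prefix code $C$ has positive competitive advantage over it, and by Lemma~\ref{lem:w_le_l} such a $C$ must have win set $W$ and loss set $L$ with $K(W)<K(L)$ and $P(W)>P(L)$. So it suffices to show that for every source of size at most $3$, every pair of disjoint subsets $U,V$ of the alphabet with $K(U)<K(V)$ (Huffman-Kraft sums) satisfies $P(U)\le P(V)$; then Lemma~\ref{lem:KraftSumProbDisagree}'s hypothesis can never be met, and combined with Lemma~\ref{lem:w_le_l} no competitor can beat the Huffman code.

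First I would dispose of $n=1$ and $n=2$ trivially: for $n=1$ the only code is a single empty codeword (or the Huffman tree is a single node), and for $n=2$ the Huffman code assigns lengths $1,1$, which is the unique complete monotone length assignment, so no code can have a win without a compensating loss — in fact $W=L=\emptyset$ for every competitor up to length equivalence. For $n=3$, the Huffman code always has codeword lengths $1,2,2$ (assigned so that the largest probability $p_1$ gets length $1$), and the only nonempty Huffman-Kraft sums of proper subsets are $\tfrac12$ (the singleton of $p_1$) and $\tfrac14$ (each of the other two singletons) and $\tfrac12$ again (the pair of the two length-$2$ symbols). Then I would enumerate the disjoint pairs $(U,V)$ with $K(U)<K(V)$: the only possibility is $K(U)=\tfrac14$ and $K(V)=\tfrac12$, so $U$ is one of the two smaller-probability singletons and $V$ is either the largest singleton or the pair of the two smaller symbols; in the first case $P(U)\le p_1 = P(V)$ by the Huffman ordering, and in the second case $U\subseteq V$ so $P(U)\le P(V)$. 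Hence $K(U)<K(V)$ forces $P(U)\le P(V)$, so by Lemma~\ref{lem:w_le_l} and Lemma~\ref{lem:KraftSumProbDisagree} no prefix code has positive competitive advantage over the Huffman code.

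An alternative, perhaps cleaner, route for $n=3$ is to invoke Lemma~\ref{lem:u_v_onethird} or the strong-monotonicity lemmas directly: since Huffman codes are strongly monotone (Lemma~\ref{lem:strongly-monotone}), and for $n=3$ the only competing Kraft-sum comparison is $2^{-2}<2^{-1}$, strong monotonicity immediately yields $P(U)\le P(V)$ whenever $K(U)=2^{-2}$ and $K(V)=2^{-1}$. I would probably present the argument this way to keep it short. I do not expect any real obstacle here — the only mild care needed is to handle the possibility of ties in the source probabilities (so there may be more than one Huffman code, e.g.\ when $p_1=p_2=p_3=\tfrac13$, where the lengths $2,2,2$ do not satisfy the Kraft inequality with equality unless we use $1,2,2$), and to make sure the enumeration of subset pairs is genuinely exhaustive; but with only three symbols the case analysis is completely finite and small, so the main ``difficulty'' is merely being careful and complete rather than clever.
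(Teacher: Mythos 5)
Your proposal is correct in substance but takes a genuinely different route from the paper. The paper's proof is completely elementary: it fixes the Huffman lengths $1,2,2$ with the length-$1$ word on the largest probability, and then argues directly about which symbols can lie in $W$ (symbol $1$ never can; if symbol $2$ or $3$ is a win it must have received length $1$, forcing symbol $1$ into $L$, whence $\Advantage \le P(2)-P(1)\le 0$). You instead route the argument through the Kraft-sum machinery of Section~\ref{sec:One-third-bounds}: a code with positive advantage must have disjoint $W,L$ with $K(W)<K(L)$ (Lemma~\ref{lem:w_le_l}) and $P(W)>P(L)$, so it suffices to rule out any disjoint pair with $K(U)<K(V)$ and $P(U)>P(V)$. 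That logic is sound and there is no circularity, since Lemma~\ref{lem:w_le_l} precedes this theorem in the paper. Your approach is heavier but arguably more systematic and would scale better to slightly larger alphabets; the paper's is shorter and self-contained.

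Two small repairs are needed in your write-up. First, your enumeration of Huffman-Kraft sums omits the pairs $\{1,2\}$ and $\{1,3\}$, each with $K=\tfrac34$, so the disjoint pairs $U=\{2\},V=\{1,3\}$ and $U=\{3\},V=\{1,2\}$ (with $\tfrac14<\tfrac34$) are missed; they are harmless because $V$ contains the length-$1$ symbol and hence $P(V)\ge P(1)\ge P(U)$ by monotonicity, but as written your claim that the only case is $K(U)=\tfrac14$, $K(V)=\tfrac12$ is not exhaustive, and your second listed case ($U\subseteq V$) is actually vacuous under the disjointness you need. For the same reason the ``alternative route'' via strong monotonicity needs a patch when $K(V)=\tfrac34$ is not a power of $2$ (pass to the sub-singleton $\{1\}\subseteq V$, or use Lemma~\ref{lem:a_2b_inequality} with $2K(U)=\tfrac12\le K(V)$). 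Second, invoking Lemma~\ref{lem:u_v_onethird} cannot work here: it only bounds the advantage by $\tfrac13$, not by $0$, so that suggestion should be dropped.
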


\begin{proof}
If the source is of size $1$ or $2$, the result is trivial,
so suppose the size is $3$.
Denote the source symbols by $1, 2, 3$ such that
$P(1) \ge P(2) \ge P(3) > 0$.
The word lengths of a Huffman code $H$ are $l_H(1)=1$ and 
$l_H(2)=l_H(3)=2$.
Let $C$ denote any other prefix code,
and use the notation $W$ and $L$, as in \eqref{eq:7262b}--\eqref{eq:7262}.
It is not possible for $1 \in W$,
since $l_C(1) \ge 1 = l_H(1)$.
If $2 \in W$, then $l_C(2)=1$ 
and therefore $l_C(1), l_C(3) \ge 2$, 
so $1 \in L$ and $3 \not\in W$,
which implies
the competitive advantage of $C$ over the Huffman code is 
$\Advantage = P(W)-P(L) = P(2) - P(L) \le P(2) - P(1) \le 0$.
Alternatively,
if $3 \in W$,
then we similarly conclude
$\Advantage \le 0$.
Finally, if $2,3 \not\in W$, then $P(W)=0$, so $\Advantage \le 0$.
\end{proof}

The following theorem determines for almost all sources of size $4$,
which ones have competitively optimal Huffman codes and which do not.
Since the $4$ probabilities $p_1, p_2, p_3, p_4$ in such a distribution
sum to $1$,
a characterization of which distributions lead to competitive optimality can be
described in terms of conditions on the $3$ quantities $p_1, p_2, p_3$.
Based on the Huffman construction process,
linear inequalities are obtained in these $3$ quantities, 
which in turn correspond to half-spaces in $\R^3$.
The intersection of these half-spaces is a convex polyhedron,
which turns out to be the union of two tetrahedra sharing a
common triangular face, that is, a hexahedron.

\begin{theorem}
Let $Q$ be the hexahedron with vertices 
$(\frac{1}{2}, \frac{1}{2}, 0)$,
$(\frac{2}{5}, \frac{1}{5}, \frac{1}{5})$,
$(\frac{1}{3}, \frac{1}{3}, \frac{1}{3})$,
$(\frac{1}{3}, \frac{1}{3}, \frac{1}{6})$,
$(\frac{1}{2}, \frac{1}{4}, \frac{1}{4})$.
For every source of size $4$
with probabilities 
$p_1 \ge p_2 \ge p_3 \ge p_4 > 0$,
a Huffman code is competitively optimal if the triple
$(p_1, p_2, p_3)$ 
lies in the exterior of $Q$,
and is not competitively optimal if the triple 
lies in the interior of $Q$.
\label{thm:competitively-optimal-Huffman:n=4}
\end{theorem}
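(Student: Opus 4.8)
The plan is to reduce the problem to a finite check over opponent codes for the two possible Huffman tree shapes on four symbols, and then to identify the resulting region in probability space with the hexahedron $Q$.

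For a size-$4$ source the Huffman tree has codeword-length multiset either $\{2,2,2,2\}$ (call this the \emph{balanced} shape) or $\{1,2,3,3\}$ (the \emph{skewed} shape), and a short argument on the greedy merging order shows the skewed shape occurs precisely when $p_1\ge p_3+p_4$, i.e.\ when $2p_1+p_2\ge 1$, with both shapes arising when $2p_1+p_2=1$. Moreover, any prefix code can be turned into a complete prefix code without increasing any codeword length (repeatedly shorten a longest codeword, invoking Lemma~\ref{lem:Kraft-Inequality}), and replacing an opponent $C$ by such a $C'$ only enlarges the win set and shrinks the loss set against a fixed Huffman code; hence it suffices to test complete opponents, whose length multiset is again $\{2,2,2,2\}$ or a relabelling of $\{1,2,3,3\}$.

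In the balanced case, using $p_1\ge p_2\ge p_3\ge p_4$ one checks directly that every complete opponent has competitive advantage at most $p_1-p_3-p_4=2p_1+p_2-1\le 0$ over the balanced Huffman code, so in the balanced case a Huffman code is always competitively optimal (this also settles the boundary $2p_1+p_2=1$, where the balanced code is a Huffman code). In the skewed case, I would enumerate the $\le 12$ relabellings of $(1,2,3,3)$ together with the all-$2$'s opponent, compared against the Huffman assignment $l_H=(1,2,3,3)$ (ordered by probability), and show the maximum competitive advantage over all opponents is $\max\{p_2+p_3-p_1,\,0\}$, attained by the opponent with lengths $(3,1,2,3)$ (codewords $110,0,10,111$). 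Thus when $2p_1+p_2>1$ (so the skewed shape is the only Huffman shape) a Huffman code is competitively optimal iff $p_1\ge p_2+p_3$, while if $2p_1+p_2>1$ and $p_1<p_2+p_3$ no Huffman code is competitively optimal, hence by Theorem~\ref{thm:Yamamoto-Itoh} no prefix code is.

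Combining the cases, a Huffman code is competitively optimal exactly when $2p_1+p_2\le 1$ or $p_1\ge p_2+p_3$, and fails to be exactly on $\{2p_1+p_2>1\}\cap\{p_1<p_2+p_3\}$. The last step is plane geometry: I would verify that $Q$ is the polytope cut out by the six half-spaces $2p_1+p_2\ge 1$, $p_1\le p_2+p_3$, $p_1\ge p_2$, $p_2\ge p_3$, $p_1+p_2+2p_3\ge 1$ (which is $p_3\ge p_4$) and $p_1+p_2+p_3\le 1$ (which is $p_4\ge 0$); indeed $Q$ is a triangular bipyramid with equator $(\tfrac12,\tfrac12,0),(\tfrac25,\tfrac15,\tfrac15),(\tfrac13,\tfrac13,\tfrac13)$ and apexes $(\tfrac13,\tfrac13,\tfrac16),(\tfrac12,\tfrac14,\tfrac14)$, and each of its six triangular faces lies on one of the six planes above, with the interior on the stated side of each. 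For an actual source the last four inequalities hold automatically, so the triple lies in the exterior of $Q$ iff $2p_1+p_2<1$ or $p_1>p_2+p_3$ — which by the above forces a competitively optimal Huffman code — and lies in the interior of $Q$ iff all six inequalities are strict, in particular $2p_1+p_2>1$ and $p_1<p_2+p_3$ — which forces that no Huffman code is competitively optimal. I expect the main work to be the two finite case analyses (especially isolating $(3,1,2,3)$ as the worst skewed opponent) and getting the orientations of the six half-spaces to match the competitive-optimality dichotomy, with care needed at the degenerate vertices where $p_3=0$ or $p_3=p_4$.
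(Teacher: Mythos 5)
Your proposal is correct and follows essentially the same route as the paper: reduce to complete opponents, split on whether the Huffman tree is balanced ($2,2,2,2$) or skewed ($1,2,3,3$) according to the sign of $p_1-(p_3+p_4)$, enumerate the finitely many opponent length assignments to find that the only potentially winning opponents are the permutations $(3,1,2,3)$ and $(3,1,3,2)$ with advantages $p_2+p_3-p_1$ and $p_2+p_4-p_1$, and then identify the non-optimality region with the six half-spaces bounding $Q$. The paper's proof carries out exactly the enumeration you outline (listing all nine relabellings with $l_C(1)\neq 1$ plus the cases $l_C(1)=1$), and arrives at the same inequalities (I1)--(I6) defining the triangular dipyramid.
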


\begin{proof}
Denote the source symbols by $1,2,3,4$ 
and their probabilities by $p_1, p_2, p_3, p_4$, respectively.
We will determine conditions on $p_1, \dots, p_4$
such that there exists a prefix code with a positive competitive
advantage over a Huffman code.
It suffices to consider complete prefix codes,
since any non-complete prefix code contains at least one codeword
that could be shortened without decreasing its competitive advantage.
The only possible codeword length distributions for such size-$4$
codes are $1,2,3,3$ and $2,2,2,2$.
In either case,
the Huffman algorithm merges the source symbols $3$ and $4$ 
to form a new symbol with probability
$p_3+p_4$.

Suppose $p_3+p_4 > p_1$.
Then the Huffman algorithm merges $1$ and $2$ 
and then the $(3, 4)$ symbol is merged with the $(1, 2)$ symbol 
to get a balanced tree
with codeword lengths $2,2,2,2$.
If a size-$4$ prefix code
achieves a positive competitive advantage over this Huffman code,
then it must have codeword lengths $1,2,3,3$,
for otherwise only ties would occur.
In this case,
the competitive advantage would be
the probability of the new code's length-$1$ word 
minus the sum of the probabilities of its two length-$3$ words,
which equals $p_1 - (p_3+p_4) < 0$,
so in fact the new code would be 
strictly competitively dominated by the Huffman code.
The competitive advantage would still not be positive even if
$p_3+p_4=p_1$ and the Huffman algorithm created codewords with lengths
$2,2,2,2$.

Alternatively, assume $p_3+p_4 \le p_1$
with the Huffman algorithm merging the $(3, 4)$ symbol with $2$,
and then merging the resulting $(2, (3, 4))$ symbol with $1$.
The resulting codeword lengths are $1,2,3,3$.
The competitive advantage of any depth-$2$ balanced tree
over the Huffman code would be
$p_3+p_4-p_1 \le 0$, 
so such codes are competitively dominated by the Huffman code.
Thus,
any code $C$ with a positive competitive advantage $\Advantage$
over the Huffman code
must have lengths $1,2,3,3$,
and hence $C$ just permutes the Huffman code's
assignment of codeword lengths to source symbols.

Suppose $l_C(1)=1$.
If $l_C(2)=2$, then $\Advantage=0$.
If $l_C(2)=3$ and $l_C(3)=2$, then $\Advantage = p_3-p_2 \le 0$.
If $l_C(2)=3$ and $l_C(4)=2$, then $\Advantage = p_4-p_2 \le 0$.

Alternatively, suppose $l_C(1)\ne 1$.
There are $9$ possible cases for
$(l_C(1), l_C(2), l_C(3), l_C(4))$:\\
(2,1,3,3): $\Advantage = p_2 - p_1 \le 0$ \\
(2,3,1,3): $\Advantage = p_3 - p_1 - p_2 \le 0$ \\ 
(2,3,3,1): $\Advantage = p_4 - p_1 - p_2 \le 0$ \\ 
(3,2,1,3): $\Advantage = p_3 - p_1 \le 0$ \\
(3,2,3,1): $\Advantage = p_4 - p_1 \le 0$ \\
(3,3,1,2): $\Advantage = p_3 + p_4 -  p_1 - p_2 \le 0$ \\
(3,3,2,1): $\Advantage = p_3 + p_4 -  p_1 - p_2 \le 0$ \\
(3,1,2,3): $\Advantage = p_2 + p_3 - p_1$ \\
(3,1,3,2): $\Advantage = p_2 + p_4 - p_1$. \\
So the only codes $C$ that can yield $\Advantage>0$ are the cases
$(3,1,2,3)$ and $(3,1,3,2)$.

Let us denote the following inequalities:
\begin{align}
&(I1):\ p_1 \ge p_2\\
&(I2):\ p_2 \ge p_3\\
&(I3):\ p_3 \ge p_4\\
&(I4):\ p_4 > 0\\
&(I5):\ p_3 + p_4 \le p_1\\  
&(I6):\ p_2 + p_3 > p_1\\
&(I7):\ p_2 + p_4 > p_1.
\end{align}

Inequalities (I1) -- (I6)
determine a set in $\R^3$ whose interior
is a hexahedron specified by the $5$ vertices
$(\frac{1}{2}, \frac{1}{2}, 0)$,
$(\frac{1}{3}, \frac{1}{3}, \frac{1}{3})$,
$(\frac{2}{5}, \frac{1}{5}, \frac{1}{5})$,
$(\frac{1}{3}, \frac{1}{3}, \frac{1}{6})$,
$(\frac{1}{2}, \frac{1}{4}, \frac{1}{4})$.
The first $3$ vertices satisfy $\sim$(I7) with equality,
the $4^{th}$ vertex satisfies (I7),
and $5^{th}$ vertex satisfies $\sim$(I7).
Therefore, 
the hexahedron is cut into two tetrahedra by (I7)
and is known as a triangular dipyramid.

The Huffman code is competitively optimal 
in the exterior of this hexahedron,
is not competitively optimal 
in the interior of this hexahedron,
and is sometimes competitively optimal on the boundary.
\end{proof}

\begin{corollary}
If a source of size $4$ is chosen uniformly at random from a flat Dirichlet
distribution, then the probability its Huffman code is competitively optimal
is $2/3$.
\label{cor:probability_n=4}
\end{corollary}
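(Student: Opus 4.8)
The plan is to reduce the statement to a ratio of three-dimensional Euclidean volumes and evaluate it using Theorem~\ref{thm:competitively-optimal-Huffman:n=4}. First I would note that whether a Huffman code is competitively optimal depends only on the multiset of source probabilities and not on the labeling of the symbols; since the flat Dirichlet distribution of size $4$ is permutation-symmetric, I may condition on the event $p_1 \ge p_2 \ge p_3 \ge p_4$, under which $(p_1,p_2,p_3,p_4)$ is uniform on the corresponding ordering chamber of the simplex, and hence $(p_1,p_2,p_3)$ is uniform on its image
\[
R = \{(p_1,p_2,p_3)\in\R^3 : p_1 \ge p_2 \ge p_3 \ge 1-p_1-p_2-p_3 > 0\}
\]
under the (constant-Jacobian) coordinate projection that forgets $p_4$. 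This region $R$ is exactly the set cut out by inequalities (I1)--(I4) in the proof of Theorem~\ref{thm:competitively-optimal-Huffman:n=4}, so by that theorem the Huffman code fails to be competitively optimal precisely when $(p_1,p_2,p_3)$ lies in the interior of the hexahedron $Q\subseteq R$, while it is competitively optimal when $(p_1,p_2,p_3)$ lies outside $\overline{Q}$; the boundary $\partial Q$ (as well as the boundary of $R$, where ties occur or $p_4=0$) has Lebesgue measure zero and therefore contributes nothing to the probability. Thus the probability that the Huffman code is competitively optimal equals $1 - \mathrm{Vol}(Q)/\mathrm{Vol}(R)$, and it suffices to show this ratio is $\tfrac13$.

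Next I would compute the two volumes. The $24$ ordering chambers partition the simplex into congruent pieces, and the coordinate projection (being linear with constant Jacobian on the simplex) sends them to pieces of equal volume that partition the corner simplex $\{p_1,p_2,p_3\ge 0,\ p_1+p_2+p_3\le 1\}$, whose volume is $\tfrac16$; hence $\mathrm{Vol}(R) = \tfrac1{24}\cdot\tfrac16 = \tfrac1{144}$. For $\mathrm{Vol}(Q)$ I would exploit the observation, already made in the proof of Theorem~\ref{thm:competitively-optimal-Huffman:n=4}, that $Q$ is a triangular dipyramid: the plane $2p_1 + p_3 = 1$ (the equality case of (I7)) contains the three vertices $(\tfrac12,\tfrac12,0)$, $(\tfrac25,\tfrac15,\tfrac15)$, $(\tfrac13,\tfrac13,\tfrac13)$, which form the equatorial triangle, and the remaining two vertices $(\tfrac13,\tfrac13,\tfrac16)$ and $(\tfrac12,\tfrac14,\tfrac14)$ lie on opposite sides of that plane. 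Decomposing $Q$ into the two tetrahedra obtained by coning the equatorial triangle to each of these apexes, and computing each tetrahedron's volume as $\tfrac16$ times the absolute value of the $3\times3$ determinant of the three edge vectors emanating from a common vertex, I obtain the two volumes $\tfrac1{1080}$ and $\tfrac1{720}$, so $\mathrm{Vol}(Q) = \tfrac1{1080}+\tfrac1{720} = \tfrac1{432}$.

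Putting these together, $\mathrm{Vol}(Q)/\mathrm{Vol}(R) = (1/432)/(1/144) = 144/432 = \tfrac13$, so the Huffman code of a flat-Dirichlet source of size $4$ fails to be competitively optimal with probability $\tfrac13$ and is competitively optimal with probability $1-\tfrac13 = \tfrac23$, as claimed. I expect the conceptual content here to be light; the points that most need care are that the pushforward of the flat Dirichlet under sorting is genuinely the \emph{uniform} law on $R$ (so that the answer is the naked volume ratio $\mathrm{Vol}(Q)/\mathrm{Vol}(R)$ rather than a weighted one), and the tetrahedral volume computations, where a sign slip or arithmetic error would give the wrong constant. A safe alternative to the determinant computation, which I would keep in reserve, is to integrate the cross-sectional area of $Q$ along the direction normal to the equatorial plane, or equivalently to use $\mathrm{Vol}(Q) = \tfrac13\,\mathrm{Area}(\text{equatorial triangle})\,(h_4+h_5)$ where $h_4,h_5$ are the perpendicular distances of the two apexes from that plane.
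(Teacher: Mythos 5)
Your proposal is correct and follows essentially the same route as the paper: both reduce the probability to the volume ratio of the hexahedron $Q$ (split into the same two tetrahedra, of volumes $\tfrac{1}{1080}$ and $\tfrac{1}{720}$) to the sorted-probability tetrahedron of volume $\tfrac{1}{144}$, giving $\tfrac{1}{3}$ and hence the answer $\tfrac{2}{3}$. Your more explicit justification that the sorted, projected Dirichlet law is uniform on $R$, and your symmetry computation of $\mathrm{Vol}(R)$ as $\tfrac{1}{24}\cdot\tfrac{1}{6}$, are harmless variants of what the paper does by direct determinant.
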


\begin{proof}
The hexahedron in
Theorem~\ref{thm:competitively-optimal-Huffman:n=4}
is a union of two tetrahedra,
whose volumes 
are computed using determinants as
(e.g., ~\cite{Newson-Polyhedron-Volume})
\begin{align}
\frac{1}{6} \cdot
 \begin{vmatrix}
  1/2 & 1/2 &  0  & 1\\
  2/5 & 1/5 & 1/5 & 1\\
  1/3 & 1/3 & 1/3 & 1\\
  1/3 & 1/3 & 1/6 & 1
 \end{vmatrix}
&= \frac{1}{6}\cdot \frac{1}{180}
\hspace*{2cm}
\frac{1}{6} \cdot
 \begin{vmatrix}
  1/2 & 1/2 &  0  & 1\\
  2/5 & 1/5 & 1/5 & 1\\
  1/3 & 1/3 & 1/3 & 1\\
  2/5 & 1/5 & 1/5 & 1
 \end{vmatrix}
= \frac{1}{6}\cdot \frac{1}{120} \ \ .
\end{align}
The set of all $p_1, p_2, p_3, p_4$
satisfying $p_1 \ge p_2 \ge p_3 \ge p_4 > 0$
and $p_1 + p_2 + p_3 + p_4 = 1$
is determined by the $4$ inequalities
\begin{align}
\ p_1 &\ge p_2\\              
\ p_2 &\ge p_3\\              
\ p_1 + p_2 + 2p_3 &\ge 1\\   
\ p_1 + p_2 + p_3 &< 1.
\end{align}
These form a tetrahedron with vertices
$(1,0,0)$,
$(\frac{1}{2}, \frac{1}{2}, 0)$,
$(\frac{1}{3}, \frac{1}{3}, \frac{1}{3})$,  
$(\frac{1}{4}, \frac{1}{4}, \frac{1}{4})$
whose volume is 
\begin{align}
\frac{1}{6} \cdot
 \begin{vmatrix}
   1  &  0  &  0  & 1\\
  1/2 & 1/2 &  0  & 1\\
  1/3 & 1/3 & 1/3 & 1\\
  1/4 & 1/4 & 1/4 & 1
 \end{vmatrix}
&= \frac{1}{6}\cdot \frac{1}{24}\ \ .
\end{align}
Thus the probability of randomly selecting a source from a flat Dirichlet distribution 
whose Huffman code is not competitively optimal is
$(\frac{1}{180} + \frac{1}{120}) / \frac{1}{24} = \frac{1}{3}$.
So the probability the Huffman code is competitively optimal is $\frac{2}{3}$.
\end{proof}

\clearpage

\section{Experimental evidence}
\label{sec:experimental-evidence}

We demonstrate numerically that
if a source is chosen at random,
then as the source size grows, the probability becomes nearly zero that 
a Huffman code will be competitively optimal.
Experimentally, 
this probability is less than $1\%$ when the source size is at least $20$.
That is, with near certainty 
each Huffman code will be competitively dominated by
some other prefix code as the source size increases.
This indicates that
for most sources,
from a competitive advantage point of view,
there really is no ``best'' code to use.
Each code can be strictly competitively dominated 
by another in never-ending cycles of code sequences.

One way to generate source probabilities $p_1, \dots, p_n$ 
chosen according to a flat Dirichlet distribution
is to choose $n$ points
independently and uniformly on a circle of circumference $1$
and then use the $n$ distances 
between neighboring points as the desired probabilities.
Such a procedure treats all sources equally
and indeed yields interesting results.

For any source of size $n$,
exhaustively checking whether each complete prefix code
competitively dominates the Huffman code
appears to become a computationally infeasible task as $n$ grows,
since the number of such prefix codes grows quickly.
However, 
Lemma~\ref{lem:LeafProbLessThanDifference}
gives a sufficient condition 
for a Huffman tree to not be competitively optimal,
which allows us to obtain a lower bound on the probability that a Huffman code
is not competitively optimal for a given source.
Thus we can randomly select many sources and determine if
such a condition holds, 
in which case we can then declare the Huffman code
not competitively optimal.
This suboptimal condition turns out to be overwhelmingly sufficient
to observe that 
the probability is practically zero that the
Huffman code of a randomly chosen source is competitively optimal
even for relatively small source sizes.

For each source size $n \in \{3, \dots, 34\}$,
we generated $10^6$ sources 
from a flat Dirichlet distribution,
i.e., chosen uniformly at random on the 
$(n-1)$-dimensional simplex embedded in $\R^n$.
For each such source we determined 
whether the sufficient condition of 
Lemma~\ref{lem:LeafProbLessThanDifference}
was satisfied.
Fig~\ref{fig:1} plots for each $n$
the fraction of the randomly generated sources that satisfied the 
sufficient condition.
That is, the true fraction of the randomly generated sources for which
a Huffman code was not competitively optimal lies above the plotted curve.
The observed lower bound curve quickly tends toward $1$,
so the true fraction of 
the randomly generated sources with
competitively non-optimal Huffman codes
tends toward $1$ as well.

For the case of $n=3$,
Theorem~\ref{thm:competitively-optimal-Huffman:n=3} 
guarantees that $100\%$ of the randomly 
chosen sources will have competitively optimal Huffman codes,
which is exactly what was observed experimentally.

For the case $n=4$,
Corollary~\ref{cor:probability_n=4} 
gives a $2/3$ probability of a randomly chosen
source to have a competitively optimal Huffman code.
The experimentally observed upper bound 
was $66.6992\%$.

For $n\ge 5$,
one can see that the probability a randomly chosen source has
a competitively optimal Huffman code
rapidly decreases towards $0$,
and in fact no such competitively optimal Huffman codes were
observed out of the million chosen for each
$n \ge 31$.

\Hide{
\begin{table}
\begin{center}
\begin{tabular}{|r|r|}
\hline
Source & Upper bound on the number of  \\
size   & competitively optimal Huffman code \\ 
       & out of $10^6$ chosen randomly sources\\ \hline
 3 & 1,000,000 \\ \hline  
 4 &  666,992 \\ \hline  
 5 &  532,397 \\ \hline  
 6 &  412,334 \\ \hline  
 7 &  318,936 \\ \hline  
 8 &  236,188 \\ \hline  
 9 &  164,586 \\ \hline  
10 &  110,338 \\ \hline  
11 &  70,707  \\ \hline  
12 &  44,453  \\ \hline  
13 &  27,746  \\ \hline  
14 &  16,896  \\ \hline  
15 &  10,174  \\ \hline  
16 &  5,850   \\ \hline  
17 &  3,568   \\ \hline  
18 &  2,106   \\ \hline  
19 &  1,177   \\ \hline  
20 &  643    \\ \hline  
21 &  360    \\ \hline  
22 &  210    \\ \hline  
23 &  121    \\ \hline  
24 &  81     \\ \hline  
25 &  40     \\ \hline  
26 &  17     \\ \hline  
27 &  16     \\ \hline  
28 &  6      \\ \hline  
29 &  2      \\ \hline  
30 &  3      \\ \hline  
31 &  0      \\ \hline  
32 &  0      \\ \hline  
33 &  0      \\ \hline  
34 &  0      \\ \hline  
\end{tabular}
\end{center}

\caption{
For each source size, $10^6$ sources were 
chosen uniformly at random from a simplex
(i.e., using a Dirichlet distribution).
The table indicates an upper bound 
(using Lemma~\ref{lem:LeafProbLessThanDifference})
on how many of these source
have Huffman codes which are competitively optimal.
}
\label{tab:1}
\end{table}
}

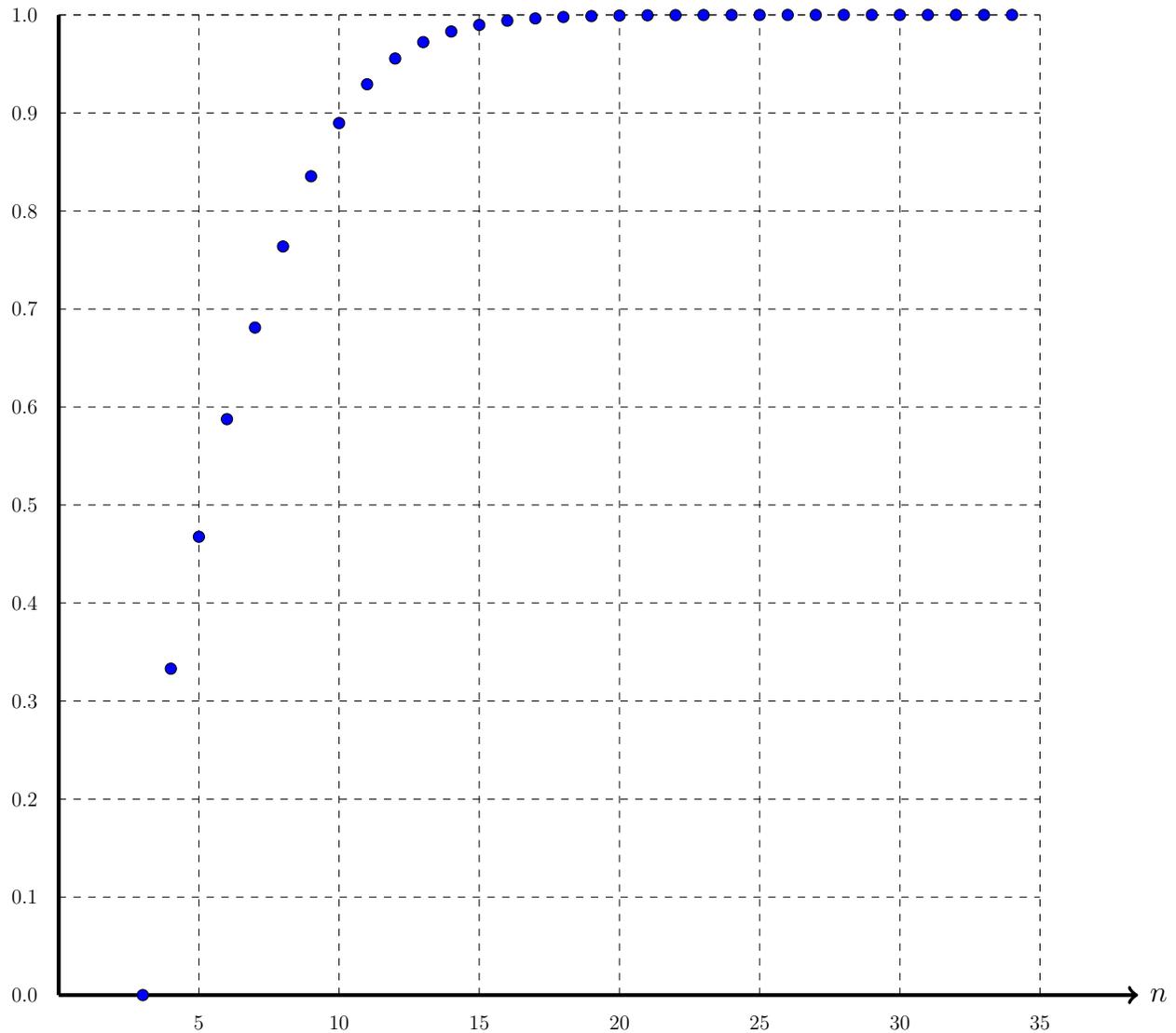
\begin{figure}
\newcommand{\Inflate}{35}
\begin{center}
\begin{tikzpicture}[scale=0.4]
\draw[->, ultra thick]  (0,0) -- ( {\Inflate * 1.1},0) node[right] {$n$};
\draw[-, ultra thick]  (0,0) -- ( 0,\Inflate) node[above] {};
\draw[-, ultra thin, dashed]  (0,1 * \Inflate)--( 35, 1 * \Inflate);
\foreach \i in {1,...,7} {
  \tikzmath{
    int \k;
    \k = 5 * \i;
  }
  \draw[-, ultra thin, dashed]  (\k,0)--(\k, \Inflate);
  \node[black, below=2mm, scale=0.7] at (\k, 0) {$\k$}; }

\foreach \i in {0,...,10} {
  \tikzmath{\j = \i / 10;}

\draw[-, ultra thin, dashed]  (0,\j*\Inflate)--(35,\j*\Inflate);

\pgfkeys{/pgf/number format/.cd, fixed, zerofill, precision=1}
  \node[black, left=2mm, scale=0.7] at (0, {\j*\Inflate}) 
       {$\pgfmathprintnumber{\j}$}; 
}
\draw [fill=blue] ( 3 , {  0.0      * \Inflate }) circle (2.0mm); 
\draw [fill=blue] ( 4 , {  0.333008 * \Inflate }) circle (2.0mm); 
\draw [fill=blue] ( 5 , {  0.467603 * \Inflate }) circle (2.0mm); 
\draw [fill=blue] ( 6 , {  0.587666 * \Inflate }) circle (2.0mm); 
\draw [fill=blue] ( 7 , {  0.681064 * \Inflate }) circle (2.0mm); 
\draw [fill=blue] ( 8 , {  0.763812 * \Inflate }) circle (2.0mm); 
\draw [fill=blue] ( 9 , {  0.835414 * \Inflate }) circle (2.0mm); 
\draw [fill=blue] (10 , {  0.889662 * \Inflate }) circle (2.0mm); 
\draw [fill=blue] (11 , {  0.929293 * \Inflate }) circle (2.0mm); 
\draw [fill=blue] (12 , {  0.955547 * \Inflate }) circle (2.0mm); 
\draw [fill=blue] (13 , {  0.972254 * \Inflate }) circle (2.0mm); 
\draw [fill=blue] (14 , {  0.983104 * \Inflate }) circle (2.0mm); 
\draw [fill=blue] (15 , {  0.989826 * \Inflate }) circle (2.0mm); 
\draw [fill=blue] (16 , {  0.994150 * \Inflate }) circle (2.0mm); 
\draw [fill=blue] (17 , {  0.996432 * \Inflate }) circle (2.0mm); 
\draw [fill=blue] (18 , {  0.997894 * \Inflate }) circle (2.0mm); 
\draw [fill=blue] (19 , {  0.998823 * \Inflate }) circle (2.0mm); 
\draw [fill=blue] (20 , {  0.999357 * \Inflate }) circle (2.0mm); 
\draw [fill=blue] (21 , {  0.999640 * \Inflate }) circle (2.0mm); 
\draw [fill=blue] (22 , {  0.999790 * \Inflate }) circle (2.0mm); 
\draw [fill=blue] (23 , {  0.999879 * \Inflate }) circle (2.0mm); 
\draw [fill=blue] (24 , {  0.999919 * \Inflate }) circle (2.0mm); 
\draw [fill=blue] (25 , {  0.999960 * \Inflate }) circle (2.0mm); 
\draw [fill=blue] (26 , {  0.999983 * \Inflate }) circle (2.0mm); 
\draw [fill=blue] (27 , {  0.999984 * \Inflate }) circle (2.0mm); 
\draw [fill=blue] (28 , {  0.999994 * \Inflate }) circle (2.0mm); 
\draw [fill=blue] (29 , {  0.999998 * \Inflate }) circle (2.0mm); 
\draw [fill=blue] (30 , {  0.999997 * \Inflate }) circle (2.0mm); 
\draw [fill=blue] (31 , {  1.000000 * \Inflate }) circle (2.0mm); 
\draw [fill=blue] (32 , {  1.000000 * \Inflate }) circle (2.0mm); 
\draw [fill=blue] (33 , {  1.000000 * \Inflate }) circle (2.0mm); 
\draw [fill=blue] (34 , {  1.000000 * \Inflate }) circle (2.0mm); 
\end{tikzpicture}
\end{center}
\caption{Lower bound on the fraction of $10^6$ randomly chosen sources
whose Huffman code is not competitively optimal,
as a function of the source size $n$.
For $n=15$ Huffman codewords,
about $99\%$ of randomly selected sources did not have
competitively optimal Huffman codes.
For $n \ge 31$, all $10^6$ randomly chosen sources had Huffman codes
that were not competitively optimal.
}
\label{fig:1}
\end{figure}

\clearpage

\section{Discussion}
\label{sec:discussion}

It is well known that for any uniquely decodable code (UD),
there exists a prefix code that is length equivalent.
Thus, for any two UD codes there exist two prefix codes such that the
competitive advantage of one UD code over the other is the same
as the competitive advantage of the corresponding prefix code over the other.
As a consequence, 
while all of the results in this paper have been stated for prefix codes,
they equivalently hold for UD codes as well.

The results presented in this paper indicate that for any given source,
the largest possible competitive advantage of a prefix code 
over a Huffman code lies in the interval $[0,1/3)$,
and that for at least some sources the upper limit can be arbitrarily approached.
One interesting question for future study might be to determine the typical 
distribution of such maximal competitive advantages over Huffman codes for
a source randomly chosen from, say, a flat Dirichlet distribution.

It would be interesting to determine if some of the results presented
in this paper can be generalized to $D$-ary alphabets when $D>2$.
It seems plausible that similar statements hold for
Theorems~\ref{thm:Huffman-not-competitively-optimal-in-limit},
\ref{thm:Huffman-beats-SF}, and
\ref{thm:lb-one-Shannon-Fano},
and also for
Theorems~\ref{thm:ub-one-third-Huffman} and
\ref{thm:lb-one-third-Huffman}
with the $1/3$ bound replaced by $\frac{1}{2D-1}$.

Some of our results are asymptotic as the source size $n$ grows.
It would indeed be interesting to understand the convergence rates in such situations.
For example,
how fast does the probability converge to zero that a randomly chosen
source has a competitively optimal Huffman code?

\clearpage

\begin{appendices}

\section{Proofs of lemmas}
\label{sec:appendixA}

\ProofOfLemmaKraftExistPrefixCode
\ProofOfLemmaKraftSumProbDisagree
\ProofOfLemmaLeafProbLessThanDifference
\ProofOfLemmaKraftPartition
\ProofOfLemmaKraftCompletion
\ProofOfLemmaAtwoBInequality
\ProofOfLemmaLNotEmpty
\ProofOfLemmaWLeL
\ProofOfLemmaFracInc
\ProofOfLemmaALeTwoB
\ProofOfLemmaULTwoV
\ProofOfLemmaUVOneThird

\end{appendices}

\textbf{Acknowledgment}:
The authors thank UCSD undergraduate student Marco Bazzani
for some helpful discussions.

\clearpage

\end{document}